\Crefname{section}{Sec.}{Secs.}
\Crefname{proposition}{Prop.}{Props.}
\Crefname{theorem}{Thm.}{Thms.}
\Crefname{lemma}{Lem.}{Lema.}
\Crefname{definition}{Def.}{Defs.}
\Crefname{figure}{Fig.}{Figs.}
\Crefname{table}{Tab.}{Tabs.}
\Crefname{example}{Ex.}{Exps.}
\pgfplotsset{width=3.5cm, compat=1.8, grid style={dashed}}
\numberwithin{equation}{section}
\colorlet{christelColor}{Apricot!30!white}
\colorlet{tobiasColor}{Cerulean!30!white}
\colorlet{davidColor}{YellowGreen!30!white}
\colorlet{simonColor}{Yellow!30!white}
\colorlet{saschaColor}{Purple!30!white}
\newcommand{\Act}{\operatorname{Act}}
\newcommand{\M}{\mathcal{M}}
\newcommand{\Paths}{\operatorname{Paths}}
\newcommand{\Pathsfin}{\operatorname{Paths}_{\operatorname{fin}}}
\newcommand{\last}{\operatorname{last}}
\newcommand{\lang}{\mathcal{L}}
\newcommand{\prb}{\mathbf{Pr}}
\newcommand{\prism}{\textsc{Prism}}
\newcommand{\breakp}{\mathcal{BP}}
\newcommand{\gfm}{\mathcal{G}^{\operatorname{GFM}}}
\newcommand{\gfmbridge}{\delta_{\operatorname{bridge}}^{\operatorname{GFM}}}
\newcommand{\gfmtrans}{\delta^{\operatorname{GFM}}}
\newcommand{\ld}{\mathcal{G}^{\operatorname{LD}}}
\newcommand{\ldbridge}{\delta_{\operatorname{bridge}}^{\operatorname{LD}}}
\newcommand{\ldtrans}{\delta^{\operatorname{LD}}}
\newcommand{\finset}{{\operatorname{Fin}}}
\newcommand{\infset}{{\operatorname{Inf}}}
\newcommand{\finless}{\mathcal{F}}
\newcommand{\removeFin}{\operatorname{removeFin}}
\newcommand{\removeFinGBA}{\operatorname{removeFin}_{\operatorname{GBA}}}
\newcommand{\lifted}[2]{\uparrow_{#1}\!\!(#2)}
\newcommand{\GBAsum}{\oplus_{GBA}}
\newcommand{\bigGBAsum}[1]{{\bigoplus_{#1}}_{\raisebox{10pt}{\scriptsize  \!\!\!\!GBA}}}
\newcommand{\spot}{\textsc{Spot}}
\newcommand{\owl}{\textsc{Owl}}
\newcommand{\seminator}{\textsc{Seminator}}
\newcommand{\length}[1]{\vert #1 \vert}
\newcommand{\trueSymbol}{t\!t}		%
\newcommand{\falseSymbol}{f\!\!f}		%
\newcommand{\spin}{\textsc{Spin}}
\newcommand{\determin}{\operatorname{det}}
\newcommand{\splitTELA}{\operatorname{split}}
\newcommand{\splitTELAi}[2]{\operatorname{split}(#1)[#2]}
\newcommand{\mdplimit}{\operatorname{Limit}}
\newcommand{\approachOne}{remFin$\rightarrow$split$\alpha$}	%
\newcommand{\approachTwo}{split$\alpha$$\rightarrow$remFin}
\newcommand{\approachThree}{remFin$\rightarrow$rewrite$\alpha$}
\newcommand{\infsetColor}[2]{\infset (\color{#1} \bm{#2} \color{black})}
\renewcommand{\Pr}{\mathrm{Pr}}
\renewcommand{\S}{\mathfrak{S}}
\DeclareMathOperator{\IFF}{ iff }
\definecolor{darkgreen}{rgb}{0.0, 0.5, 0.0}
\definecolor{darkred}{rgb}{0.9, 0.0, 0.0}
\definecolor{darkblue}{rgb}{0.0, 0.0, 0.9}
\tikzset{
	state/.style={rounded rectangle,draw=black,inner sep=1.5mm,minimum width=9mm,minimum height=5mm},
	rstate/.style={rectangle,draw=black,inner sep=1ex,minimum width=9mm,minimum height=5mm},
	istate/.style={minimum width=5mm, minimum height=6mm},
	bullet/.style={circle,draw=black,fill=black,inner sep=0cm, minimum size=0.7mm},
	initial text={},
	every initial by arrow/.style={->,>=stealth'},
	ptran/.style={rounded corners, ->,>=stealth',auto},
	ntran/.style={rounded corners, -,auto}
}
\begin{document}

\newcommand{\orcid}[1]{\href{https://orcid.org/#1}{\textcolor[HTML]{A6CE39}{\aiOrcid}}}

\title{Determinization and Limit-determinization of Emerson-Lei automata}
\author{Tobias John\orcidID{0000-0001-5855-6632}\and Simon Jantsch\orcidID{0000-0003-1692-2408}\and \\ Christel Baier\orcidID{0000-0002-5321-9343}\and Sascha Klüppelholz\orcidID{0000-0003-1724-2586}}
\authorrunning{Tobias John, Simon Jantsch, Christel Baier and Sascha Kl\"uppelholz}
\titlerunning{}
\institute{Technische Universität Dresden\thanks{This work was funded by DFG grant 389792660 as part of \href{https://perspicuous-computing.science}{TRR~248}, the Cluster of Excellence EXC 2050/1 (CeTI, project ID 390696704, as part of Germany’s Excellence Strategy), DFG-projects BA-1679/11-1 and BA-1679/12-1, and the Research Training Group QuantLA (GRK 1763).}\\
\email{tobiasj@posteo.de \\ \{simon.jantsch, christel.baier, sascha.klueppelholz\}@tu-dresden.de}}

\maketitle

\begin{abstract}
  We study the problem of determinizing $\omega$-automata whose acceptance condition is defined on the transitions using Boolean formulas, also known as \emph{transition-based Emerson-Lei automata} (TELA).
  The standard approach to determinize TELA first constructs an equivalent \emph{generalized B\"uchi automaton} (GBA), which is later determinized.
  We introduce three new ways of translating TELA to GBA.
  Furthermore, we give a new determinization construction which determinizes several GBA separately and combines them using a product construction.
  An experimental evaluation shows that the product approach is competitive when compared with state-of-the-art determinization procedures.
  We also study limit-determinization of TELA and show that this can be done with a single-exponential blow-up, in contrast to the known double-exponential lower-bound for determinization.
  Finally, one version of the limit-determinization procedure yields \emph{good-for-MDP} automata which can be used for quantitative probabilistic model checking.
\end{abstract}

\section{Introduction}
Automata on infinite words, also called $\omega$-automata, play a fundamental role in the fields of verification and synthesis of reactive systems~\cite{PnueliR1989,Vardi1985a,CourcoubetisY1995,SafraV1989}.
They can be used both to represent properties of systems and the systems themselves.
For logical specification languages such as linear temporal logic (LTL), many verification systems, such as \spin{}~\cite{Ben-Ari2008} or \prism{}~\cite{KwiatkowskaNP2011}, use logic-to-automata translations internally to verify a given system against the specification.

A major research question in this area has been, and still is, the question of whether and how $\omega$-automata can be determinized efficiently~\cite{LodingP2019a,ScheweV2012,Safra1989,Redziejowski2012,MullerS1995}.
The first single-exponential and asymptotically optimal determinization for B\"uchi automata was presented in\cite{Safra1989}.
Deterministic automata are important from a practical point of view as classical automata-based solutions to reactive synthesis and probabilistic verification use deterministic automata~\cite{PnueliR1989,Vardi1985a}.

The high complexity of determinization and most logic-to-automata translations have raised the question of more succinct representations of $\omega$-automata.
Using \emph{generalized} acceptance conditions (e.g. generalized B\"uchi~\cite{Couvreur1999} or generalized Rabin\cite{BloemenDv2019,ChatterjeeGK2013}) and transition-based~\cite{GiannakopoulouL2002}, rather than state-based, conditions are common techniques in this direction.
An even more general approach has led to the HOA-format~\cite{BabiakBDKKMPS2015}, which represents the acceptance condition as a positive Boolean formula over standard B\"uchi ($\infset$) and co-B\"uchi ($\finset$) conditions, also called Emerson-Lei conditions~\cite{AllenEmersonL1987,SafraV1989}. 
Together with a vast body of work on heuristics and dedicated procedures this standardized format has led to practically usable and mature tools and libraries such as \spot{}~\cite{Duret-LutzLFMRX2016} and \owl{}~\cite{KretinskyMS2018} which support a wide range of operations on $\omega$-automata.
Special classes of nondeterministic automata with some of the desired properties of deterministic automata have also been studied.
The classes of good-for-MDP\cite{HahnPSSTW2020} and good-for-games\cite{KleinMBK2014} automata can be used for quantitative probabilistic model checking of Markov decision processes\cite{SickertEJK2016,HahnLSTZ2015}, while limit-deterministic B\"uchi automata can be used for qualitative model-checking~\cite{CourcoubetisY1995}.
Dedicated translations from LTL directly to deterministic and limit-deterministic automata have been considered in~\cite{EsparzaKS2018}.

This paper considers determinization and limit-determinization of TELA.
In contrast to limit-determinization, the theoretical complexity of determinization is well understood (a tight, double-exponential, bound was given in\cite{SafraV1989,Safra1989}).
However, it has not been studied yet from a practical point of view.

\noindent\textbf{Contribution.}
We propose three new translations from TELA to GBA (\Cref{sec:viagba}) and give an example in which they perform exponentially better than state-of-the-art implementations.
We introduce a new determinization procedure for TELA based on a product construction (\Cref{sec:determinize}). Our experiments show that it often outperforms the approaches based on determinizing a single GBA (\Cref{sec:evaluation}).
A simple adaptation of the product construction produces \emph{limit-deterministic} TELA of single-exponential size (in contrast to the double-exponential worst-case complexity of full determinization, \Cref{sec:limitdetviaprod}).
We show that deciding $\prb^{\max}_{\M}(\lang(\mathcal{A})) > 0$ is NP-complete for limit-deterministic TELA $\mathcal{A}$, and in P if the acceptance of $\mathcal{A}$ is fin-less (\Cref{prop:limitdetqual}).
Finally, we show how to limit-determinize TELA based on the breakpoint-construction.
A version of this procedure yields \emph{good-for-MDP} B\"uchi automata (\Cref{def:gfmdef}).
Thereby $\prb^{\max}_{\M}(\lang(\mathcal{A}))$ is computable in single-exponential time for arbitrary MDP $\M$ and TELA $\mathcal{A}$ (\Cref{cor:quantsingleexp}).

\noindent\textbf{Related work.}
The upper-bound for TELA-determinization \cite{SafraV1989,Safra1989} relies on a translation to GBA which first transforms the acceptance formula into disjunctive normal form (DNF). We build on this idea.
Another way of translating TELA to GBA was described in~\cite{DuretLutz2017}.
Translations from LTL to TELA have been proposed in~\cite{MuellerS2017,MajorBFSSZ2019,BlahoudekMS2019a}, and all of them use product constructions to combine automata for subformulas.
The emptiness-check for $\omega$-automata under different types of acceptance conditions has been studied in~\cite{BaierBDKMS2019,DuretLutzJMZ2009,BloemenDv2019,ChatterjeeGK2013}, where~\cite{BaierBDKMS2019} covers the general case of Emerson-Lei conditions and also considers qualitative probabilistic model checking using deterministic TELA.
The generalized Rabin condition from~\cite{BloemenDv2019,ChatterjeeGK2013} is equivalent to the special DNF that we use and a special case of the hyper-Rabin condition for which the emptiness problem is in P~\cite{AllenEmersonL1987,Boker2018}.
Probabilistic model checking for \emph{deterministic} automata under this condition is considered in\cite{ChatterjeeGK2013}, while\cite{BloemenDv2019} is concerned with standard emptiness while allowing nondeterminism.
A procedure to transform TELA into parity automata is presented in~\cite{RenkinDP2020}.

\section{Preliminaries}

\textbf{Automata.}
A \emph{transition-based Emerson-Lei} automaton (TELA) $\mathcal{A}$ is a tuple $(Q, \Sigma, \delta, I, \alpha)$, where $Q$ is a finite set of states, $\Sigma$ is a finite alphabet, $\delta \subseteq Q \times \Sigma \times Q$ is the transition relation, $I \subseteq Q$ is the set of initial states and $\alpha$ is a symbolic acceptance condition over $\delta$, which is defined by:
\vskip-2ex
	\[ \alpha ::=  \trueSymbol \mid \falseSymbol \mid \infset(T) \mid \finset(T) \mid (\alpha \lor \alpha) \mid (\alpha \land \alpha),  \text{ with } T\subseteq \delta  \]
\vskip-0.5ex
\noindent
If $\alpha$ is $\trueSymbol$, $\falseSymbol$, $\infset(T)$ or $\finset(T)$, then it is called \emph{atomic}. We denote by $\length{\alpha}$ the number of atomic conditions contained in $\alpha$, where multiple occurrences of the same atomic condition are counted multiple times. 
Symbolic acceptance conditions describe sets of transitions $T \subseteq \delta$. Their semantics is defined recursively as follows:
\vskip-4ex
\begin{alignat*}{6}
	&{T} && \models \trueSymbol  &~~~~~{T} & \models \infset(T') &~\IFF~ {T} \cap T' \neq \emptyset && ~~~~~{T} & \models \alpha_1 \lor \alpha_2 &&~\IFF~ {T} \models \alpha_1 \text{ or } {T} \models \alpha_2\\
	&{T} && \not\models \falseSymbol  & {T} & \models \finset(T') &~\IFF~ {T} \cap T' = \emptyset && {T} & \models \alpha_1 \land \alpha_2  &&~\IFF~ {T} \models \alpha_1 \text{ and } {T} \models \alpha_2
\end{alignat*}
\vskip-1ex
\noindent
Two acceptance conditions $\alpha$ and $\beta$ are $\delta$-\emph{equivalent} ($\alpha \equiv_{\delta} \beta$) if for all $T \subseteq \delta$ we have $T \models \alpha \iff T \models \beta$.
A \emph{run} of $\mathcal{A}$ for an infinite word $u = u_0 u_1 \ldots \in \Sigma^\omega$ is an infinite sequence of transitions $\rho = (q_0, u_0, q_1)(q_1, u_1, q_2)\ldots \in \delta^\omega$ that starts with an initial state $q_0 \in I$. The set of transitions that appear infinitely often in $\rho$ are denoted by $\inf(\rho)$. A run $\rho$ is \emph{accepting} ($\rho \models \alpha$) $\IFF$ $\inf(\rho) \models \alpha$. The language of $\mathcal{A}$, denoted by $\lang(\mathcal{A})$, is the set of all words for which there exists an accepting run of $\mathcal{A}$.
The sets of infinite words which are the language of some TELA are called \emph{$\omega$-regular}.
A TELA $\mathcal{A}$ is \emph{deterministic} if the set of initial states contains exactly one state and the transition relation is a function $\delta: Q \times \Sigma \to Q$.
It is \emph{complete}, if for all $(q,a) \in Q \times \Sigma$: $\delta \cap \{(q,a,q') \mid q' \in Q\} \neq \varnothing$.
A \emph{B\"uchi condition} is an acceptance condition of the form $\infset(T)$ and a \emph{generalized B\"uchi condition} is a condition of the form $\bigwedge_{1 \leq i \leq k} \infset(T_i)$.
We call the sets $T_i$ appearing in a generalized B\"uchi condition its \emph{acceptance sets}.
Rabin (resp. Street) conditions are of the form $\bigvee_{1 \leq i \leq k} (\finset(F_i) \land \infset(T_i))$ (resp. $\bigwedge_{1 \leq i \leq k} (\finset(F_i) \lor \infset(T_i))$).

\vskip+1ex
\noindent\textbf{Probabilistic systems.}
A labeled \emph{Markov decision process} (MDP) $\M$ is a tuple $(S,s_0,\Act,P,\Sigma,L)$ where $S$ is a finite set of states, $s_0 \in S$ is the initial state, $\Act$ is a finite set of actions, $P : S \times \Act \times S \to [0,1]$ defines the transition probabilities with $\sum_{s' \in S} P(s,\alpha,s') \in \{0,1\}$ for all $(s,\alpha) \in S \times \Act$ and $L : S \to \Sigma$ is a labeling function of the states into a finite alphabet $\Sigma$.
Action $\alpha \in \Act$ is \emph{enabled} in $s$ if $\sum_{s' \in S} P(s,\alpha,s') = 1$, and $\Act(s) = \{\alpha \mid \alpha \text{ is enabled in } s\}$.
A \emph{path} of $\M$ is an infinite sequence $s_0 \alpha_0 s_1 \alpha_1 \ldots \in (S \times \Act)^{\omega}$ such that $P(s_i,\alpha_i,s_{i+1}) > 0$ for all $i \geq 0$.
The set of all paths of $\M$ is denoted by $\Paths(\M)$ and $\Pathsfin(\M)$ denotes the \emph{finite paths}.
Given a path $\pi = s_0 \alpha_0 s_1 \alpha_1 \ldots$, we let $L(\pi) = L(s_0)L(s_1) \ldots \in \Sigma^{\omega}$.
A \emph{Markov chain} is an MDP with $|\Act(s)| \leq 1$ for all states $s$.
A scheduler of $\M$ is a function $\mathfrak{S}: (S\times\Act)^*\times S \rightarrow \Act$ such that $\mathfrak{S}(s_0 \alpha_0 \ldots s_n) \in \Act(s_n)$.
It induces a Markov chain $\M_{\mathfrak{S}}$ and thereby a probability measure over $\Paths(\M)$.
The probability of a set of paths $\Pi$ starting in $s_0$ under this measure is $\Pr_\M^{\mathfrak{S}}(\Pi)$.
For an $\omega$-regular property $\Phi \subseteq \Sigma^{\omega}$ we define $\prb^{\max}_{\M}(\Phi) = \sup_{\mathfrak{S}}\Pr_{\M}^{\mathfrak{S}}(\{\pi \mid \pi \in \Paths(\M) \text{ and } L(\pi) \in \Phi\})$.
See\cite[Chapter 10]{BaierK2008} for more details.

\section{From TELA to Generalized Büchi Automata}
\label{sec:viagba}

\subsection{Operations on Emerson-Lei Automata}
The first operator takes a TELA and splits it along the top-level disjuncts of the acceptance condition.
Let $\mathcal{A} = (Q, \Sigma, \delta, I, \alpha)$ be a TELA where $\alpha = \bigvee_{1 \leq i \leq m} \alpha_i$ and the $\alpha_i$ are arbitrary acceptance conditions.
We define $\splitTELA(\mathcal{A}) := (\mathcal{A}_1, \ldots, \mathcal{A}_m)$ with $\mathcal{A}_i = (Q, \Sigma, \delta, I, \alpha_i)$ for $1 \leq i \leq m$, and $\splitTELAi{\mathcal{A}}{i} := \mathcal{A}_i$.
\begin{restatable}{lemma}{splitConj}
\label{lem:splitAcceptance}
It holds that
	$\lang(\mathcal{A}) = \bigcup_{1 \leq i \leq m} \lang\big(\splitTELAi{\mathcal{A}}{i}\big)$.
\end{restatable}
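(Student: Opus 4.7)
The plan is to observe that the split does not change the structural ingredients of the automaton, only the acceptance condition, and then to reduce language equality to a purely semantic identity about disjunctions in the acceptance logic. Concretely, all the TELA $\splitTELAi{\mathcal{A}}{i}$ share the same state space $Q$, alphabet $\Sigma$, transition relation $\delta$ and initial set $I$ with $\mathcal{A}$. Consequently, for every infinite word $u \in \Sigma^{\omega}$, the set of runs of $\mathcal{A}$ over $u$ coincides with the set of runs of $\splitTELAi{\mathcal{A}}{i}$ over $u$ for each $i$. Thus the only thing that can differ across these automata is which runs count as accepting.

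Next I would prove both inclusions by a direct application of the semantics of the disjunction operator in Emerson-Lei acceptance conditions, namely $T \models \alpha_1 \vee \alpha_2 \iff T \models \alpha_1 \text{ or } T \models \alpha_2$ as stated in the preliminaries. For the inclusion $\lang(\mathcal{A}) \subseteq \bigcup_i \lang(\splitTELAi{\mathcal{A}}{i})$, fix $u \in \lang(\mathcal{A})$ and pick an accepting run $\rho$ of $\mathcal{A}$ over $u$. Then $\inf(\rho) \models \alpha$, i.e.\ $\inf(\rho) \models \bigvee_{1 \leq i \leq m} \alpha_i$; by iterating the semantic clause for $\vee$ we get an index $i$ with $\inf(\rho) \models \alpha_i$. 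Since $\rho$ is also a run of $\splitTELAi{\mathcal{A}}{i}$, this witnesses $u \in \lang(\splitTELAi{\mathcal{A}}{i})$. The reverse inclusion is symmetric: any accepting run of some $\splitTELAi{\mathcal{A}}{i}$ satisfies $\inf(\rho) \models \alpha_i$, so $\inf(\rho) \models \bigvee_i \alpha_i = \alpha$, and hence $\rho$ is accepting in $\mathcal{A}$ as well, giving $u \in \lang(\mathcal{A})$.

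Since both directions reduce to unfolding the recursive semantics of $\vee$, there is no genuine obstacle; if anything, the only subtlety worth spelling out is the iteration step for $m > 2$, which follows by a one-line induction on $m$ from the binary clause provided in the preliminaries. The proof thus fits in a few lines and does not require any new notation.
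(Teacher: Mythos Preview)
Your proposal is correct and follows essentially the same route as the paper: both directions are shown by noting that the automata share all runs and then unfolding the disjunction semantics so that $\inf(\rho) \models \bigvee_i \alpha_i$ iff $\inf(\rho) \models \alpha_i$ for some $i$. The only difference is that you make the ``same runs'' observation and the iteration to $m>2$ explicit, whereas the paper leaves these implicit.
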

\noindent
The analogous statement does not hold for conjunction and intersection (cf~\Cref{fig:splitconj} in the appendix).
We also need constructions to realize the union of a sequence of automata.
This can either be done using the sum (i.e. disjoint union) or the disjunctive product of the state spaces.
We define a general sum (simply called \emph{sum}) operation and one that preserves GBA acceptance (called \emph{GBA-specific sum}).
The disjunctive product construction for TELA is mentioned in \cite{DuretLutz2017} and similar constructions are used in \cite{MuellerS2017, MajorBFSSZ2019}.
While the sum operations yield smaller automata in general, only the product construction preserves determinism.
\begin{definition}\label{def:combineTELA}
	Let $\mathcal{A}_i = (Q_i, \Sigma, \delta_i, I_i, \alpha_i)$, with $i \in \{0,1\}$, be two complete TELA with disjoint state-spaces.
	The \emph{sum} of $\mathcal{A}_0$ and $\mathcal{A}_1$ is defined as follows: 
    \vskip-4ex
		\[\mathcal{A}_0 \oplus \mathcal{A}_1 = \big(Q_0 \cup Q_1, \Sigma, \delta_0 \cup \delta_1, I_0 \cup I_1, (\alpha_0 \land \infset(\delta_0)) \lor (\alpha_1 \land \infset(\delta_1) \big)\]
	\vskip-1ex
	\noindent If $\alpha_i = \infset(T_1^i) \wedge \ldots \wedge \infset(T_k^i)$, with $i \in \{0,1\}$, (i.e. both automata are GBA), then we can use the \emph{GBA-specific sum}:
    \vskip-4ex
		\[\mathcal{A}_0 \GBAsum \mathcal{A}_1 = \big(Q_0 \cup Q_1, \Sigma, \delta_0 \cup \delta_1, I_0 \cup I_1, (\infset(T_1^0 \cup T_1^1) \wedge \ldots \wedge \infset(T_k^0 \cup T_k^1)) \big)\]
	\vskip-1ex
	\noindent
		The \emph{disjunctive  product}  is defined as follows:
	    \vskip-2ex
		\[\mathcal{A}_0 \otimes \mathcal{A}_1 = \big(Q_0 \times Q_1, \Sigma, \delta_\otimes, I_0 \times I_1, (\lifted{}{\alpha_0} \vee \lifted{}{\alpha_1})\big)\]
	\vskip-0.5ex
	\noindent		
	with
	$\delta_\otimes = \big\lbrace \big((q_0, q_1), a, (q_0', q_1')\big) \bigm\vert (q_0,a,q_0') \in \delta_0 \text{ and } (q_1,a,q_1') \in \delta_1 \big\rbrace$ and 
	${\lifted{}{\alpha_i}}$ is constructed by replacing every occurring set of transitions $T$ in $\alpha_i$ by $\big\lbrace \big((q_0, q_1), u, (q_0', q_1') \big) \in \delta_\otimes \bigm\vert (q_i, u, q_i') \in T \big\rbrace$.
\end{definition}

The additional $\infset(\delta_0)$ and $\infset(\delta_1)$ atoms in the acceptance condition of $\mathcal{A}_0 \oplus \mathcal{A}_1$ are essential (see~\Cref{fig:SumWithoutInf} in the appendix). 
We can apply the GBA-specific sum to any two GBA by adding $\infset(\delta_i)$ atoms until the acceptance conditions are of equal length.
Many of our constructions will require the acceptance condition of the TELA to be in DNF.
We will use the following normal form throughout the paper (also called \emph{generalized Rabin} in\cite{ChatterjeeGK2013,BloemenDv2019}).

\begin{definition}[DNF for TELA]
  \label{def:dnf}
  Let $\mathcal{A} = (Q, \Sigma,\delta, I,\alpha)$ be a TELA. We say that $\mathcal{A}$ is in \emph{DNF} if $\alpha$ is of the form $\alpha = \bigvee_{1 \leq i \leq m} \alpha_i$, with $\alpha_i = \finset(T^i_0) \land \bigwedge_{1 \leq j \leq k_i} \infset(T^i_j)$ and such that all $k_i \geq 1$.
\end{definition}

The reason that a single $\finset$ atom in each disjunct is enough is that $\finset(T_1) \land \finset(T_2) \equiv_{\delta} \finset(T_1 \cup T_2)$ for all $T_1,T_2,\delta$.
Taking $k_i \geq 1$ is also no restriction, as we can always add $\land \, \infset(\delta)$ to any disjunct.
Using standard Boolean operations one can transform a TELA with acceptance $\beta$ into DNF by just translating the acceptance formula into a formula $\alpha$ of the above form, with $|\alpha| \leq 2^{|\beta|}$.

\vskip+2ex
\noindent\textbf{Fin-Less Acceptance.}
\begin{figure}[tbp]
    \centering
	\includegraphics[width=0.86\textwidth]{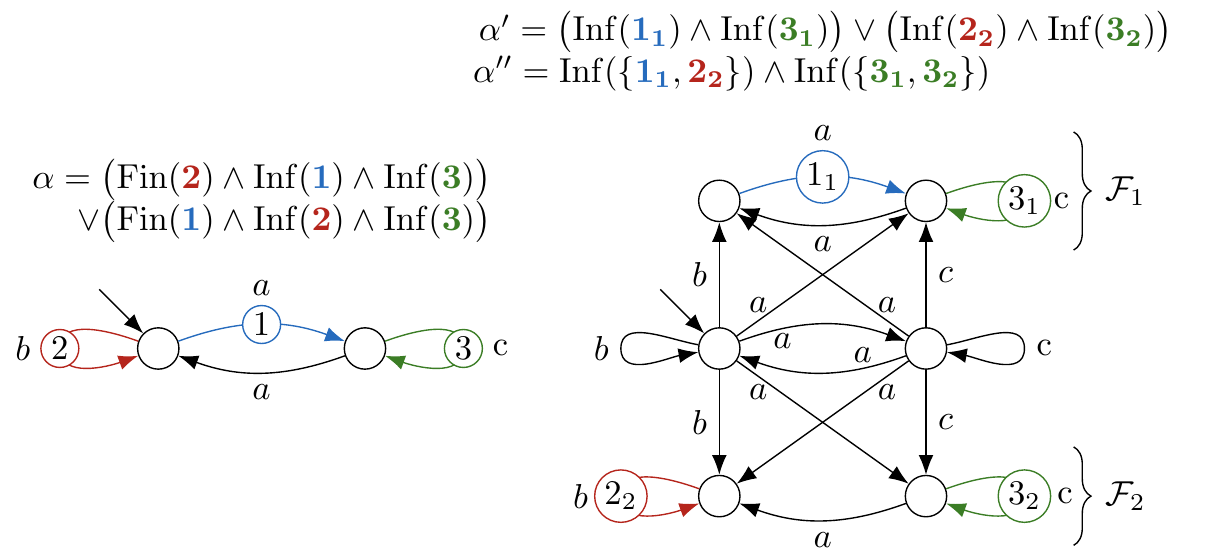}
	\caption{Example of applying $\removeFin$ and $\removeFinGBA$ (\Cref{def:removeFin}) to the automaton on the left. The result is the automaton on the right with acceptance $\alpha'$ ($\removeFin$), respectively $\alpha''$ ($\removeFinGBA$).}
	\label{fig:removeFin}
\end{figure}
To transform a TELA in DNF (see~\Cref{def:dnf}) into an equivalent one without $\finset$-atoms we use the idea of\cite{DuretLutz2017,BloemenDv2019}: a main copy of $\mathcal{A}$ is connected to one additional copy for each disjunct $\alpha_i$ of the acceptance condition, in which transitions from $T_0^i$ are removed.
The acceptance condition ensures that every accepting run leaves the main copy eventually.
Figure~\ref{fig:removeFin} shows an example.

\begin{definition}\label{def:removeFin}
	Let $\finless_i = (Q_i, \Sigma, \delta_i, I_i, \phi_i)$, where $Q_i = \{q^{(i)} \mid q \in Q\}$, $\delta_i = \{(q^{(i)},a,{q'^{(i)}}) \mid (q,a,q') \in \delta \setminus T^i_0\}$ and $\phi_i = \bigwedge_{1 \leq j \leq k_i} \infset(U^i_j)$, where $U^i_j = \{(q^{(i)},a,{q'^{(i)}}) \mid (q,a,q') \in T_j^i \setminus T^i_0\}$.
	Let $\removeFin(\mathcal{A}) = (Q', \Sigma, \delta', I, \alpha')$ and $\removeFinGBA(\mathcal{A}) = (Q', \Sigma, \delta', I, \alpha'')$, where $Q' = Q \cup \bigcup_{1 \leq i \leq m} Q_{i}$ and:
    \vskip-2ex
	\begin{itemize}
		\item[$\bullet$] $\delta' = \delta \cup \bigcup_{1 \leq i \leq m}\big(\delta_{i} \cup \{(q, a, {q'^{(i)}}) \mid (q, a, q') \in \delta \} \big)$
		\item[$\bullet$] $\alpha' ~\!= \bigvee_{1 \leq i \leq m} \phi_i$ %
		\item[$\bullet$] $\alpha'' = \bigwedge_{1\leq j \leq k} \infset(U_j^1 \cup \ldots \cup U_j^m)$, with $k = \max_{i} k_i$ and $U_j^i = \delta_i$ if $k_i < j \leq k$.
	\end{itemize}
\end{definition}

\begin{restatable}{lemma}{removeFinGba}\label{lem:removeFinGBA}	
	It holds that $\lang(\mathcal{A}) = \lang(\removeFin(\mathcal{A})) = \lang(\removeFinGBA(\mathcal{A}))$.
\end{restatable}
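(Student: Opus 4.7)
The plan is to establish the two equalities in sequence, first the one involving $\removeFin$ and then transferring the argument to $\removeFinGBA$. The key structural observation I would exploit at the outset is that the transitions in $\delta'$ of \Cref{def:removeFin} allow moving from the main copy $Q$ to any copy $Q_i$, and moving within $Q_i$, but never leaving $Q_i$ once entered nor jumping between different copies $Q_i, Q_{i'}$ for $i \neq i'$. Consequently every run of $\removeFin(\mathcal{A})$ (and of $\removeFinGBA(\mathcal{A})$) either stays in the main copy forever or eventually enters some copy $Q_i$ and remains there. Since all atoms $\infset(U^i_j)$ refer exclusively to transitions within $Q_i$, any accepting run must fall into the latter case.

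For $\lang(\mathcal{A}) \subseteq \lang(\removeFin(\mathcal{A}))$ I would start from an accepting run $\rho$ of $\mathcal{A}$. Then $\inf(\rho) \models \alpha_i$ for some $i$, so $\inf(\rho) \cap T^i_0 = \emptyset$ while $\inf(\rho) \cap T^i_j \neq \emptyset$ for all $1 \leq j \leq k_i$. Choose $N$ such that no transition of $T^i_0$ is taken after step $N$ and construct a run of $\removeFin(\mathcal{A})$ that mimics $\rho$ in the main copy up to step $N$, then switches to the corresponding state in $Q_i$ via the bridging transitions $\{(q,a,q'^{(i)}) \mid (q,a,q') \in \delta\}$, and afterwards mirrors the tail of $\rho$ inside $Q_i$ (this is possible because that tail avoids $T^i_0$). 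The resulting run visits each $U^i_j$ infinitely often, hence satisfies $\phi_i$ and therefore $\alpha'$. For the reverse inclusion, given an accepting run $\rho'$ of $\removeFin(\mathcal{A})$ it satisfies some $\phi_i$, so by the structural observation it eventually stays in $Q_i$. Projecting the states back to $Q$ yields a run of $\mathcal{A}$ whose infinite transitions all come from $\delta \setminus T^i_0$ and meet each $T^i_j$ infinitely often, thus witnessing $\alpha_i$ and hence $\alpha$.

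For the GBA version I would argue that $\removeFin(\mathcal{A})$ and $\removeFinGBA(\mathcal{A})$ differ only in the acceptance formula (both share $Q'$, $\delta'$, $I$), so it suffices to show that a run fits $\alpha'$ iff it fits $\alpha''$. Using the structural observation once more, both directions reduce to a single case where the run eventually stays in some $Q_i$. From $\alpha'$ to $\alpha''$: the run satisfies $\phi_i$, so for $1 \leq j \leq k_i$ it visits $U^i_j \subseteq U^1_j \cup \cdots \cup U^m_j$ infinitely often, and for $k_i < j \leq k$ the padding $U^i_j = \delta_i$ is visited infinitely often trivially (as the run is eventually an infinite sequence of transitions in $\delta_i$). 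From $\alpha''$ to $\alpha'$: for each $j \leq k_i$, the condition $\infset(U^1_j \cup \cdots \cup U^m_j)$ forces infinitely many visits to some $U^\ell_j$; since the tail lies in $Q_i$, only $\ell = i$ is possible, giving $\inf(\rho') \cap U^i_j \neq \emptyset$, so $\phi_i$ holds.

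I expect no deep obstacle here; the main thing to get right is the bookkeeping around the bridging transitions (to ensure the construction in the forward direction of the first inclusion actually yields a valid run, with the switch happening at a state reachable in both $Q$ and $Q_i$) and the padding convention $U^i_j = \delta_i$ for $k_i < j \leq k$, which must be handled explicitly when comparing $\alpha'$ and $\alpha''$. Everything else is essentially an extraction of the acceptance semantics from \Cref{def:removeFin} combined with the one-way nature of the bridging transitions.
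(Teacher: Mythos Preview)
Your proposal is correct and follows essentially the same approach as the paper's proof: both argue the first equality by switching an accepting run of $\mathcal{A}$ into copy $Q_i$ after the last $T_0^i$-transition (and projecting back for the converse), and both reduce the second equality to showing $\alpha' \equiv_{\delta'} \alpha''$ for runs, using that any accepting run is eventually trapped in a single copy $Q_i$. The only cosmetic difference is that the paper identifies the index $i$ in the $\alpha'' \Rightarrow \alpha'$ direction via the conjunct $\infset(U_1^1 \cup \ldots \cup U_1^m)$, whereas you invoke the structural observation directly; the content is the same.
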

\noindent
While $\removeFin(\mathcal{A})$ is from\cite{DuretLutz2017,BloemenDv2019}, $\removeFinGBA(\mathcal{A})$ is a variant that differs only in the acceptance and always produces GBA.
Both consist of $m+1$ copies of $\mathcal{A}$ (with $\finset$-transitions removed).

\subsection{Construction of Generalized Büchi Automata}
\begin{figure}[btp]
	\centering
	\[\alpha_n =\big( \infsetColor{NavyBlue}{1} \wedge \infsetColor{CornflowerBlue}{1'} \big) \vee \big( \infsetColor{BrickRed}{2} \wedge \infsetColor{Red}{2'} \big) \vee \ldots \vee \big(\infsetColor{OliveGreen}{n} \vee \infsetColor{LimeGreen}{n'} \big) \vspace{-10pt}\] 
	\includegraphics[width=0.86\textwidth]{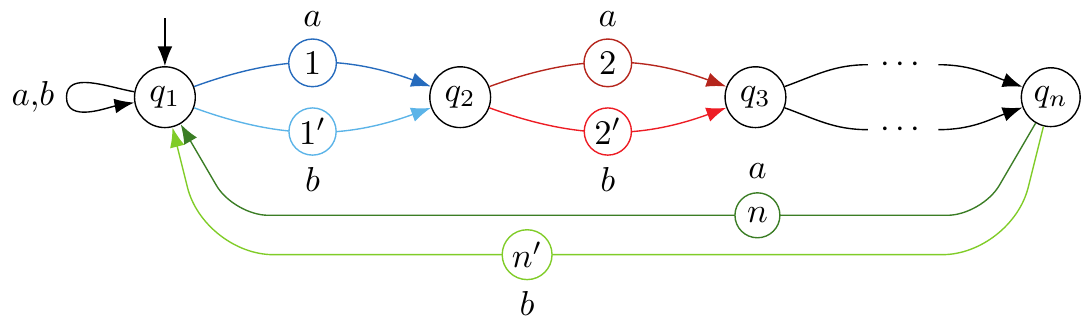}
	\caption{A class of {TELA} where generating the {CNF} leads to $2^n$ many conjuncts.}
	\label{fig:ExpCNF}
\end{figure}

\noindent\textbf{Construction of Spot.}
The transformation from TELA to GBA from\cite{DuretLutz2017} is implemented in \spot \cite{Duret-LutzLFMRX2016}.
It transforms the automaton into DNF and then applies (an optimized version of) $\removeFin$. 
The resulting fin-less acceptance condition is translated into conjunctive normal form ({CNF}).
As $\infset(T_1) \lor \infset(T_2) \equiv_{\delta} \infset(T_1 \cup T_2)$ holds for all $\delta$, one can rewrite any fin-less condition in CNF into a conjunction of $\infset$-atoms, which is a generalized B\"uchi condition.
When starting with a TELA $\cal B$ with acceptance $\beta$ and $N$ states, one gets a GBA with $N \, 2^{O(|\beta|)}$ states and $2^{O(|\beta|)}$ acceptance sets, as the fin-removal may introduce exponentially (in $|\beta|$) many copies, and the CNF may also be exponential in $|\beta|$.

Transforming a fin-less automaton into a GBA by computing the CNF has the advantage of only changing the acceptance condition, and in some cases it produces simple conditions directly. 
For example, \spot's TELA to GBA construction transforms a Rabin into a B\"uchi automaton, and a Streett automaton with $m$ acceptance pairs into a GBA with $m$ accepting sets.
However, computing the CNF may also incur an exponential blow-up (\Cref{fig:ExpCNF} shows such an example).

\vskip+2ex
\noindent\textbf{Copy-based approaches.}
We now describe three approaches (\approachOne{}, \approachTwo{} and \approachThree{}), which construct GBA with at most $|\beta|$ acceptance sets.
On the other hand, they generally produce automata with more states.
They are based on\cite{SafraV1989} which first translates copies of $\mathcal{A}$ (corresponding to the disjuncts of the acceptance condition) to GBA, and then takes their sum.
However, it is not specified in\cite{SafraV1989} how exactly $\finset$-atoms should be removed (they were concerned only with the theoretical complexity). We define:
\begin{align*}
  \text{\approachOne{}}(\mathcal{A}) &:= \bigGBAsum{1 \leq i \leq m} \splitTELAi{\removeFin(\mathcal{A})}{i} \\
  \text{\approachTwo{}}(\mathcal{A}) &:= \bigGBAsum{1 \leq i \leq m} \removeFin(\splitTELAi{\mathcal{A}}{i}) \\
  \text{\approachThree{}}(\mathcal{A}) &:= \removeFinGBA(\mathcal{A})
\end{align*}
With $\removeFin$ as defined in~\Cref{def:removeFin}, the approaches \approachOne{} and \approachTwo{} produce the same automata (after removing non-accepting SCC's in \approachOne{}), and all three approaches create $O(m)$ copies of $\mathcal{A}$.
Our implementation uses an optimized variant of $\removeFin$,
as provided by \spot{}, which leads to different results for all three approaches.

\section{Determinization}
\label{sec:determinize}

\noindent\textbf{Determinization via single GBA.}
The standard way of determinizing TELA is to first construct a GBA, which is then determinized.
Dedicated determinization procedures for GBA with $N$ states and $K$ acceptance sets produce deterministic Rabin automata with $2^{O(N (\log N + \log K))}$ states\cite{ScheweV2012}.
For a TELA $\mathcal{B}$ with $n$ states and acceptance $\beta$, the above translations yield GBA with $N = n \, 2^{O(|\beta|)}$ and $K = 2^{O(|\beta|)}$ (\spot{}'s construction) or $N = n \, 2^{O(|\beta|)}$ and $K = O(|\beta|)$ (copy-based approaches).
We evaluate the effect of the translations to GBA introduced in the previous chapter in the context of determinization in~\Cref{sec:evaluation} .

\vskip+2ex
\noindent\textbf{Determinization via a product construction.}
Another way to determinize a TELA $\mathcal{A}$ in DNF is to determinize the automata $\splitTELAi{\mathcal{A}}{i}$ one by one and then combining them with the disjunctive product construction of \Cref{def:combineTELA}:
\vskip-1ex
\[ \bigotimes_{1 \leq i \leq m} \determin\big(\removeFin(\splitTELAi{\mathcal{A}}{i})\big) \]
\vskip-0.5ex
\noindent
where ``$\determin$'' is any GBA-determinization procedure.
Let $\cal B$ be a TELA with acceptance $\beta$ and $n$ states, and let $\alpha$ be an equivalent condition in DNF with $m$ disjuncts.
Assuming an optimal GBA-determinization procedure, the product combines $m$ automata with $2^{O(n (\log n + \log |\beta|))}$ states and hence has $\big(2^{O(n \, (\log n + \log |\beta|))}\big)^m = 2^{O ( 2^{|\beta|} \cdot n (\log n + \log |\beta|))}$ states.

\section{Limit-deterministic TELA}
\label{sec:limitdet}
Limit-determinism has been studied mainly in the context of B\"uchi automata~\cite{CourcoubetisY1995,SickertEJK2016,Vardi1985a}, and we define it here for general TELA.
\begin{definition}
  \label{def:limitdet}
  A TELA $\mathcal{A} = (Q,\Sigma,\delta,I,\alpha)$ is called \emph{limit-deterministic} if there exists a partition $Q_N,Q_D$ of $Q$ such that
  \begin{enumerate}
  \item $\delta \cap (Q_D \times \Sigma \times Q_N) = \varnothing$,
  \item for all $(q,a) \in Q_D \times \Sigma$ there exists at most one $q'$ such that $(q,a,q') \in \delta$,
  \item every accepting run $\rho$ of $\mathcal{A}$ satisfies $\inf(\rho) \cap (Q_N \times \Sigma \times Q_N) = \varnothing$.
  \end{enumerate}
\end{definition}
\noindent
This is a semantic definition and as checking emptiness of deterministic TELA is already coNP-hard, checking whether a TELA is limit-determinstic is also.
\begin{restatable}{proposition}{LimitDetComplexity}
  \label{prop:limitdetcoml}
  Checking limit-determinism for TELA is coNP-complete.
\end{restatable}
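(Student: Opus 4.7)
The plan is to establish both coNP-membership and coNP-hardness.

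For the \emph{upper bound}, I would exhibit a canonical candidate partition and then reduce the check to a single non-emptiness-style question. Observe that in any valid partition $(Q_N, Q_D)$ the set $Q_D$ must be \emph{deterministic and closed}: every state in $Q_D$ has at most one $\delta$-successor per letter (condition~(2)) and all its successors lie in $Q_D$ (condition~(1)). Let $Q_D^{\max}$ be the largest such subset of $Q$; it is computable in polynomial time by starting from $Q$ and iteratively removing any state violating one of these two properties. The partition $(Q \setminus Q_D^{\max}, Q_D^{\max})$ satisfies (1) and (2) by construction, and since any valid partition has $Q_D \subseteq Q_D^{\max}$ and hence $Q\setminus Q_D^{\max} \subseteq Q_N$, condition~(3) for any valid $(Q_N,Q_D)$ forces condition~(3) for the canonical partition. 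Therefore $\mathcal{A}$ is limit-deterministic iff the canonical partition satisfies~(3).

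To place the complement in NP, I would use the standard characterization that an accepting run $\rho$ with $\inf(\rho) = T$ exists if and only if $T \subseteq \delta$ is a reachable, strongly connected set of transitions with $T \models \alpha$. An NP certificate for ``not limit-deterministic'' is then such a $T$ that additionally meets $(Q\setminus Q_D^{\max}) \times \Sigma \times (Q\setminus Q_D^{\max})$. All the checks -- polynomial-size guess, reachability from $I$, strong-connectedness of $T$, evaluation of the Boolean formula $\alpha$ on $T$, and the non-empty intersection -- are polynomial-time.

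For the \emph{lower bound}, I would reduce from non-emptiness of deterministic TELA, which is NP-hard by the remark that deterministic TELA emptiness is coNP-hard. Given a deterministic TELA $\mathcal{A} = (Q,\Sigma,\delta,I,\alpha)$, construct $\mathcal{A}' = (Q \cup \{s_1,s_2\}, \Sigma, \delta', I, \alpha')$ where $\delta'$ consists of $\delta$, the self-loops $(s_i,a,s_i)$ for $i\in\{1,2\}$, and, for every $q \in Q$ and $a \in \Sigma$, both ``fan-out'' transitions $(q,a,s_1)$ and $(q,a,s_2)$. Set $\alpha' = \alpha \wedge \infset(\delta)$; the extra conjunct ensures that runs eventually trapped in a sink are non-accepting. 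Every $q \in Q$ now has at least two distinct successors on every letter, so $Q_D^{\max} = \{s_1,s_2\}$. Since the sinks have no return edges, any reachable strongly connected set of transitions of $\mathcal{A}'$ that is accepting under $\alpha'$ must lie entirely inside $\delta$ and also satisfies $\alpha$; conversely every reachable accepting SCC of $\mathcal{A}$ is such a witness in $\mathcal{A}'$ and sits entirely within $(Q\setminus Q_D^{\max})\times\Sigma\times(Q\setminus Q_D^{\max})$. Hence $\mathcal{A}$ is non-empty iff $\mathcal{A}'$ is not limit-deterministic.

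The main subtlety -- and the reason for adding \emph{two} distinct sinks rather than one -- is to guarantee that no state of $Q$ accidentally survives in $Q_D^{\max}$, which would happen if some $q \in Q$ had zero $\delta$-successors on some letter: two fan-out targets per letter force every original state to be nondeterministic. Beyond that, the key conceptual step is identifying the canonical partition $(Q\setminus Q_D^{\max}, Q_D^{\max})$ and proving its ``best-possible'' character via the monotonicity argument above, which is what collapses a seemingly existential condition (``does some partition work?'') into a single SCC-based non-emptiness check.
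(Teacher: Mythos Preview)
Your proof is correct and follows essentially the same approach as the paper. Both identify the unique maximal deterministic-and-closed set $Q_D^{\max}$ (the paper writes it as $Q_D^* = \{q \mid \text{all states reachable from }q\text{ are deterministic}\}$, which is the same set), reduce the question to whether condition~(3) fails for this canonical partition, and then observe that this is an NP-style non-emptiness check; for hardness, both reduce from TELA emptiness by attaching a gadget that forces all original states out of $Q_D^{\max}$ while keeping the gadget itself rejecting. The only cosmetic difference is the gadget: the paper adds a single nondeterministic rejecting SCC, whereas you add two deterministic sinks (so that the fan-out itself creates the nondeterminism) and conjoin $\infset(\delta)$---your version is arguably a bit more explicit about the corner case of states with missing successors.
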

\noindent An alternative syntactic definition for TELA in DNF, which implies limit-determinism, 
is provided in \Cref{def:syntlimitdet}.
\begin{definition}
  \label{def:syntlimitdet}
  A TELA $\mathcal{A} = (Q,\Sigma,\delta,\{q_0\},\alpha)$ in DNF, with $\alpha = \bigvee_{1 \leq i \leq m} \alpha_i$, $\alpha_i = \finset(T^i_0) \land \bigwedge_{1 \leq j \leq k_i} \infset(T^i_j)$ and $k_i \geq 1$ for all $i$, is \emph{syntactically limit-deterministic} if there exists a partition $Q_N,Q_D$ of $Q$ satisfying 1. and 2. of~\Cref{def:limitdet} and additionally $T^i_j \subseteq Q_D \times \Sigma \times Q_D \text{ for all } i \leq m \text{ and } 1 \leq j \leq k_i$.
\end{definition}

\subsection{Limit-determinization}
\label{sec:limitdetviaprod}
We first observe that replacing the product by a sum in the product-based determinization above yields limit-deterministic automata of single-exponential size (in contrast to the double-exponential lower-bound for determinization).
Let $\mathcal{A}$ be a TELA in DNF with $n$ states and acceptance $\alpha = \bigvee_{1 \leq i \leq m} \alpha_i$, where $\alpha_i = \finset(T^i_0) \land \bigwedge_{1 \leq j \leq k_i} \infset(T^i_j)$ (see \Cref{def:dnf}), and let $\mathcal{A}_i = \splitTELAi{\mathcal{A}}{i}$.
\begin{restatable}{proposition}{LimitDetApprOne}
\label{prop:limitdet1}
  $\bigoplus_{1 \leq i \leq m} \determin(\removeFin(\mathcal{A}_i))$ is limit-deterministic and of size $\sum_{1 \leq i \leq m} |\determin(\mathcal{A}_i)| = m \cdot 2^{O(n \, (\log n + \log k))}$, where $k = \max \{k_i \mid 1 \leq i \leq m\}$.
\end{restatable}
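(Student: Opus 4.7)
The plan is to verify the two claims separately: that the constructed automaton is limit-deterministic, and that its size satisfies the stated bound.

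For limit-determinism, the key observation is that a trivial partition suffices. Each $\determin(\removeFin(\mathcal{A}_i))$ is deterministic by construction, and the sum $\oplus$ (\Cref{def:combineTELA}) combines automata via disjoint union of states, transitions, and initial-state sets, introducing no transitions between components. I would therefore choose $Q_D$ to be the entire state space and $Q_N = \varnothing$: conditions 1 and 3 of \Cref{def:limitdet} hold vacuously, and condition 2 reduces to componentwise determinism together with the absence of cross-component transitions. Crucially, \Cref{def:limitdet} constrains only $\delta$, so the multiple initial states (one per component) produced by the iterated sum are permitted.

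For the size bound, I would proceed componentwise. Since $\mathcal{A}_i = \splitTELAi{\mathcal{A}}{i}$ has $n$ states and its acceptance $\alpha_i = \finset(T^i_0) \land \bigwedge_{1 \leq j \leq k_i} \infset(T^i_j)$ consists of a single DNF disjunct, \Cref{def:removeFin} yields $\removeFin(\mathcal{A}_i)$ of size $2n$ with a generalized B\"uchi acceptance comprising $k_i$ sets. Invoking the optimal GBA-determinization bound $2^{O(N \log N + N \log K)}$ recalled in \Cref{sec:determinize}, with $N = 2n$ and $K = k_i$, gives a deterministic Rabin automaton of size $2^{O(n \log n + n \log k_i)}$. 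Summing over the $m$ disjuncts and using $k_i \leq k$ then yields the claimed $m \cdot 2^{O(n(\log n + \log k))}$.

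The only real subtlety is justifying the trivial partition for limit-determinism: this works precisely because $\oplus$ keeps the component state spaces disjoint and does not add transitions between them, so determinism is preserved locally in each component and hence globally on $Q_D = Q$. The size argument is then essentially routine bookkeeping. Language preservation, while not explicitly stated, follows as a sanity check by combining \Cref{lem:splitAcceptance}, \Cref{lem:removeFinGBA}, correctness of GBA-determinization, and the fact that $\oplus$ realizes language union (the reason the $\infset(\delta_i)$ guards appear in the sum's acceptance formula).
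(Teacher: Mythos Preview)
Your proposal is correct and takes essentially the same approach as the paper: both arguments choose the trivial partition $Q_D = Q$, $Q_N = \varnothing$, relying on the fact that the only nondeterminism in the sum is the choice of initial state, which is not constrained by \Cref{def:limitdet}. Your write-up is in fact more complete than the paper's, which is a two-line remark that does not spell out the size bound; your componentwise accounting (single-disjunct $\removeFin$ yielding a GBA with $2n$ states and $k_i$ acceptance sets, then the $2^{O(N(\log N + \log K))}$ determinization bound) is exactly the intended justification.
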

\noindent If ``$\det$'' is instantiated by a GBA-determinization that produces Rabin automata, then the result is in DNF and syntactically limit-determinstic.
Indeed, in this case the only nondeterminism is the choice of the initial state.
But ``$\det$'' can, in principle, also  be replaced by any limit-determinization procedure for GBA.

We now extend the limit-determinization constructions of\cite{CourcoubetisY1995} (for B\"uchi automata) and\cite{HahnLSTZ2015,BlahoudekDKKS2017,Blahoudek2018} (for GBA) to Emerson-Lei conditions in DNF.
These constructions use an \emph{initial} component and an \emph{accepting breakpoint component}\cite{MiyanoH1984} for $\mathcal{A}$, which is deterministic.
The following construction differs in two ways: there is one accepting component per disjunct of the acceptance condition, and the accepting components are constructed from $\mathcal{A}$ without considering the $\finset$-transitions of that disjunct.
To define the accepting components we use the subset transition function $\theta$ associated with $\delta$:
$\theta(P,a) = \bigcup_{q \in P}\{ q' \mid (q,a,q') \in \delta\}$ for $(P,a) \in 2^Q \times \Sigma$, and additionally we define $\theta|_T(P,a) = \bigcup_{q \in P}\{ q' \mid (q,a,q') \in \delta \cap T \}$.
These functions are extended to finite words in the standard way.

\begin{definition}
  \label{def:breakp}
	Let $\theta_i = \theta|_{\delta \setminus T_0^i}$ and define $\breakp_{i} = (Q_i,\Sigma,\delta_i,\{p_0\},\infset(\delta^{\operatorname{break}}_i))$ with: $Q_i = \{(R,B,l) \in 2^Q \times 2^Q \times \{0, \ldots, k_i\} \mid B \subseteq R \}$, $p_0 = (I,\varnothing,0)$ and
  \begin{align*}
    \delta^{\operatorname{main}}_i &= \left\lbrace((R_1, B_1, l), a, (R_2, B_2, l)) \mid\begin{array}{l}
		 R_2 = \theta_i (R_1, a), \\
		 B_2 = \theta_i (B_1, a) \cup \theta_i|_{T_l^i}(R_1, a)
		\end{array} \right\rbrace \\
    \delta^{\operatorname{break}}_i &= \left\lbrace ((R_1, B_1, l), a, (R_2, \varnothing, l') \mid \begin{array}{l}
		  ((R_1, B_1, l), a, (R_2, B_2, l)) \in \delta^{\operatorname{main}}_i,\\
      R_2 = B_2,\\
		  l' = (l+1) \bmod (k_i+1)
		\end{array} \right\rbrace \\
    \delta_i &= \bigl\{((R_1, B_1, l), a, (R_2, B_2, l)) \in \delta^{\operatorname{main}}_i \mid R_2 \neq B_2 \bigr\} \cup \delta^{\operatorname{break}}_i
  \end{align*}
\end{definition}

In state $(R,B,l)$, intuitively $R$ is the set of states reachable for the prefix word in $\mathcal{A}$ without using transitions from $T_0^i$, while $B$ are the states in $R$ which have seen a transition in $T_l^i$ since the last ``breakpoint''.
The breakpoint-transitions are $\delta^{\operatorname{break}}_i$, which occur when all states in $R$ have seen an accepting transition since the last breakpoint (namely if $R = B$).
The breakpoint construction underapproximates the language of a given GBA, in general.

We define two limit-deterministic B\"uchi automata (LDBA) $\ld_{\mathcal{A}}$ and $\gfm_{\mathcal{A}}$ where $\gfm_{\mathcal{A}}$ is additonally \emph{good-for-MDP} (GFM)\cite{HahnPSSTW2020}.
This means that $\gfm_{\mathcal{A}}$ can be used to solve certain quantitative probabilistic model checking problems (see~\Cref{sec:probmc}).
Both use the above breakpoint automata as accepting components.
While $\ld_{\mathcal{A}}$ simply uses a copy of $\mathcal{A}$ as initial component, $\gfm_{\mathcal{A}}$ uses the deterministic subset-automaton of $\mathcal{A}$ (it resembles the \emph{cut-deterministic} automata of~\cite{Blahoudek2018}).
Furthermore,  to ensure the GFM property, there are more transitions between initial and accepting copies in $\gfm_{\mathcal{A}}$.
The construction of $\gfm_{\mathcal{A}}$ extends the approach for GBA in~\cite{HahnLSTZ2015} (also used for probabilistic model checking) to TELA.
We will distinguish elements from sets $Q_i$ for different $i$ from \Cref{def:breakp} by using subscripts (e.g. $(R,P,l)_i$) and assume that these sets are pairwise disjoint.%
\begin{definition}[$\ld_{\mathcal{A}}$ and $\gfm_{\mathcal{A}}$]
  \label{def:gfmdef}
  Let $Q_{\operatorname{acc}} = \bigcup_{1 \leq i \leq m} Q_i$, $\delta_{\operatorname{acc}} = \bigcup_{1 \leq i \leq m} \delta_i$ and $\alpha_{\operatorname{acc}} = \infset(\bigcup_{1 \leq i \leq m} \delta^{\operatorname{break}}_i)$. Define
  \[\ld_{\mathcal{A}} = (Q \cup Q_{\operatorname{acc}},\Sigma,\ldtrans,I,\alpha') \; \text{ and } \; \gfm_{\mathcal{A}} = (2^{Q} \cup Q_{\operatorname{acc}},\Sigma,\gfmtrans,\{I\},\alpha')\]
  with 
  
  \vskip-4ex
  {\ }
  \vskip-4ex
  
  \begin{align*}
  \ldtrans & = \delta \cup \ldbridge \cup \delta_{\operatorname{acc}} \quad \text{ and } \quad \gfmtrans = \theta \cup \gfmbridge \cup \delta_{\operatorname{acc}}\\
  \ldbridge & = \bigl\{\bigl(q,a,(\{q'\},\varnothing,0)_i\bigr) \mid (q,a,q') \in \delta \text{ and } 1 \leq i \leq m \bigr\}\\[0.25ex]
  \gfmbridge & = \bigl\{\bigl(P,a,(P',\varnothing,0)_i\bigr) \mid P' \subseteq \theta(P,a) \text{ and } 1 \leq i \leq m \bigr\}
  \end{align*}
\end{definition}
\noindent
As $\delta_i^{\operatorname{break}} \subseteq \delta_{\operatorname{acc}}$ for all $i$, both $\ld_{\mathcal{A}}$ and $\gfm_{\mathcal{A}}$ are syntactically limit-deterministic.
The proofs of correctness are similar to ones of the corresponding constructions for GBA~\cite[Thm. 7.6]{Blahoudek2018}.
We show later in~\Cref{thm:gfmprop} that $\gfm_{\mathcal{A}}$ is GFM.
\begin{restatable}{theorem}{GFMcorrectness}
\label{thm:GFMcorrectness}
  $\ld_{\mathcal{A}}$ and $\gfm_{\mathcal{A}}$ are syntactically limit-deterministic and satisfy $\lang(\ld_{\mathcal{A}}) = \lang(\gfm_{\mathcal{A}}) = \lang(\mathcal{A})$. Their number of states is in $O(n +  3^n \, m \, k)$ for $\ld_{\mathcal{A}}$ and $O(2^n +  3^n \, m \, k) = O(|\alpha|^2 \cdot 3^n)$ for $\gfm_{\mathcal{A}}$, where $k = \max \{k_i \mid 1 \leq i \leq m\}$.
\end{restatable}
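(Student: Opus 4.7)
The plan is to verify in sequence (i) syntactic limit-determinism, (ii) language equivalence, and (iii) the state-count bounds, treating $\ld_{\mathcal{A}}$ and $\gfm_{\mathcal{A}}$ in parallel since they differ only in the initial component and the bridge relation. For (i) I take the partition $Q_N = Q$ (resp.\ $Q_N = 2^Q$) and $Q_D = Q_{\operatorname{acc}}$. Inspecting $\ldtrans = \delta \cup \ldbridge \cup \delta_{\operatorname{acc}}$ (resp.\ $\gfmtrans$) shows that no transition in $\delta_{\operatorname{acc}}$ leaves $Q_D$, and that on each $Q_i$ the successor of $(R,B,l)$ on input $a$ is uniquely determined via $\theta_i$, so conditions~1 and~2 of Def.~\ref{def:limitdet} hold. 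The single-atom acceptance $\alpha' = \infset\bigl(\bigcup_i \delta_i^{\operatorname{break}}\bigr)$ fits the DNF schema of Def.~\ref{def:syntlimitdet} with $m=1$, $k_1=1$, $T_0 = \varnothing$; since $\delta_i^{\operatorname{break}} \subseteq Q_i \times \Sigma \times Q_i \subseteq Q_D \times \Sigma \times Q_D$ by construction, the syntactic condition is met.

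The heart of (ii) is a Miyano--Hayashi-style lemma for each index $i$: a run of $\breakp_i$ on a word $w$ starting in $(\{q\},\varnothing,0)_i$ visits $\delta_i^{\operatorname{break}}$ infinitely often iff some run of $\mathcal{A}$ from $q$ on $w$ avoids $T_0^i$ and visits every $T_j^i$ (for $1 \le j \le k_i$) infinitely often. The ``if'' direction uses such a witness run $\sigma$ to argue that $B$ catches up with $R$ between successive $T_l^i$-transitions of $\sigma$. The ``only if'' direction observes that infinitely many breakpoints drive the cyclic index $l$ through each $j \in \{1,\ldots,k_i\}$ infinitely often, and a K\"onig's-lemma extraction in the finitely branching tree of $\theta_i$-runs from $q$ supplies an infinite branch hitting every $T_j^i$ infinitely often.

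With this lemma, $\lang(\mathcal{A}) \subseteq \lang(\ld_{\mathcal{A}})$ follows by picking an accepting $\mathcal{A}$-run $\rho$, a satisfied disjunct $i$, and a position $N$ beyond which $\rho$ avoids $T_0^i$, then bridging via $\ldbridge$ into $(\{q_{N+1}\},\varnothing,0)_i$. The reverse direction splices the $Q$-prefix of an accepting $\ld_{\mathcal{A}}$-run with the $\mathcal{A}$-run returned by the lemma from the bridged singleton. For $\gfm_{\mathcal{A}}$ the same template applies: in $\supseteq$ I pick the singleton $P' = \{q_{N+1}\}$ permitted by $\gfmbridge$; in $\subseteq$ the subset-prefix of an accepting $\gfm_{\mathcal{A}}$-run only tracks reachable sets, so a second K\"onig-lemma step recovers a concrete $\mathcal{A}$-prefix whose state at the bridge position lies in the chosen $P'$, after which I concatenate with the lemma's $\mathcal{A}$-suffix.

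For (iii), pairs $B \subseteq R \subseteq Q$ are in bijection with maps $Q \to \{\text{out},\,R\setminus B,\,B\}$, giving $3^n$, so $|Q_i| = 3^n(k_i+1)$ and $|Q_{\operatorname{acc}}| \le 3^n\sum_i(k_i+1)$; adding $n$ (resp.\ $2^n$) for the initial component yields $O(n + 3^n m k)$ (resp.\ $O(2^n + 3^n m k)$). Since $|\alpha| = \sum_i(1+k_i)$ forces both $m \le |\alpha|$ and $k \le |\alpha|$, we get $mk \le |\alpha|^2$ and hence the cleaner form $O(|\alpha|^2 \cdot 3^n)$. The main obstacle I anticipate is the forward direction of the breakpoint correspondence under the TELA adaptation: one must simultaneously handle a cyclic $l$-index over $\{0,\ldots,k_i\}$, the removal of $T_0^i$-transitions encoded in $\theta_i$, and the K\"onig-lemma extraction of an infinite $\mathcal{A}$-branch that avoids $T_0^i$ and hits every $T_j^i$ infinitely often; the extra twist for $\gfm_{\mathcal{A}}$ is that its initial component no longer carries an explicit $\mathcal{A}$-run, forcing a second extraction to reconstruct one up to the bridge.
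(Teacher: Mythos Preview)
Your treatment of (i) and (iii) matches the paper, and the soundness direction of (ii) via K\"onig's lemma corresponds to the paper's Lemma~\ref{lem:gfmlangsubset}.

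The gap is in the ``if'' direction of your breakpoint lemma, which is false as stated. Take $\Sigma=\{a\}$, states $\{q,p,r\}$, transitions $q\to p$, $q\to r$, $p\to p$, $r\to r$, with $T_0^i=\varnothing$, $k_i=1$ and $T_1^i=\{(p,a,p)\}$. The run $q,p,p,\ldots$ from $q$ on $a^\omega$ avoids $T_0^i$ and visits $T_1^i$ infinitely often, yet the $\breakp_i$-run from $(\{q\},\varnothing,l)$ never sees a breakpoint: the $R$-component stabilizes at $\{p,r\}$, while $r$ can never enter $B$ (it has no incoming $T_1^i$-transition and is unreachable from $p$). Your sketch that ``$B$ catches up with $R$ between successive $T_l^i$-transitions of $\sigma$'' only guarantees that the current state of $\sigma$ enters $B$; it says nothing about the other branches that $R$ tracks. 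Consequently, bridging at an arbitrary position $N$ ``beyond which $\rho$ avoids $T_0^i$'' is not enough.

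The paper closes this gap with an additional \emph{merging} property (Lemma~\ref{lem:breakphelper}): the bridging position $K$ is chosen so that, besides $\rho$ avoiding $T_0^i$ from $K$ on, for every $l\geq K$ there exists $m>l$ with $\theta_i(\{q_l\},w[l..m]) = \theta_i(\{q_K\},w[K..m])$. This says that every $\theta_i$-branch that splits off from $\rho$ after $K$ eventually reproduces the full reachable set from $q_K$; once the state of $\rho$ enters $B$ via a $T^i_{h}$-transition, merging forces $B$ to absorb all of $R$, producing a breakpoint. The existence of such a $K$ is obtained by a descending-chain argument on the subsets $\theta_i(\{q_l\},w[l..\cdot])$. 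In the counterexample, $K=0$ violates merging (since $\theta_i(\{p\},a^j)=\{p\}\neq\{p,r\}=\theta_i(\{q\},a^j)$ for all $j\geq 1$), but $K=1$ (so $q_K=p$) satisfies it, and bridging to $(\{p\},\varnothing,0)_i$ does yield infinitely many breakpoints. This refinement of the choice of the bridging position, and the descending-chain argument establishing it, is the missing idea in your plan.
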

\begin{corollary}
  Given TELA $\cal B$ (not necessarily in DNF) with acceptance condition $\beta$ and $N$ states, there exists an equivalent LDBA with $2^{O(|\beta| + N)}$ states.
\end{corollary}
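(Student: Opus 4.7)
The plan is to reduce to the case covered by \Cref{thm:GFMcorrectness} by first bringing $\mathcal{B}$ into DNF and then applying one of the two breakpoint-based limit-determinization constructions. The key observation is that converting the acceptance condition to DNF does not change the state space, so the size blow-up from that step sits entirely inside the acceptance formula.

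First, I would transform $\mathcal{B} = (Q,\Sigma,\delta,I,\beta)$ into an equivalent TELA $\mathcal{A} = (Q,\Sigma,\delta,I,\alpha)$ in DNF as described after \Cref{def:dnf}: by rewriting $\beta$ using standard Boolean identities together with $\finset(T_1) \land \finset(T_2) \equiv_\delta \finset(T_1 \cup T_2)$ and by appending $\land\,\infset(\delta)$ where necessary, one obtains an equivalent formula $\alpha = \bigvee_{1\leq i \leq m} \alpha_i$ with $\alpha_i = \finset(T_0^i) \land \bigwedge_{1 \leq j \leq k_i} \infset(T_j^i)$ of total size $|\alpha| \leq 2^{O(|\beta|)}$. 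In particular the number of disjuncts satisfies $m \leq 2^{O(|\beta|)}$ and $k := \max_i k_i \leq 2^{O(|\beta|)}$, and the state count of $\mathcal{A}$ is still $N$.

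Next, I would invoke \Cref{thm:GFMcorrectness} on $\mathcal{A}$ to obtain $\ld_{\mathcal{A}}$. This is a syntactically limit-deterministic TELA whose acceptance is a single $\infset$-atom, hence an LDBA, with $\lang(\ld_{\mathcal{A}}) = \lang(\mathcal{A}) = \lang(\mathcal{B})$. The size bound from the theorem gives
\[
 |\ld_{\mathcal{A}}| \;=\; O(N + 3^N \cdot m \cdot k) \;=\; O\bigl(N + 3^N \cdot 2^{O(|\beta|)}\bigr).
\]
Using $3^N = 2^{O(N)}$ this simplifies to $2^{O(N + |\beta|)}$, as required. (Alternatively $\gfm_{\mathcal{A}}$ works too, giving the same asymptotic bound via $|\alpha|^2 \cdot 3^N \leq 2^{O(|\beta|)} \cdot 2^{O(N)}$.)

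There is no real obstacle here; the only point to watch is that the DNF conversion is applied \emph{only} to the acceptance formula, so the exponential factor $2^{O(|\beta|)}$ introduced by that step multiplies the $3^N$ factor from the breakpoint construction rather than being raised to the $N$-th power. This is exactly why the combined bound stays single-exponential in $N + |\beta|$ instead of becoming double-exponential.
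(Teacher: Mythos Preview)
Your proposal is correct and follows exactly the intended route: the paper states this as an immediate corollary of \Cref{thm:GFMcorrectness}, and your argument spells out precisely the two steps the paper leaves implicit—first normalize the acceptance formula to DNF (affecting only $|\alpha|$, not the state count), then apply the breakpoint construction and collect the bounds $m,k \leq 2^{O(|\beta|)}$ and $3^N = 2^{O(N)}$.
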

\subsection{Probabilistic model checking}
\label{sec:probmc}

We now discuss how these constructions can be used for probabilistic model checking.
First, consider the \emph{qualitative} model checking problem to decide $\prb^{\max}_{\M}(\lang(\mathcal{A})) > 0$, under the assumption that $\mathcal{A}$ is a limit-deterministic TELA.
While NP-hardness follows from the fact that the problem is already hard for deterministic TELA~\cite[Thm. 5.13]{Mueller2019}, we now show that it is also in NP.
Furthermore, it is in P for automata with a fin-less acceptance condition.
This was already known for LDBA~\cite{CourcoubetisY1995}, and our proof uses similar arguments.
\begin{restatable}{proposition}{LDTELAqualitative}
  \label{prop:limitdetqual}
  Deciding $\prb_{\mathcal{M}}^{\max}(\lang(\mathcal{A})) > 0$, given an MDP $\M$ and a limit-deterministic TELA $\mathcal{A}$, is NP-complete. If $\mathcal{A}$ has a fin-less acceptance condition, then the problem is in P.
\end{restatable}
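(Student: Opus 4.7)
The plan is to work with the standard product $\mathcal{P} = \mathcal{M} \times \mathcal{A}$, whose states are pairs $(s,q) \in S \times Q$ and whose transitions fuse the nondeterminism of $\mathcal{M}$ (action choice) with that of $\mathcal{A}$ (automaton-successor choice). A scheduler for $\mathcal{P}$ resolves both. The standard MDP--automaton equivalence then gives $\prb_{\mathcal{M}}^{\max}(\lang(\mathcal{A})) > 0$ iff some end component (EC) $E$ of $\mathcal{P}$ is reachable and its projected set of automaton transitions $T_E \subseteq \delta$ satisfies $T_E \models \alpha$. NP-hardness is inherited from~\cite[Thm.~5.13]{Mueller2019}, since every deterministic TELA is trivially limit-deterministic.

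For NP-membership I would take $E$ itself as the certificate. It has polynomial size in $|\mathcal{P}|$, and in polynomial time one verifies that (i) $E$ is an EC of $\mathcal{P}$, (ii) $T_E \models \alpha$ (a straightforward recursive evaluation of the Boolean formula on the set $T_E$), and (iii) $E$ is reachable in $\mathcal{P}$ from some state in $\{s_0\} \times I$. Soundness is witnessed by the standard scheduler that first steers into $E$ with positive probability along the reaching path and then, inside $E$, plays a ``fair'' strategy under which almost surely every transition of $E$ is taken infinitely often, producing a projected run $\rho$ with $\inf(\rho) = T_E \models \alpha$.

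For the fin-less subcase, the key observation is monotonicity: in a formula without $\finset$ atoms, $T \subseteq T'$ and $T \models \alpha$ imply $T' \models \alpha$. Hence if any EC is accepting, some MEC of $\mathcal{P}$ containing it is accepting as well, so it suffices to compute the MECs of $\mathcal{P}$ — polynomial by standard algorithms — evaluate $T_E \models \alpha$ on each, and test reachability of an accepting MEC. Limit-determinism interacts cleanly: condition~1 of~\Cref{def:limitdet} forbids transitions from $Q_D$ to $Q_N$, so strong connectedness forces every MEC of $\mathcal{P}$ to lie entirely in $S \times Q_N$ or entirely in $S \times Q_D$, and MECs of the former type cannot satisfy $\alpha$ by condition~3, so only deterministic MECs matter.

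The main obstacle I foresee is controlling the interaction between scheduler nondeterminism and automaton nondeterminism inside an EC. What makes both arguments go through is precisely limit-determinism: because the infinitely visited automaton transitions of any accepting run sit in $Q_D \times \Sigma \times Q_D$, no online resolution of automaton nondeterminism is required once an accepting EC has been entered, only along the finite prefix reaching it. This is what keeps the NP certificate genuinely polynomial and makes the MEC-based fin-less algorithm both sound and complete.
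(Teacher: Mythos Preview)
Your approach is essentially the same as the paper's: product MDP, end-component characterization, EC as NP certificate, monotonicity plus MEC enumeration for the fin-less case, and the identical citation for hardness.

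There is, however, a genuine gap in the completeness direction. You write that ``the standard MDP--automaton equivalence'' gives $\prb_{\mathcal{M}}^{\max}(\lang(\mathcal{A})) > 0$ iff $\mathcal{P}$ has a reachable accepting EC, and then only prove the ``$\Leftarrow$'' direction (soundness of the certificate). But this equivalence is \emph{not} standard for nondeterministic $\mathcal{A}$: in general one only has $\prb^{\max}_{\M}(\lang(\mathcal{A})) \geq \prb^{\max}_{\mathcal{P}}(\Pi_{acc})$, and the reverse inequality is exactly the good-for-MDP property, which limit-deterministic automata do not enjoy in general (cf.~\Cref{ex:singnotsuff}). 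The ``$\Rightarrow$'' direction is precisely where limit-determinism must be exploited and is the non-trivial part of the proof. Your final paragraph correctly identifies the obstacle and the key insight (accepting runs are eventually inside $Q_D$), but stops short of an argument.

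What is actually needed, and what the paper does, is this: fix a scheduler $\mathfrak{S}$ on $\M$ with $\Pr^{\mathfrak{S}}_{\M}(\lang(\mathcal{A})) > 0$. For $q \in Q_D$, each $\M$-path $\pi$ from $s$ induces a \emph{unique} product path $p_q(\pi)$ from $(s,q)$, because $\mathcal{A}$ is deterministic on $Q_D$. One then observes that for every $i \geq 0$ and $q \in Q_D$, the limit set $\mdplimit(p_q(\pi[i..]))$ is almost surely an end component of $\mathcal{P}$. Since every accepting run of $\mathcal{A}$ eventually enters $Q_D$, the $\mathfrak{S}$-paths with label in $\lang(\mathcal{A})$ are covered, up to a null set, by the countable union $\bigcup_{q \in Q_D} \bigcup_{\mathcal{E} \text{ accepting}} X_{q,\mathcal{E}}$, where $X_{q,\mathcal{E}}$ collects paths that, after some finite prefix reaching $q$, induce a product path with limit $\mathcal{E}$. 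Finiteness of $Q_D$ and of the set of ECs then yields a single accepting $\mathcal{E}$ with positive measure, hence reachable. Without this step your NP-membership and your fin-less P argument are both incomplete: the certificate exists, but you have not shown it \emph{must} exist when $\prb^{\max}_{\M}(\lang(\mathcal{A})) > 0$.

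A smaller point: your claim that MECs lying in $S \times Q_N$ ``cannot satisfy $\alpha$ by condition~3'' needs one more line. Condition~3 of \Cref{def:limitdet} speaks about accepting \emph{runs}, not about transition sets; you must observe that a reachable accepting EC inside $S \times Q_N$ projects (after prepending a finite reaching path) to an accepting run of $\mathcal{A}$ with $\inf(\rho) \subseteq Q_N \times \Sigma \times Q_N$, contradicting condition~3.
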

Now we show that $\gfm_\mathcal{A}$ is good-for-MDP~\cite{HahnPSSTW2020}.
In order to define this property, we introduce the product of an MDP with a nondeterministic automaton in which, intuitively, the scheduler is forced to resolve the nondeterminism by choosing the next state of the automaton (see~\cite{KleinMBK2014,HahnPSSTW2020}).
We assume that the automaton used to build the product has a single initial state, which holds for $\gfm_\mathcal{A}$.
\begin{definition}
  \label{def:product}
  Given an MDP $\M = (S,s_0,\Act,P,\Sigma,L)$ and TELA $\mathcal{G} = (Q,\Sigma,\delta,\{q_0\},\alpha)$ we define the MDP $\M \times \mathcal{G} = (S \times Q, (s_0,q_0), \Act \times Q, P^{\times}, \Sigma, L^{\times})$ with $L^{\times}((s,q)) = L(s)$ and
  \vskip-3.5ex
  \[ P^{\times}\bigl((s,q),(\alpha,p),(s',q')\bigr) = 
    \begin{cases}
      P(s,\alpha,s') & \text{if } p = q' \text{ and } (q,L(s),q') \in \delta \\
      0 & \text{otherwise}
    \end{cases}\]
\end{definition}
\vskip-0.5ex
\noindent
We define the \emph{accepting paths} $\Pi_{acc}$ of $\M \times \mathcal{G}$ to be:
 \vskip-4ex
 \[\Pi_{acc} = \{(s_0,q_0)\alpha_0(s_1,q_1)\alpha_1 \ldots \in \Paths(\M \times \mathcal{G}) \mid q_0, L(s_0), q_1, L(s_1) \ldots \models \alpha \}\]
\vskip-0.5ex
\noindent
A B\"uchi automaton $\mathcal{G}$ is good-for-MDP (GFM) if $\prb^{\max}_{\M}(\lang(\mathcal{G})) = \prb^{\max}_{\M \times \mathcal{G}}(\Pi_{acc})$ holds for all MDP $\M$~\cite{HahnPSSTW2020}.
 The inequality ``$\geq$'' holds for all automata\cite[Thm.~1]{KleinMBK2014}, but the other direction requires, intuitively, that a scheduler on $\M \times \mathcal{G}$ is able to safely resolve the nondeterminism of the automaton based on the prefix of the run.
 This is trivially satisfied by deterministic automata, but \emph{good-for-games} automata also have this property~\cite{KleinMBK2014}.
 Limit-deterministic B\"uchi automata are not GFM in general, for example, $\ld_{\mathcal{A}}$ may not be (see~\Cref{ex:singnotsuff}).

We fix an arbitrary MDP $\M$ and show that $\prb^{\max}_{\M}(\lang(\mathcal{A})) \leq \prb^{\max}_{\M \times \gfm_\mathcal{A}}(\Pi_{acc})$.
To this end we show that for any finite-memory scheduler $\mathfrak{S}$ on $\M$ we find a scheduler $\mathfrak{S'}$ on $\M \times \gfm_\mathcal{A}$ such that $\Pr^{\mathfrak{S}}_{\M}(\lang(\mathcal{A})) \leq \Pr^{\mathfrak{S}'}_{\M \times \gfm_\mathcal{A}}(\Pi_{acc})$.
The restriction to finite-memory schedulers is allowed because the maximal probability to satisfy an $\omega$-regular property is always attained by such a scheduler~\cite[Secs. 10.6.3 and 10.6.4]{BaierK2008}.
Let $\M_{\mathfrak{S}} \times \mathcal{D}$ be the product of the induced finite Markov chain $M_{\mathfrak{S}}$ with $\mathcal{D} = \bigotimes_{1 \leq i \leq m} \mathcal{D}_i$, where $\mathcal{D}_i = \determin\big(\removeFin(\splitTELAi{\mathcal{A}}{i})\big)$ and ``$\determin$'' is the GBA-determinization procedure from~\cite{ScheweV2012}, which makes $\mathcal{D}$ a deterministic Rabin automaton.
The scheduler $\mathfrak{S}'$ is constructed as follows.
It stays inside the initial component of $\M \times \gfm_{\mathcal{A}}$ and mimics the action chosen by $\mathfrak{S}$ until the corresponding path in $\M_{\mathfrak{S}} \times \mathcal{D}$ reaches an accepting bottom strongly connected component (BSCC) $B$.
This means that the transitions of $\mathcal{D}$ induced by $B$ satisfy one of the Rabin pairs.
The following lemma shows that in this case there exists a state in one of the breakpoint components to which $\mathfrak{S}'$ can safely move.

\begin{restatable}{lemma}{Goodimpliesacc}
  \label{lem:goodimpliesacc}
  Let $\mathfrak{s}$ be a state in an accepting BSCC $B$ of $\M_{\mathfrak{S}} \times \mathcal{D}$ and $\pi_1$ be a finite path that reaches $\mathfrak{s}$ from the initial state of $\M_{\mathfrak{S}} \times \mathcal{D}$.
  Then, there exists $1 \leq i \leq m$ and $Q' \subseteq \theta\bigl(I,L(\pi_1)\bigr)$ such that:
  \[\Pr_{\mathfrak{s}}(\{\pi \mid L(\pi) \text{ is accepted from } (Q',\varnothing,0) \text{ in } \breakp_i\bigl \}) = 1\]
\end{restatable}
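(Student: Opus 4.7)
The plan is to extract the index $i$ and the set $Q'$ from the deterministic-automaton state at $\mathfrak{s}$ by exploiting the accepting Rabin pair witnessed on $B$, and then to argue that almost every path in $B$ drives $\breakp_i$ into infinitely many breakpoints when started from $(Q',\varnothing,0)_i$.

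First, I would unpack the definition $\mathcal{D} = \bigotimes_{1 \leq i \leq m} \mathcal{D}_i$ with $\mathcal{D}_i = \determin(\removeFin(\splitTELAi{\mathcal{A}}{i}))$. The disjunctive product construction of \Cref{def:combineTELA} lifts the Rabin conditions disjunctively, so an accepting BSCC $B$ of $\M_{\mathfrak{S}} \times \mathcal{D}$ necessarily satisfies a Rabin pair inherited from a single component $\mathcal{D}_{i^{\star}}$; I fix $i = i^{\star}$. In the Schewe--Varghese construction underlying $\mathcal{D}_i$, an accepting Rabin pair observed throughout $B$ corresponds to a tree-node $\nu$ that is created at some point, stays alive on all of $B$, and whose tracked macrostate witnesses the generalized B\"uchi condition $\bigwedge_{j \leq k_i} \infset(U^i_j)$ of $\removeFin(\splitTELAi{\mathcal{A}}{i})$. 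Letting $Q' \subseteq Q$ be the image, under the canonical projection of $\removeFin(\cdot)$-states back to $Q$, of the macrostate of $\nu$ at the $\mathcal{D}_i$-component of $\mathfrak{s}$, the subset-construction semantics of $\determin(\removeFin(\cdot))$ yields $Q' \subseteq \theta_i(I,L(\pi_1)) \subseteq \theta(I,L(\pi_1))$, as required.

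Second, I would show that $\breakp_i$ almost surely accepts $L(\pi)$ from $(Q',\varnothing,0)_i$ for $\pi$ sampled from $\mathfrak{s}$ in $\M_{\mathfrak{S}}$. The deterministic evolution of the $R$-component in $\breakp_i$ on $L(\pi)$ coincides, by a direct induction using $R_2 = \theta_i(R_1, a)$, with the projection to $Q$ of the set tracked by $\nu$'s descendants. Since $B$ is a BSCC of the finite Markov chain $\M_{\mathfrak{S}} \times \mathcal{D}$, almost every infinite path from $\mathfrak{s}$ visits every transition of $B$ infinitely often, and each $U^i_j$ is hit by some such transition. Combined with the fact that $\nu$ remains alive and that only non-$T_0^i$-transitions contribute to $R$, this forces $R = B_2$ infinitely often for every cyclic value of $l$, producing infinitely many $\delta^{\operatorname{break}}_i$-transitions with probability one.

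The hardest part will be the second step, and in particular the rigorous bookkeeping that ties ``accepting Rabin-pair node in $\mathcal{D}_i$'' to ``every member of $R$ visits each $T^i_l$ between consecutive breakpoints''. I expect to isolate this as a standalone invariant --- essentially, that the deterministic subset evolution from $Q'$ under $\theta_i$ is pointwise dominated by the descendants of the persistent tree-node $\nu$, so that $\nu$'s witnessing of each $\infset(U^i_j)$ translates to acceptance in the Miyano--Hayashi-style breakpoint --- and then to invoke the finite-Markov-chain ergodicity of $B$ to upgrade ``infinitely often along some run'' to ``infinitely often almost surely''. This mirrors the classical correctness arguments for limit-determinization via breakpoints in~\cite{CourcoubetisY1995,HahnLSTZ2015,Blahoudek2018}, adapted to the per-disjunct multi-copy structure used here.
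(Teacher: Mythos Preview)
Your high-level strategy---fixing $i$ via the Rabin pair witnessed on $B$, extracting $Q'$ from the persistent Safra/Schewe-tree node $\nu$, and then arguing that $\breakp_i$ accepts almost surely from $(Q',\varnothing,0)$---is the same as the paper's. The gap is in the bridge you propose between $\nu$ and the breakpoint automaton.

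You assert that the $R$-component of $\breakp_i$ ``coincides'' with the macrostate tracked by $\nu$, and later that the $\theta_i$-evolution from $Q'$ is ``pointwise dominated by the descendants of $\nu$''. Neither holds. In any Safra-style determinization a non-root node's label is, at each step, a (possibly strict) \emph{subset} of what the plain subset construction would give, because states simultaneously reachable from an older sibling are removed in the horizontal-merge step. Hence $\text{label}(\nu) \subseteq \theta_i(Q',\cdot)$, and the inclusion goes the wrong way for your intended transfer of acceptance: knowing that every state in $\nu$'s label eventually sees each $T^i_j$ tells you nothing about the extra states in $R \setminus \text{label}(\nu)$, and these are precisely the states that can block the breakpoint condition $R = B$ in $\breakp_i$. (A side remark: your claim $Q' \subseteq \theta_i(I,L(\pi_1))$ is also too strong---states in $\nu$'s label may have reached the second copy of $\removeFin(\mathcal{A}_i)$ only after traversing some $T_0^i$-transitions in the first copy---but the weaker $Q' \subseteq \theta(I,L(\pi_1))$ that the lemma actually needs does hold.)

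What is actually required---and what the paper isolates as a separate lemma, whose proof it extracts from the internal argument of~\cite[Thm.~1]{ScheweV2012}---is the \emph{equality} $\theta_i(Q', w[j_l..j_{l+1}{-}1]) = Q'$ on well-chosen segments (delimited by sufficiently many ``green'' events of $\nu$), together with the statement that on each such segment every $q' \in Q'$ is reached from some $q \in Q'$ via a $\theta_i$-path hitting \emph{all} sets $T_1^i,\ldots,T_{k_i}^i$. This invariance is then fed into a second standalone lemma giving a sufficient condition for $\breakp_i$ to accept from $(Q',\varnothing,0)$; only then does BSCC ergodicity upgrade the conclusion to ``almost surely''. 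Your proposed invariant captures at best one inclusion, and in the unhelpful direction, so the argument as sketched does not close.
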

The lemma does not hold if we restrict ourselves to singleton $\{q\} \subseteq \theta\bigl(I,L(\pi_1)\bigr)$ (see~\Cref{ex:singnotsuff}).
Hence, restricting $\gfmbridge$ to such transitions (as for $\ldbridge$, see~\Cref{def:gfmdef}) would not guarantee the GFM property.

\begin{example}
  \label{ex:singnotsuff}
  Consider the automaton $\mathcal{A}$ with states $\{\mathtt{a}_i\mathtt{b}_j \mid i,j \in \{1,2\}\} \cup \{\mathtt{b}_i\mathtt{a}_j \mid i,j \in \{1,2\}\}$, where $\mathtt{a}_i\mathtt{b}_j$ has transitions labeled by $a_i$ to $\mathtt{b}_j\mathtt{a}_1$ and $\mathtt{b}_j\mathtt{a}_2$.
  Transitions of states $\mathtt{b}_i\mathtt{a}_j$ are defined analogously, and all states in $\{\mathtt{a}_i\mathtt{b}_j \mid i,j \in \{1,2\}\}$ are initial (\Cref{subfig:singexaut} shows the transitions of $\mathtt{a}_1\mathtt{b}_1$).
  All transitions are accepting for a single B\"uchi condition, and hence $\lang(\mathcal{A}) = (\{a_i b_j \mid i,j \in \{1,2\}\})^{\omega}$.

  Consider the Markov chain $\M$ in~\Cref{subfig:singexmc} (transition probabilities are all $1/2$ and ommitted in the figure).
  Clearly, $\Pr_{\M}(\lang(\mathcal{A})) = 1$.
  \Cref{subfig:singexbreak} shows a part of the product of $\M$ with the breakpoint automaton $\breakp$ for $\mathcal{A}$ (\Cref{def:breakp}) starting from $\bigl(a_1,(\{\mathtt{a}_1\mathtt{b}_1\},\varnothing,0)\bigr)$.
  The state $\bigl(b_2,(\{\mathtt{b}_1\mathtt{a}_1,\mathtt{b}_1\mathtt{a}_2\},\varnothing,0)\bigr)$ is a trap state as $\mathtt{b}_1\mathtt{a}_1$ and $\mathtt{b}_1\mathtt{a}_2$ have no $b_2$-transition.
  Hence, $\bigl(a_1,(\{\mathtt{a}_1\mathtt{b}_1\},\varnothing,0)\bigr)$ generates an accepting path with probability at most $1/2$.
  This is true for all states $\bigl(s,(P',\varnothing,0)\bigr)$ of $\M \times \breakp$ where $P'$ is a singleton.
  But using $\ldbridge$ to connect initial and accepting components implies that any accepting path sees such a state.
  Hence, using $\ldbridge$ to define $\gfm_{\mathcal{A}}$ would not guarantee the GFM property.
\end{example}

\begin{figure}[tbp]
  \begin{subfigure}[t]{0.33\textwidth}
  \centering
  \includegraphics{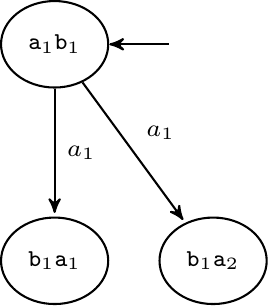}
  \caption{}
  \label{subfig:singexaut}
  \end{subfigure}\hfill
  \begin{subfigure}[t]{0.33\textwidth}
  \centering
  \includegraphics{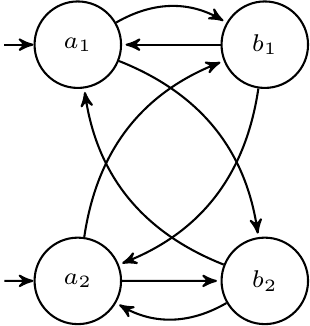}
  \caption{}
  \label{subfig:singexmc}
  \end{subfigure}\hfill
  \begin{subfigure}[t]{0.33\textwidth}
  \centering
  \includegraphics{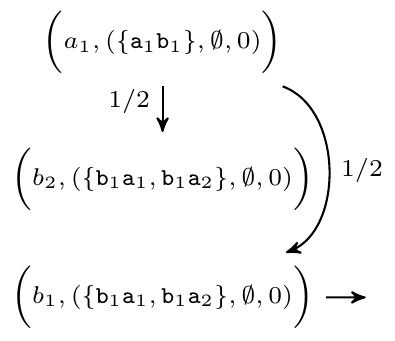}
  \caption{}
  \label{subfig:singexbreak}
  \end{subfigure}
  \caption{Restricting $\gfmbridge$ to transitions with endpoints of the form $(s, (\{q\},\varnothing,0))$ (similar to $\ldbridge$) would not guarantee the GFM property (see~\Cref{ex:singnotsuff}).}
  \label{fig:singletonsinsufficient}
\end{figure}
Using~\Cref{lem:goodimpliesacc} we can define $\S'$ such that the probability accepting paths under $\mathfrak{S}'$ in $\M \times \gfm_{\mathcal{A}}$ is at least as high as that of paths with label in $\lang(\mathcal{A})$ in $\M_{\mathfrak{S}}$.
This is the non-trivial direction of the GFM property.
\begin{restatable}{lemma}{nontrivinclusionGFM}
  \label{lem:nontrivinclusionGFM}
  For every finite-memory scheduler $\mathfrak{S}$ on $\M$, there exists a scheduler $\mathfrak{S}'$ on $\M \times \gfm_{\mathcal{A}}$ such that:
  \[\Pr_{\M \times \gfm_{\mathcal{A}}}^{\mathfrak{S}'}(\Pi_{acc}) \geq \Pr_{\M}^{\mathfrak{S}}(\lang(\mathcal{A}))\]
\end{restatable}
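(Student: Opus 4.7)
The plan is to build $\mathfrak{S}'$ in two phases: while in the subset part of $\gfm_{\mathcal{A}}$ it mimics $\mathfrak{S}$ and lets the automaton component evolve deterministically via $\theta$; once the run in the auxiliary finite Markov chain $\M_{\mathfrak{S}} \times \mathcal{D}$ reaches an accepting BSCC, a single $\gfmbridge$-transition supplied by Lemma~\ref{lem:goodimpliesacc} moves the automaton into a breakpoint component, after which $\breakp_i$ is deterministic and almost surely yields an accepting path. Since $\mathcal{D}$ is equivalent to $\mathcal{A}$ and deterministic, $\Pr_{\M}^{\mathfrak{S}}(\lang(\mathcal{A}))$ equals $\sum_B \Pr_{\M_{\mathfrak{S}} \times \mathcal{D}}(\finally B)$ with $B$ ranging over the accepting BSCCs, so it suffices to realize, for each such $B$, accepting paths in $\M \times \gfm_{\mathcal{A}}$ with probability at least $\Pr_{\M_{\mathfrak{S}} \times \mathcal{D}}(\finally B)$.

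Concretely, I would pick one representative $\mathfrak{s}_B \in B$ per accepting BSCC and endow $\mathfrak{S}'$ with finite memory consisting of the memory of $\mathfrak{S}$, the current state of $\mathcal{D}$, and a Boolean flag recording whether the bridge has already been used. In the subset component, $\mathfrak{S}'$ at $(s, P)$ plays the $\M$-action prescribed by $\mathfrak{S}$ and forces the automaton to $\theta(P, L(s))$, while privately updating the tracked $\mathcal{D}$-state. The first time the tracked $(s, d)$ coincides with some $\mathfrak{s}_B$, I invoke Lemma~\ref{lem:goodimpliesacc} on the induced prefix $\pi_1$ to obtain $i$ and $Q' \subseteq \theta(I, L(\pi_1))$; since $P$ has been kept synchronised with $\theta(I, \cdot)$ along the prefix, one has $\theta(P, L(s)) = \theta(I, L(\pi_1))$, so the transition $(P, L(s), (Q', \varnothing, 0)_i)$ lies in $\gfmbridge$ and is a legal move for $\mathfrak{S}'$. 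The flag is then set permanently; from this point on $\mathfrak{S}'$ still copies the $\M$-actions of $\mathfrak{S}$, and because $\breakp_i$ is deterministic the automaton component is uniquely forced.

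For the probability estimate, note that before the bridge the $(s, d)$-projection of the process induced by $\mathfrak{S}'$ in $\M \times \gfm_{\mathcal{A}}$ is distributed exactly like $\M_{\mathfrak{S}} \times \mathcal{D}$; in particular, for every accepting BSCC $B$ the event ``$\mathfrak{s}_B$ is eventually reached'' has probability $\Pr_{\M_{\mathfrak{S}} \times \mathcal{D}}(\finally B)$ under $\mathfrak{S}'$, and on this event the bridge is performed. Conditioning on bridging at $\mathfrak{s}_B$, the post-bridge $\M$-dynamics still coincide with those of $\mathfrak{S}$, so the distribution of the continuing label sequence starting at $\mathfrak{s}_B$ agrees with the one in $\M_{\mathfrak{S}} \times \mathcal{D}$ from $\mathfrak{s}_B$; Lemma~\ref{lem:goodimpliesacc} then yields that this label sequence is accepted by $\breakp_i$ from $(Q', \varnothing, 0)$ almost surely, and determinism of $\breakp_i$ turns this into membership in $\Pi_{acc}$ almost surely. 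Summing over the accepting BSCCs gives the claimed inequality. I expect the main obstacle to be the stochastic-equivalence bookkeeping in this final step: one has to verify carefully that the almost-sure acceptance guarantee, which Lemma~\ref{lem:goodimpliesacc} phrases inside $\M_{\mathfrak{S}} \times \mathcal{D}$, really transfers to the post-bridge behaviour in $\M \times \gfm_{\mathcal{A}}$ under $\mathfrak{S}'$; this works precisely because $\breakp_i$ reads only labels and $\mathfrak{S}'$'s $\M$-action choices depend on the memory of $\mathfrak{S}$ rather than on the current automaton component after bridging, so the labels and hence the breakpoint evolution inherit the correct distribution.
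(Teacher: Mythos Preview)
Your proposal is correct and follows essentially the same approach as the paper: stay in the subset component while simulating $\mathfrak{S}$ and tracking the state of $\M_{\mathfrak{S}} \times \mathcal{D}$, then invoke Lemma~\ref{lem:goodimpliesacc} to bridge into a breakpoint component once an accepting BSCC is reached, and conclude via the equality between $\Pr_{\M}^{\mathfrak{S}}(\lang(\mathcal{A}))$ and the probability of reaching an accepting BSCC in $\M_{\mathfrak{S}} \times \mathcal{D}$. The only cosmetic differences are that the paper bridges upon \emph{first entry} into the BSCC rather than waiting for a designated representative $\mathfrak{s}_B$, and it does not insist on $\mathfrak{S}'$ being finite-memory (your finite-memory claim is fine but relies on the fact, implicit in the proof of Lemma~\ref{lem:goodimpliesacc}, that $Q'$ and $i$ can be chosen depending only on the $\mathcal{D}$-component of $\mathfrak{s}_B$ and the BSCC $B$, not on the particular prefix $\pi_1$).
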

\begin{restatable}{proposition}{GFMproperty}
  \label{thm:gfmprop}
  The automaton $\gfm_{\mathcal{A}}$ is \emph{good-for-MDP}.
\end{restatable}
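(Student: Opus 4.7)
The plan is to verify the two inequalities $\prb^{\max}_{\M}(\lang(\gfm_{\mathcal{A}})) \geq \prb^{\max}_{\M \times \gfm_{\mathcal{A}}}(\Pi_{acc})$ and $\prb^{\max}_{\M}(\lang(\gfm_{\mathcal{A}})) \leq \prb^{\max}_{\M \times \gfm_{\mathcal{A}}}(\Pi_{acc})$ for an arbitrary MDP $\M$, since both together give the GFM property by definition. The direction ``$\geq$'' is an immediate consequence of~\cite[Thm.~1]{KleinMBK2014}, which asserts that this inequality holds for every nondeterministic $\omega$-automaton: any scheduler $\mathfrak{S}'$ for $\M \times \gfm_{\mathcal{A}}$ projects onto a scheduler $\mathfrak{S}$ on $\M$ whose accepting paths have labels in $\lang(\gfm_{\mathcal{A}})$, and the projection preserves probabilities.

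For the non-trivial ``$\leq$'' direction, the plan is to leverage Lemma~\ref{lem:nontrivinclusionGFM}. Since the maximal probability of an $\omega$-regular event in $\M$ is attained by a finite-memory scheduler~\cite[Chapter~10]{BaierK2008}, it suffices to take the supremum over such schedulers. For each finite-memory $\mathfrak{S}$, Lemma~\ref{lem:nontrivinclusionGFM} supplies $\mathfrak{S}'$ on $\M \times \gfm_{\mathcal{A}}$ with $\Pr^{\mathfrak{S}'}_{\M \times \gfm_{\mathcal{A}}}(\Pi_{acc}) \geq \Pr^{\mathfrak{S}}_{\M}(\lang(\mathcal{A}))$. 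Taking the supremum over all finite-memory schedulers on the right, and invoking the identity $\lang(\mathcal{A}) = \lang(\gfm_{\mathcal{A}})$ established in Theorem~\ref{thm:GFMcorrectness}, yields $\prb^{\max}_{\M \times \gfm_{\mathcal{A}}}(\Pi_{acc}) \geq \prb^{\max}_{\M}(\lang(\gfm_{\mathcal{A}}))$, which closes the chain.

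Thus the proof of the proposition reduces to a short assembly: apply Theorem~\ref{thm:GFMcorrectness} for the language equivalence, apply~\cite[Thm.~1]{KleinMBK2014} for one direction, and apply Lemma~\ref{lem:nontrivinclusionGFM} together with the finite-memory sufficiency for the other. The real technical work is already packaged into Lemma~\ref{lem:nontrivinclusionGFM}, whose scheduler construction follows $\mathfrak{S}$ inside the initial subset-component until the simulated path in $\M_{\mathfrak{S}} \times \mathcal{D}$ enters an accepting BSCC, then uses Lemma~\ref{lem:goodimpliesacc} to move via a $\gfmbridge$-transition into a breakpoint copy whose state $(Q',\varnothing,0)_i$ is accepted almost surely. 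The conceptual hurdle---which was already overcome in the design of $\gfmbridge$ in Definition~\ref{def:gfmdef}---is that $Q'$ must in general be a proper subset of $\theta(I,L(\pi_1))$ rather than a singleton, as Example~\ref{ex:singnotsuff} demonstrates; at the level of the present proposition this is inherited for free.
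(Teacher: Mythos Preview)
Your proposal is correct and follows essentially the same approach as the paper: the paper's proof likewise reduces the GFM property to the two inequalities, invokes \cite[Thm.~1]{KleinMBK2014} for the easy direction and Lemma~\ref{lem:nontrivinclusionGFM} (combined with finite-memory sufficiency) for the non-trivial one. Your explicit appeal to the language equivalence $\lang(\mathcal{A}) = \lang(\gfm_{\mathcal{A}})$ from Theorem~\ref{thm:GFMcorrectness} is a small clarification the paper leaves implicit, and your third paragraph's summary of the internals of Lemma~\ref{lem:nontrivinclusionGFM} is accurate but not needed at this level.
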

\noindent To compute $\prb_{\M}^{\max}(\lang(\mathcal{B}))$ one can translate $\mathcal{B}$ into an equivalent TELA $\mathcal{A}$ in DNF, then construct $\gfm_{\mathcal{A}}$ and finally compute $\prb^{\max}_{\M \times \gfm_{\mathcal{A}}}(\Pi_{\operatorname{acc}})$.
The automaton $\gfm_{\mathcal{A}}$ is single-exponential in the size of $\mathcal{B}$ by~\Cref{thm:GFMcorrectness}, and $\prb^{\max}_{\M \times \gfm_{\mathcal{A}}}(\Pi_{\operatorname{acc}})$ can be computed in polynomial time in the size of $\M \times \gfm_{\mathcal{A}}$\cite[Thm. 10.127]{BaierK2008}. %
\begin{theorem}
  \label{cor:quantsingleexp}
  Given a TELA $\mathcal{B}$ (not necessarily in DNF) and an MDP $\M$, the value $\prb^{\max}_{\M}(\lang(\mathcal{B}))$ can be computed in single-exponential time.
\end{theorem}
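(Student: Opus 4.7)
The plan is to follow exactly the three-step pipeline sketched in the paragraph preceding the theorem: first bring $\mathcal{B}$ into DNF, then build $\gfm_{\mathcal{A}}$, and finally do polynomial-time MDP analysis on the product. The correctness of the value computed is immediate from the GFM property (\Cref{thm:gfmprop}) combined with standard facts about MDPs, so the real work is the size/time accounting to verify that all steps stay within a single exponential in $|\mathcal{B}|+|\M|$.

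First, I would transform $\mathcal{B} = (Q,\Sigma,\delta,I,\beta)$ with $N$ states into an equivalent TELA $\mathcal{A}$ in DNF as described after \Cref{def:dnf}: this only rewrites the acceptance formula $\beta$ into an equivalent DNF formula $\alpha$ with $|\alpha| \le 2^{|\beta|}$ while keeping the same transition structure, so $\mathcal{A}$ has $N$ states and acceptance of size at most $2^{|\beta|}$. Next, I would apply \Cref{def:gfmdef} to obtain $\gfm_{\mathcal{A}}$. By \Cref{thm:GFMcorrectness} its number of states is $O(|\alpha|^2 \cdot 3^N) = 2^{O(|\beta|+N)}$, i.e.\ still single-exponential in $|\mathcal{B}|$, and it is a B\"uchi automaton with $\lang(\gfm_{\mathcal{A}}) = \lang(\mathcal{A}) = \lang(\mathcal{B})$.

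Then I would build the product MDP $\M \times \gfm_{\mathcal{A}}$ of \Cref{def:product}, which has $|S|\cdot 2^{O(|\beta|+N)}$ states and can be constructed in time polynomial in that size, hence single-exponential overall. Since $\gfm_{\mathcal{A}}$ is a limit-deterministic B\"uchi automaton (\Cref{thm:GFMcorrectness}), the set $\Pi_{acc}$ is an $\infset$-condition on the product, so $\prb^{\max}_{\M \times \gfm_{\mathcal{A}}}(\Pi_{acc})$ reduces to a maximal reachability problem on the accepting end components of $\M \times \gfm_{\mathcal{A}}$, which is solvable in polynomial time in the size of the product by standard MDP algorithms (cited as \cite[Thm.~10.127]{BaierK2008} in the paragraph above).

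Finally, I would invoke \Cref{thm:gfmprop}, which gives $\prb^{\max}_{\M}(\lang(\mathcal{B})) = \prb^{\max}_{\M}(\lang(\mathcal{A})) = \prb^{\max}_{\M \times \gfm_{\mathcal{A}}}(\Pi_{acc})$, so the computed value is correct. Summing the individual costs yields the single-exponential time bound. The only slightly delicate part of the argument is the first step: one must note that the blow-up of $|\beta| \mapsto |\alpha|$ is at most $2^{|\beta|}$, and that this blow-up is then fed into the $|\alpha|^2$ factor of the $\gfm_{\mathcal{A}}$ bound; because both are exponentials in $|\beta|$ (not stacked exponentials), the composition remains single-exponential. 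No further obstacle arises, since every other step is either a direct application of an earlier result of the paper or a standard polynomial-time MDP routine.
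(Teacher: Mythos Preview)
Your proposal is correct and follows exactly the approach of the paper: the paragraph preceding the theorem already states the pipeline (DNF $\to$ $\gfm_{\mathcal{A}}$ $\to$ product $\to$ polynomial-time MDP analysis via \cite[Thm.~10.127]{BaierK2008}), and you have merely made the size accounting explicit. There is no substantive difference between your argument and the paper's.
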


\section{Experimental evaluation}\label{sec:evaluation}
The product approach combines a sequence of deterministic automata using the disjunctive product.
We introduce the \emph{langcover heuristic}: the automata are ``added'' to the product one by one, but only if their language is not already subsumed by the automaton constructed so far.
This leads to substantially smaller automata in many cases, but is only efficient if checking inclusion for the considered automata types is efficent.
In our case this holds (the automata are deterministic with a disjunction of parity conditions as acceptance), but it is not the case for arbitrary deterministic TELA, or nondeterministic automata.

\paragraph{Implementation.}
We compare the following implementations of the constructions discussed above.\footnote{The source code and data of all experiments are available at~\cite{John2021}.}
\texttt{\spot{}} uses the TELA to GBA translator of \spot{}, simplifies (using \spot's \texttt{postprocessor} with preference \texttt{Small}) and degeneralizes the result and then determinizes using a version of Safra's algorithm \cite{Duret-LutzLFMRX2016, Redziejowski2012}.
The $\removeFin$ function that is used is an optimized version of~\Cref{def:removeFin}.
In \texttt{\approachOne{}}, \texttt{\approachTwo{}} and \texttt{\approachThree{}}, the first step is replaced by the corresonding TELA to GBA construction (using \spot's $\removeFin$).
The product approach (also implemented using the \spot{}-library) is called \texttt{product} and \texttt{product (no langcover)} (without the langcover heuristic).
The intermediate GBA are also simplified.
The construction $\ld_{\mathcal{A}}$ is implemented in \texttt{limit-det.}, using the \spot{}-library and parts of \seminator{}.
We compare it to \texttt{limit-det. via GBA}, which concatenates the TELA to GBA construction of \spot{} with the limit-determinization of \seminator{}.
Similarly, \texttt{good-for-MDP} and \texttt{good-for-MDP via GBA} are the construction $\gfm_{\mathcal{A}}$ applied to $\mathcal{A}$ directly, or to the GBA as constructed by \spot{}.
Both constructions \texttt{via GBA} are in the worst case double-exponential.
No post-processing is applied to any output automaton.

\paragraph{Experiments.}
Computations were performed on a computer with two Intel E5-2680 CPUs with 8 cores each at \(2.70\)\,GHz running Linux.
Each individual experiment was limited to a single core, $15$ GB of memory and 1200 seconds. 
We use versions 2.9.4 of \spot{} (configured to allow $256$ acceptance sets) and 2.0 of \seminator{}.

\begin{table}[btp]
	\caption{Evaluation of benchmarks \emph{random} and \emph{DNF}. 
		Columns ``states'', ``time'' and ``acceptance'' refer to the respective median values, where mem-/timeouts are counted as larger than the rest. Values in brackets refer to the subset of input automata for which at least one determinization needed more than $0.5$ seconds (447 (182) automata for benchmark \emph{random} (\emph{DNF})).}
	\label{tab:randomAutomata}
	\begin{center}
    \resizebox{\linewidth}{!}{
      \setlength{\tabcolsep}{4pt}
		\begin{tabular}{l l l l l l l l l}
			 & \multirow{2}{*}{algorithm} & \multirow{2}{*}{timeouts} & \multirow{2}{*}{memouts} & \multirow{2}{*}{states}  & \multirow{2}{*}{time} & \multirow{2}{*}{acceptance} & \multicolumn{2}{c}{intermediate GBA} \\ 
			 & & & & &  & & states & acceptance \\
			\hline
			\multirow{10}{*}{\rotatebox[origin=c]{90}{\emph{random}}} & \texttt{\spot{}} & 0.5\% & 9.9\%  & 3,414 (59,525) & $<1$ (1.5) & 10 (17) & 71 & 2 \\ 
			& \texttt{\approachOne} & 0.5\% & 15.2\% & 8,639 (291,263) & $<1$ (9.7) & 14 (24) & 109 & 2 \\ 
			& \texttt{\approachTwo} & 0.7\% & 17.8\% & 14,037 (522,758) & $<1$ (21.0) & 14 (24)& 119 & 2 \\ 
			& \texttt{\approachThree} & 1.6\% & 18.7\% & 15,859 (1,024,258)& $<1$ (40.2)& 14 (26) & 116 & 2 \\ 
			& \texttt{product} & 1.3\% & 7.9\% & 3,069 (43,965) & $<1$ (1.2) & 18 (29) \\ 
			& \texttt{product (no langcover)} & 0.7\% & 9.0\% & 3,857 (109,908) & $<1$ (1.1) & 24 (38) \\ \cline{2-9}
			& \texttt{limit-det.} &  0.0\% & 0.0\% & 778 (3,346) & $<1$ ($<1$) & 1 (1)\\
			& \texttt{limit-det. via GBA} & 1.6\% & 0.3\% & 463 (1,556) & $<1$ (1.6)& 1 (1) \\
			& \texttt{good-for-MDP} & {9.3\%} & {13.4\%} & {5,069} (192.558) & {2.0} (139.6) & 1 (1)\\
			& \texttt{good-for-MDP via GBA} & {5.5\%} & {44.0\%} & 71,200 (--) & 836.9 (--) & 1 (--) \\
			\hline
			\hline
			\multirow{2}{*}{\rotatebox[origin=c]{90}{\emph{DNF}}} & \texttt{\spot{}} & 0.4\% & 6.2\% & 5,980 (692,059) & $<1$ (18.3) &  11 (25) & 30 & 3\\
			& \texttt{product} & 0.0\% & 3.8\% & 2,596 (114,243) & $<1$ (4.6) & 13 (24)
		\end{tabular}}
	\end{center}
\end{table}

Our first benchmark set (called \emph{random}) consists of 1000 TELA with $4$ to $50$ states and $8$ sets of transitions $T_1, \ldots, T_8$ used to define the acceptance conditions.
They are generated using \spot's procedure \texttt{random\_graph()} by specifying probabilities such that: a triple $(q,a,q') \in Q \times \Sigma \times Q$ is included in the transition relation ($3/|Q|$) and such that a transition $t$ is included in a set $T_j$ ($0.2$). We use only transition systems that are nondeterministic.
The acceptance condition is generated randomly using \spot's procedure \texttt{acc\_code::random()}.
We transform the acceptance condition to DNF and keep those acceptance conditions whose lengths range between 2 and 21 and consist of at least two disjuncts.
To quantify the amount of nondeterminism, we divide the number of pairs of transitions of the form $(q,a,q_1),(q,a,q_2)$, with $q_1 \neq q_2$, of the automaton by its number of states. 

\Cref{tab:randomAutomata} shows that the \texttt{product} produces smallest deterministic automata overall.
\texttt{\spot{}} produces best results among the algorithms that go via a single GBA.
One reason for this is that after GBA-simplifications of \spot{}, the number of acceptance marks of the intermediate GBA are comparable.
Figure~\ref{fig:SpotVsProduct} (left) compares \texttt{\spot{}} and \texttt{product} and partitions the input automata according to acceptance complexity (measured in the size of their DNF) and amount of nondeterminism. 
Each subset of input automata is of roughly the same size (159-180) (see \Cref{tab:BenchmarksGroupedNondeterminism} in the appendix). 
The graph depicts the median of the ratio (\texttt{product} / \texttt{\spot}) for the measured values. For time- or memouts of \texttt{\spot} (\texttt{product}) we define the ratio as 0 ($\infty$). If both failed, the input is discarded. The number of time- and memouts grows with the amount of nondeterminism and reaches up to 42\%.
The approach \texttt{product} performs better for automata with more nondeterminism and complex acceptance conditions as the results have fewer states and the computation times are smaller compared to \texttt{\spot}. %

The limit-deterministic automata are generally much smaller than the deterministic ones, and \texttt{limit-det via GBA.} performs best in this category.
However, the construction $\ld_{\mathcal{A}}$ (\texttt{limit-det.}) resulted in fewer time- and memouts.

For GFM automata we see that computing $\gfm_{\mathcal{A}}$ directly, rather than first computing a GBA, yields much better results (\texttt{good-for-MDP} vs. \texttt{good-for-MDP via GBA}).
However, the GFM automata suffer from significantly more time- and memouts than the other approaches.
The automata sizes are comparable on average with \spot{}'s determinization (see~\Cref{fig:EvaluationPairwise} in the appendix).
Given their similarity to the pure limit-determinization constructions, and the fact that their acceptance condition is much simpler than for the deterministic automata, we believe that future work on optimizing this construction could make it a competitive alternative for probabilistic model checking using TELA.
\begin{figure}[tbp]
	\centering
	\begin{subfigure}{.58\textwidth}
		\centering
		\includegraphics[scale=0.7]{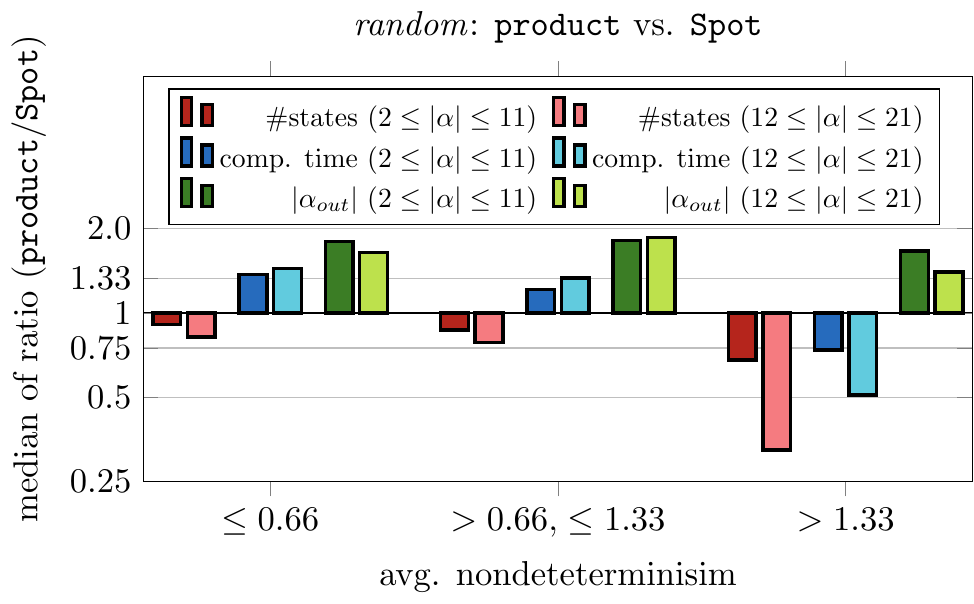}
	\end{subfigure}%
	\begin{subfigure}{.41\textwidth}
		\centering
		\includegraphics[scale=0.7]{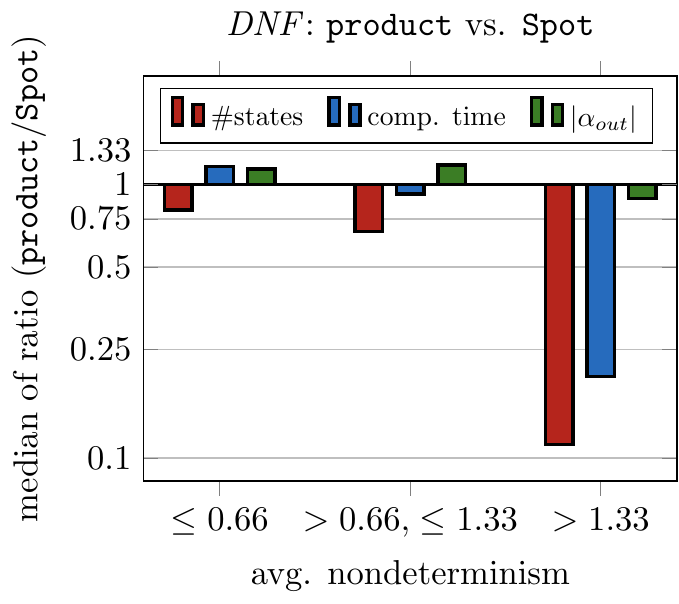}
	\end{subfigure}
	\caption{Comparison of \texttt{\spot} and \texttt{product}, with input automata grouped by the size of the DNF of their acceptance condition and the amount of nondeterminism.}
	\label{fig:SpotVsProduct}
\end{figure}

The second benchmark (called \emph{DNF}) consists of $500$ TELA constructed randomly as above, apart from the acceptance conditions.
They are in DNF with $2$-$3$ disjuncts, with $2$-$3$ $\infset$-atoms and $0$-$1$ $\finset$-atoms each (all different).
Such formulas tend to lead to larger CNF conditions, which benefits the new approaches.
\Cref{fig:SpotVsProduct} (right) shows the median ratio of automata sizes, computation times and acceptance sizes, grouped by the amount of nondeterminism. We do not consider different lengths of acceptance conditions because the subsets of input automata are already relatively small (140-193).
Again, \texttt{product} performs better for automata with more nondeterminism.
\section{Conclusion}
We have introduced several new approaches to determinize and limit-determinize automata under the Emerson-Lei acceptance condition.
The experimental evaluation shows that in particular the product approach performs very well.
Furthermore, we have shown that the complexity of limit-determinizing TELA is single-exponential (in contrast to the double-exponential blow-up for determinization).
One of our constructions produces limit-deterministic good-for-MDP automata, which can be used for quantitative probabilistic verification.

This work leads to several interesting questions.
The presented constructions would benefit from determinization procedures for GBA which trade a general acceptance condition (rather than Rabin or parity) for a more compact state-space of the output.
Similarly, translations from LTL to compact, nondeterministic TELA would allow them to be embedded into (probabilistic) model-checking tools for LTL (a first step in this direction is made in\cite{MajorBFSSZ2019}).
It would be interesting to study, in general, what properties can be naturally encoded directly into nondeterministic TELA.
Another open point is to evaluate the good-for-MDP automata in the context of probabilistic model checking in practice. \\[0.1cm]
\noindent \textbf{Acknowledgments.} We thank David Müller for suggesting to us the problem of determinizing Emerson Lei automata and many discussions on the topic.

\bibliographystyle{splncs04}

\appendix

\section{Proofs for \Cref{sec:viagba}}

\splitConj*
\begin{proof}
  ``$\subseteq$'': Let $\rho$ be a run of $\mathcal{A}$ for $w$ such that $\rho \models \alpha$.
  Then, it follows that $\rho \models \alpha_i$ for some $1 \leq i \leq m$.
  But then $\rho$ is also an accepting run of $\splitTELAi{\mathcal{A}}{i}$ and hence $w \in \lang(\splitTELAi{\mathcal{A}}{i})$.

  ``$\supseteq$'': Let $\rho$ be an accepting run of $\splitTELAi{\mathcal{A}}{i}$ for some $1 \leq i \leq m$.
  Then $\rho \models \alpha_i$, and hence $\rho \models \alpha$.
  It follows that $\rho$ is an accepting run of $\mathcal{A}$.
\end{proof}

\begin{figure}
  \begin{subfigure}[t]{0.33\textwidth}
  \centering
  \includegraphics{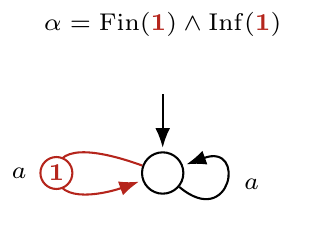}
  \caption{$\mathcal{A}$}
  \label{subfig:splitconj1}
  \end{subfigure}\hfill
  \begin{subfigure}[t]{0.33\textwidth}
  \centering
  \includegraphics{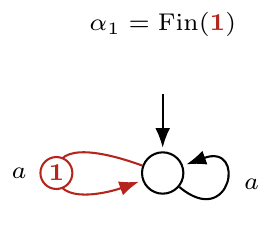}
  \caption{$\mathcal{A}_1$}
  \label{subfig:splitconj1}
  \end{subfigure}\hfill
  \begin{subfigure}[t]{0.33\textwidth}
  \centering
  \includegraphics{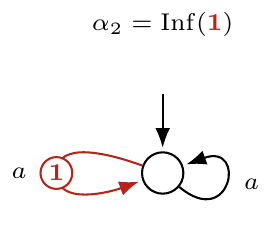}
  \caption{$\mathcal{A}_2$}
  \label{subfig:splitconj1}
  \end{subfigure}\hfill
  \caption{An example showing that the analogue of~\Cref{lem:splitAcceptance} for conjunction and intersection is not correct. In other words, the intersection of languages of automata one gets by splitting along a top-level conjunction is not necessarily the language of the original automaton. In this example we have: $\lang(\mathcal{A}) = \varnothing$ and $\lang(\mathcal{A}_1) = \lang(\mathcal{A}_2) = \{a^{\omega}\}$. But then $\lang(\mathcal{A}) \neq \lang(\mathcal{A}_1) \cap \lang(\mathcal{A}_2)$.}
  \label{fig:splitconj}
\end{figure}

\begin{proposition}
  Let $\mathcal{A}_i = (Q_i, \Sigma, \delta_i, I_i, \alpha_i)$, with $i \in \{0,1\}$, be two complete TELA with disjoint state-spaces.
  The following statements hold:
  \begin{enumerate}
  \item $\lang(\mathcal{A}_0) \cup \lang(\mathcal{A}_1) = \lang(\mathcal{A}_0 \oplus \mathcal{A}_1)$
  \item Assume that $\alpha_i = \infset(T_1^i) \wedge \ldots \wedge \infset(T_k^i)$, for $i \in \{0,1\}$. Then we have: $\lang(\mathcal{A}_0) \cup \lang(\mathcal{A}_1) = \lang(\mathcal{A}_0 \GBAsum \mathcal{A}_1)$
  \item $\lang(\mathcal{A}_0) \cup \lang(\mathcal{A}_1) = \lang(\mathcal{A}_0 \otimes \mathcal{A}_1)$
  \end{enumerate}
\end{proposition}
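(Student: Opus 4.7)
All three parts share the same high-level structure: identify the runs of the combined automaton with runs (or pairs of runs) of the ingredients, and then check that the acceptance condition is satisfied on the combined side exactly when the original condition is satisfied on the ingredient side.

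For parts (1) and (2), the crucial observation is that $Q_0$ and $Q_1$ are disjoint and no transition of $\delta_0 \cup \delta_1$ crosses between them. Hence every run $\rho$ of $\mathcal{A}_0 \oplus \mathcal{A}_1$ (resp.\ $\mathcal{A}_0 \GBAsum \mathcal{A}_1$) lies entirely in $Q_i^\omega$ for exactly one $i \in \{0,1\}$, and then $\rho$ is literally a run of $\mathcal{A}_i$ on the same word, starting from $I_i$. The two inclusions then reduce to checking acceptance conditions. For the $\oplus$-case, if $\rho$ is the run of $\mathcal{A}_i$, then $\inf(\rho) \subseteq \delta_i$ so $\inf(\rho) \not\models \infset(\delta_{1-i})$ but $\inf(\rho) \models \infset(\delta_i)$ (here we use that $\rho$ is infinite and uses only finitely many transitions of $\delta_i$, so some transition appears infinitely often); hence $\rho \models (\alpha_0 \wedge \infset(\delta_0)) \vee (\alpha_1 \wedge \infset(\delta_1))$ iff $\rho \models \alpha_i$. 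For the $\GBAsum$-case, $\inf(\rho) \cap T_j^{1-i} = \varnothing$ implies $\inf(\rho) \cap (T_j^0 \cup T_j^1) = \inf(\rho) \cap T_j^i$, so $\rho \models \bigwedge_j \infset(T_j^0 \cup T_j^1)$ iff $\rho \models \alpha_i$. Completeness of $\mathcal{A}_0,\mathcal{A}_1$ is not actually required here, it is only used to get both inclusions in part (3).

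For part (3), a run of the product $\mathcal{A}_0 \otimes \mathcal{A}_1$ on a word $w$ is, by the definition of $\delta_\otimes$, a pair $(\rho_0,\rho_1)$ where $\rho_i$ is a run of $\mathcal{A}_i$ on $w$ starting from some $I_i$-state. Completeness of both $\mathcal{A}_0$ and $\mathcal{A}_1$ guarantees that whenever one of them has a run on $w$ the other one does as well, so that a run of $\mathcal{A}_i$ on $w$ can always be extended to a run of the product. The key lemma I would state and prove (by straightforward structural induction on $\alpha_i$) is the following: for any product run $\rho$ with projections $\rho_0,\rho_1$, one has $\inf(\rho) \models \lifted{}{\alpha_i}$ iff $\inf(\rho_i) \models \alpha_i$. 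This follows from the observation that for every set $T \subseteq \delta_i$ appearing inside $\alpha_i$, a product transition lies in $\lifted{}{T}$ exactly when its $i$-th projection lies in $T$, so $\inf(\rho) \cap \lifted{}{T} \neq \varnothing$ iff $\inf(\rho_i) \cap T \neq \varnothing$ (and similarly for the $\finset$-case). Combining this with completeness yields $\rho$ accepted iff $\rho_0 \models \alpha_0$ or $\rho_1 \models \alpha_1$, which gives both inclusions.

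The only mildly delicate point, and the place where the argument could go wrong if one is not careful, is the ``$\supseteq$'' direction in part (3): from an accepting run of, say, $\mathcal{A}_0$ on $w$ one must build an accepting product run, and this needs some run of $\mathcal{A}_1$ on $w$ to pair it with. Completeness of $\mathcal{A}_1$ is exactly what provides such a run, so the assumption that both automata are complete, stated in the proposition, is used precisely here.
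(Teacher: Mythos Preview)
Your proposal is correct and follows essentially the same approach as the paper's proof, with somewhat more explicit detail (the paper dismisses part (1) as ``clear'' and in part (3) simply asserts the equivalence $\rho^* \models \lifted{}{\alpha_i} \Leftrightarrow \rho_i \models \alpha_i$ without spelling out the structural induction you propose). One minor slip: the direction you label ``$\supseteq$'' in part (3) --- building a product run from an accepting run of $\mathcal{A}_0$ --- is actually the ``$\subseteq$'' direction of $\lang(\mathcal{A}_0) \cup \lang(\mathcal{A}_1) \subseteq \lang(\mathcal{A}_0 \otimes \mathcal{A}_1)$, and completeness is needed only for this one direction, not both.
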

\begin{proof}
  1.) This is clear as any accepting run of $\mathcal{A}_i$ (for $i \in \{0,1\}$) can be mapped directly to an accepting run of $\mathcal{A}_0 \oplus \mathcal{A}_1$, and an accepting run $\mathcal{A}_0 \oplus \mathcal{A}_1$ corresponds to an accepting run of one of the automata $\mathcal{A}_0,\mathcal{A}_1$.

  2.) ``$\subseteq$'': Let $\rho$ be an accepting run of $\mathcal{A}_0$ (w.l.o.g).
  Then $\inf(\rho) \cap T_j^0 \neq \varnothing$ holds for all $1 \leq j \leq k$, and hence also $\inf(\rho) \cap (T_j^0 \cup T_j^{1}) \neq \varnothing$.
  But then $\rho$ is also an accepting run of $\mathcal{A}_0 \GBAsum \mathcal{A}_1$.

  ``$\supseteq$'': Let $\rho$ be an accepting run of $\mathcal{A}_0 \GBAsum \mathcal{A}_1$.
  Then $\inf(\rho) \cap (T_j^0 \cup T_j^{1})$ holds for all $1 \leq j \leq k$.
  Recall that $\delta_0 \cap \delta_1 = \varnothing$.
  We have either $\inf(\rho) \subseteq \delta_0$ or $\inf(\rho) \subseteq \delta_1$ and $T_j^i \subseteq \delta_i$ for all $1 \leq j \leq k$ and $i \in \{0,1\}$.
  As a consequence, there exists $i \in \{0,1\}$ such that $\inf(\rho) \cap T_j^i$ for all $1 \leq j \leq k$.
  But then $\rho$ is an accepting run of $\mathcal{A}_i$.
  
  3.) ``$\subseteq$'': Let $\rho = q_0 q_1 \ldots$ be an accepting run of $\mathcal{A}_0$ (w.l.o.g.) for $w$.
  As $\mathcal{A}_1$ is complete, we find a run $\rho^*$ of $\mathcal{A}_0 \otimes \mathcal{A}_1$ for $w$ such that $\rho^* = (q_0,q_0') (q_1,q_1') \ldots$, where $q_0' q_1' \ldots$ is a run of $\mathcal{A}_1$ for $w$.
  As $\rho \models \alpha_0$, it follows that $\rho^* \models \lifted{}{\alpha_0}$ and hence $\rho^*$ is an accepting run of $\mathcal{A}_0 \otimes \mathcal{A}_1$.

  ``$\supseteq$'': Let $\rho^* = (q_0,q_0') (q_1,q_1') \ldots$ be an accepting run of $\mathcal{A}_0 \otimes \mathcal{A}_1$ for $w$ and assume, w.l.o.g., that $\rho^* \models \lifted{}{\alpha_0}$.
  Then $q_0 q_1 $, as a run for $w$, models $\alpha_0$ and hence it is an accepting run of $\mathcal{A}_0$.
\end{proof}

\begin{figure}
	\begin{subfigure}[t]{0.33\textwidth}
		\centering
		\includegraphics{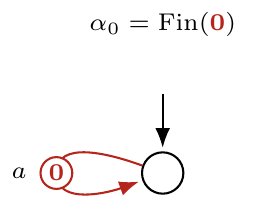}
		\caption{$\mathcal{A}_0$}
	\end{subfigure}\hfill
	\begin{subfigure}[t]{0.33\textwidth}
		\centering
		\includegraphics{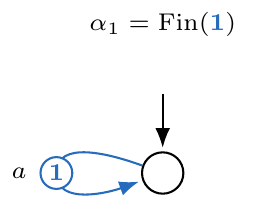}
		\caption{$\mathcal{A}_1$}
	\end{subfigure}\hfill
	\begin{subfigure}[t]{0.33\textwidth}
		\centering
		\includegraphics{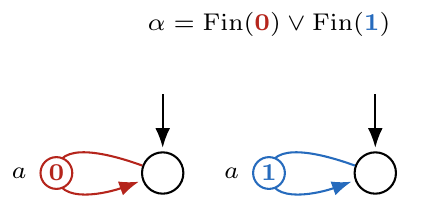}
		\caption{$ $}
		\label{subfig:Fin0v1}
	\end{subfigure}\hfill
	\caption{An example showing why we need to add the conditions $\infset(\delta_0)$ and $\infset(\delta_1)$ when constructing the acceptance condition of $\mathcal{A}_0 \oplus \mathcal{A}_1$. The TELA $\mathcal{A}_0$ and $\mathcal{A}_1$ both accept the empty language. If we unify the state spaces and disjunct the acceptance conditions we get the TELA in Figure~\ref{subfig:Fin0v1}. However, this automaton accepts the language $a^\omega \neq \lang(\mathcal{A}_0) \cup \lang(\mathcal{A}_1)$.}
	\label{fig:SumWithoutInf}
\end{figure}

\removeFinGba*
\begin{proof}
	
	We first prove that $\lang(\mathcal{A}) = \lang(\removeFin(\mathcal{A}))$.
	
	``$\subseteq$'': Let $\rho$ be an accepting run of $\mathcal{A}$ for $u$, and assume that $\rho \models \finset(T^i_0) \land \bigwedge_{1 \leq j \leq k_i} \infset(T^i_j)$ (the $i$'th disjunct of acceptance condition $\alpha$, which is in DNF).
  It follows that there exists a position $K$ after which $\rho$ sees no transitions in $T_0^{i}$.
  We construct an accepting run $\rho'$ of $\removeFin(\mathcal{A})$ for $u$: for the first $K$ positions, it copies $\rho$ in the main copy $Q$.
  Then, it moves to $Q_i$, and continues to simulate the moves of $\rho$.
  From that fact that $\rho$ sees infinitely many transitions in each set $T_j^i$, it follows that $\rho'$ models $\phi_i$ and hence is accepting.
		
  ``$\supseteq$'': Let $\rho'$ be an accepting run of $\removeFin(\mathcal{A})$ for $u$, i.e. $\rho' \models \phi_i$ for some $1 \leq i \leq m$.
  It follows that $\rho'$ eventually moves to $Q_i$, because $\phi_i$ contains at least one $\infset$-atom whose all  transitions are fully included in $Q_i$.
  As no other copy is reachable from $Q_i$, $\rho'$ stays in $Q_i$ therafter.
  Recall that copy $Q_i$ does not contain any transition in $U_0^{i}$.
  Projecting the part of $\rho'$ which is in $Q_i$ onto the corresponding transitions in $Q$ yields a run $\rho$ for $u$ in $\mathcal{A}$.
  From the definition of $\phi_i$ and the fact that $\rho'$ models $\phi_i$ it follows directly that $\rho$ models $\alpha_i$, and hence $\rho$ is accepting.
  
	We now show that $\lang(\removeFin(\mathcal{A})) = \lang(\removeFinGBA(\mathcal{A}))$ by proving that for all runs $\rho$ in $(\delta')^{\omega}$, the following equivalence holds: $\rho \models \alpha' \iff \rho \models \alpha''$.
  We recall that $\alpha' = \bigvee_{1 \leq i \leq m} \phi_i$, whith $\phi_i = \bigwedge_{1 \leq j \leq k_i} \infset(U^i_j)$ and $\alpha'' = \bigwedge_{1\leq j \leq k} \infset(U_j^1 \cup \ldots \cup U_j^m)$, with $k = \max_{i} k_i$ and $U_j^i = \delta_i$ if $k_i < j \leq k$.
	
  First, assume that $\rho \models \alpha'$, which implies that there exists an $1 \leq i \leq m$ such that $\rho \models \infset(U_j^i)$ for all $1 \leq j \leq k_i$.
  It follows that $\inf(\rho) \subseteq \delta_i$ and hence also $\rho \models U_j^i$ for all $k_i < j \leq k$.
  But then clearly $\rho \models \infset(U_j^1 \cup \ldots \cup U_j^m)$ for all $1 \leq j \leq k$ and hence $\rho \models \alpha''$.
  
	Now assume that $\rho \models \alpha''$.
  Then, in particular, $\rho \models \infset(U_1^1 \cup \ldots \cup U_1^m)$, which implies that $\rho \models \infset(U_1^i)$  for some $1 \leq i \leq m$.
  By construction of $\removeFin(\mathcal{A})$ it follows that $\inf(\rho) \subseteq \delta_i$ and $\inf(\rho) \cap \delta_{i'} = \varnothing$, for $i' \neq i$.
  It follows that $\inf(\rho) \cap U_j^{i'} = \varnothing$ for all $1 \leq j \leq k$ and $i' \neq i$.
  The only possibility for $\rho$ to satisfy all conjuncts in $\alpha''$ is to satisfy $\inf(\rho) \cap \infset(U_j^i) \neq \varnothing$ for all $1 \leq j \leq k$.
  But then it follows that $\rho \models \phi_i$ and hence $\rho \models \alpha'$.
\end{proof}

\section{Proofs for \Cref{sec:limitdet}}

We first show that there is a ``canonical'' partition $Q_N,Q_D$ for limit-determinism, which will be used in~\Cref{prop:limitdetcoml,prop:limitdetqual}.
Let $\mathcal{A} = (Q,I,\Sigma,\delta,\alpha)$ be a fixed TELA.
We say that a state $q \in Q$ is deterministic if it has at most one successor in $\delta$ for each symbol, and define:
\[Q_D^* = \{q \in Q \mid \text{ all states } q' \text{ reachable from } q \text{ are deterministic} \}\]
and $Q_N^* = Q  \setminus Q_D^*$.
This partition can be computed in polynomial time using an SCC analysis of $\mathcal{A}$.
\begin{lemma}
  \label{lem:canonicpart}
  $\mathcal{A}$ is limit-deterministic iff the partition $Q_D^*,Q_N^*$ satisfies conditions 1-3 of~\Cref{def:limitdet}.
\end{lemma}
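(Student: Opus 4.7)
The plan is to prove both directions of the equivalence, with the ``if'' direction being immediate since the partition $Q_D^*, Q_N^*$ satisfying conditions 1--3 is, by \Cref{def:limitdet}, precisely what it means for $\mathcal{A}$ to be limit-deterministic. The substantive content is the ``only if'' direction, for which the key idea is that $Q_D^*$ is intuitively the maximal possible choice of deterministic part: any other witnessing partition must be ``contained'' in it.

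The first step I would carry out is the auxiliary inclusion $Q_D \subseteq Q_D^*$ for an arbitrary partition $Q_N, Q_D$ witnessing limit-determinism. The argument is a simple propagation: pick $q \in Q_D$; condition 2 makes $q$ itself deterministic, and condition 1 forces every $a$-successor of $q$ to stay inside $Q_D$. Iterating, every state reachable from $q$ lies in $Q_D$ and is therefore deterministic, which is exactly the requirement for $q \in Q_D^*$. Equivalently, $Q_N^* \subseteq Q_N$.

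With this inclusion in hand, verifying the three conditions for $Q_D^*, Q_N^*$ is a direct unwinding of definitions. Condition 2 is immediate because $q \in Q_D^*$ requires $q$ itself to be deterministic. For condition 1, if $(q,a,q') \in \delta$ with $q \in Q_D^*$, then any state reachable from $q'$ is also reachable from $q$, hence deterministic, so $q' \in Q_D^*$ and no transition escapes $Q_D^*$. For condition 3, the auxiliary inclusion gives $Q_N^* \times \Sigma \times Q_N^* \subseteq Q_N \times \Sigma \times Q_N$, so the condition for the canonical partition follows from the assumed condition for $Q_N, Q_D$ applied to any accepting run $\rho$.

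There is no real obstacle here: once one observes that the canonical partition is the ``greediest'' choice, the whole argument reduces to propagation along transitions. What matters for the sequel is the structural consequence, used in \Cref{prop:limitdetcoml} and \Cref{prop:limitdetqual}, that limit-determinism can be decided by checking conditions 1--3 on this single computable partition, rather than by existentially quantifying over all bipartitions of $Q$.
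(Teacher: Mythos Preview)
Your proposal is correct and follows essentially the same approach as the paper: both establish the inclusion $Q_D \subseteq Q_D^*$ (equivalently $Q_N^* \subseteq Q_N$) via closure of $Q_D$ under transitions, observe that conditions~1 and~2 hold for $Q_D^*, Q_N^*$ by construction, and deduce condition~3 from the inclusion. You spell out the ``by construction'' part for conditions~1 and~2 more explicitly than the paper does, but the structure is identical.
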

\begin{proof}
  The direction from right to left is immediate.
  We now show that if there exists a partition $Q_D,Q_N$ which satisfies 1-3, then so does $Q_D^*, Q_N^*$.
  By construction, $Q_D^*, Q_N^*$ satisfy conditions 1. and 2.
  As every state in $Q_D$ is only allowed to reach deterministic states, it follows that $Q_D \subseteq Q_D^*$ and hence $Q_N^* \subseteq Q_N$.
  But then it follows from the fact that every accepting run of $\mathcal{A}$ eventually leaves $Q_N$ forever, that it also eventually leaves $Q_N^*$ forever.
  This shows that the partition $Q_D^*, Q_N^*$ satisfies condition 3.
\end{proof}

\LimitDetComplexity*
\begin{proof}
  As we have seen above, checking whether $\mathcal{A}$ is limit-deterministic amounts to checking whether the partition $Q_D^*, Q_N^*$ satisfies conditions 1-3 of~\Cref{def:limitdet}.
  Conditions 1-2 hold by construction.
  To check condition 3 one can construct an automaton $\mathcal{N}$ which contains all states and transitions of $Q_N$ and replaces all transitions out of $Q_N$ by a transition to a rejecting trap state (this may need a simple rewrite of the acceptance condition).
  Then, condition 3 holds if and only if $\mathcal{N}$ accepts the empty language, which can be checked in coNP for TELA (this follows from NP-completeness of the non-emptiness problem\cite[Thm. 4.7]{AllenEmersonL1987}).

  For coNP-hardness, we reduce from the emptiness problem of TELA, which is coNP-hard.
  First observe that an automaton $\mathcal{A}$ in which $Q_D^*$ is empty is limit-deterministic if and only if $\lang({\cal A}) = \varnothing$.
  So it suffices to translate an arbitrary TELA $\mathcal{B}$ into a TELA $\mathcal{A}$ in which $Q_D^*$ is empty, and such that $\lang({\cal B}) = \varnothing$ iff $\lang({\cal A}) = \varnothing$.
  To do this we add a nondeterministic SCC to $\mathcal{B}$, a transition from each original state in $\mathcal{B}$ to that SCC, and make sure using the acceptance condition that no run which gets trapped in the new SCC is accepting. 
\end{proof}

\LimitDetApprOne*
\begin{proof}
  The choice of initial component is the only nondeterminism of the resulting automaton.
  Hence we can take $Q_D$ to be the entire state-set and $Q_N = \varnothing$.
\end{proof}

\LDTELAqualitative*
\begin{proof}
  NP-hardness of the problem for general limit-det of deterministic TELA follows directly from NP-hardness for deterministic TELA (see~\cite[Thm. 5.13]{Mueller2019}).
  For the upper-bounds we use the fact discussed above that if $\mathcal{A}$ is limit-deterministic, then we can compute a partition $Q_D,Q_N$ satisfying conditions 1-3 of~\Cref{def:limitdet} in polynomial time.

  \textbf{In NP for limit-det. TELA.}
  We now show that the problem of computing $\prb_{\M}^{\max}(\lang(\mathcal{A})) > 0$ is in NP if $\mathcal{A}$ is limit-deterministic.
  Let $Q_N,Q_D$ be a partition satisfying conditions 1-3 of~\Cref{def:limitdet} and let $\M \times \mathcal{A}$ be the product-MDP defined in~\Cref{def:product}.
  Recall that an \emph{end-component} of an MDP is a non-empty subset of its states $S'$ together with a subset of the enabled actions $T(s) \subseteq \Act(s)$ for each state $s \in S'$ such that the underlying graph is strongly connected and closed under probabilistic transitions~\cite[Definition 10.117]{BaierK2008}.
  An end-component $\cal E$ of $\M \times \mathcal{A}$ naturally induces a set of transitions of $\mathcal{A}$, which we denote by $\mathtt{atrans}({\cal E})$.
  
  \textbf{Claim.} $\prb^{\max}_{\M}(\lang(\mathcal{A})) > 0$ holds iff there exists a reachable end-component $\cal E$ of $\M \times \mathcal{A}$ such that $\mathtt{atrans}({\cal E}) \models \alpha$.

  ``$\Longleftarrow$'': Let $(s,q)$ be a state contained in such an end-component $\cal E$.
  As $\cal E$ is assumed to be reachable, there exists a finite path $\pi = s_0 \alpha_0 s_1 \alpha_1 \ldots s_n \alpha_n s$ through $\M$ and a corresponding path $q_0 \xrightarrow{L(s_0 \ldots s_{n})} q$ through $\mathcal{A}$.
  We start defining a scheduler $\mathfrak{S}_1$ on $\M$ such that it chooses action $\alpha_i$ for all prefixes of length $i$, if $0 \leq i \leq n$ and $\alpha_i$ is enabled.
  Clearly, $\mathfrak{S}_1$ achieves a positive probability to realize the prefix $\pi$.
  As $\cal E$ is an end-component, we can construct a scheduler $\mathfrak{S}$ on $M \times \mathcal{A}$ from state $(s,q)$ such that the set of transitions visited infinitely often is the set of all transitions of $\cal E$, with probability one~\cite[Lemma 10.119]{BaierK2008}.
  $\mathfrak{S}$ induces a scheduler $\mathfrak{S}_2$ on $\M$ from $s$ which satisfies: $\Pr_{\M,s}^{\mathfrak{S}_2}(\{\pi \mid L(\pi) \text{ is accepted from } q \text{ in } \mathcal{A}\}) = 1$.
  Combining schedulers $\mathfrak{S}_1$ and $\mathfrak{S}_2$ yields a scheduler witnessing $\prb^{\max}_{\M}(\lang(\mathcal{A})) > 0$.

  ``$\implies$'': Every path $\pi$ from $s$ in $\M$ induces a unique path $p_q(\pi)$ from $(s,q)$ in $\M \times \mathcal{A}$, if $q \in Q_D$ (assuming that $\mathcal{A}$ is complete).
  We let $\mdplimit(\pi)$ be the pair $(A,T)$ where $A$ is the set of states appearing infinitely often in path $\pi$ and $T : A \to 2^{\Act}$ is the set of actions appearing infinitely often for each of the states in $A$.
  Given a state $q \in Q_D$ and an end-component $\cal E$ of $\M \times \mathcal{A}$, we let 
  \begin{align*}
    X_{q,{\cal E}} = \{ \pi \mid &\text{ there exist } \pi_1\pi_2 \text{ s.t. } \pi = \pi_1 \pi_2, \\
    &I \xrightarrow{L(\pi_1)}_{\mathcal{A}} q \text{ and } \mdplimit(p_q(\pi_2)) = \mathcal{E}\}
  \end{align*}
  Let $\mathfrak{S}$ be a scheduler on $\M$ satisfying $\Pr^{\mathfrak{S}}_{\M}(\lang(\mathcal{A})) > 0$.
  The $\mathfrak{S}$-paths $\pi$ in $\M$ satisfying both $L(\pi) \in \lang(\mathcal{A})$ and $\pi \not\in \bigcup_{q \in Q_D} \{X_{q,\mathcal{E}} \mid \mathtt{atrans}(\mathcal{E}) \models \alpha\}$ form a null-set.
  This is because a path $\pi$ satisfies the following property with probability one under $\mathfrak{S}$: for all $i \geq 0$ and $q \in Q_D$: $\mdplimit(p_q(\pi[i..]))$ forms an end-component.
  It follows that there exists $q \in Q_D$ and an end-component $\cal E$ of $\M \times \mathcal{A}$ satisfying $\mathtt{atrans}(\mathcal{E}) \models \alpha$ such that $\Pr^{\mathfrak{S}}_{\M}(X_{q,\mathcal{E}}) > 0$.
  But then $\mathcal{E}$ must also be reachable in $\M \times \mathcal{A}$, which concludes the proof of the claim.
  
  It is a direct consequence now that the problem is in NP, as we can guess the end-component $\cal E$ and then check in polynomial time whether $\mathtt{atrans}({\cal E}) \models \alpha$ holds.

  \textbf{In P for fin-less limit-det. TELA.}
  If $\alpha$ is fin-less, then the existence of and end-component $\cal E$ satisfying $\mathtt{atrans}({\cal E}) \models \alpha$ is equivalent to the existence of a \emph{maximal} end-component satisfying the same property.
  This is because fin-less properties are preserved when adding additional transitions.
  As all maximal end-components can be enumerated in polynomial time (see Algorithm 47 in~\cite{BaierK2008}), it follows by the above claim that $\prb^{\max}_{\M}(\lang(\mathcal{A})) > 0$ can be decided in polynomial time in this case.
\end{proof}

Before giving a proof of \Cref{thm:GFMcorrectness} we prove a few lemmas related to the breakpoint construction.
Let $\mathcal{A} = (Q,q_0,\Sigma,\delta,\alpha)$ be a TELA in DNF, with $\alpha = \bigvee_{1 \leq i \leq m} \alpha_i$ and $\alpha_i = \finset(T^i_0) \land \bigwedge_{1 \leq j \leq k_i} \infset(T^i_j)$ (see~\Cref{def:dnf}) and let the breakpoint automata $\breakp_i$ be defined as in~\Cref{def:breakp}.
As defined in \Cref{sec:limitdet}, we let $\theta$ be the extended subset transition function corresponding to $\delta$.
Also, $\theta_i = \theta|_{\delta \setminus T_i^0}$ is defined as $\theta$ but with $\delta$ restricted to transitions outside of $T_i^0$.
We call a path a $\theta_i$-path if for all its transitions $(p,a,p')$ we have $p' \in \theta_i(\{p\},a)$.
For an infinite word $w = w_0 w_1 w_2 \ldots$ we let $w[j..m] = w_j w_{j+1} \ldots w_m$.

The proofs follow known arguments for the correctness of limit-determinization for B\"uchi and generalized B\"uchi automata (see~\cite[Section 4.2]{CourcoubetisY1995} and~\cite[Sections 7.4 and 7.6]{Blahoudek2018}).
Still, the extension to Emerson Lei requires additional arguments and hence we give the proofs here in our notation for completeness.
The first lemma extends~\cite[Lemma 7.1]{Blahoudek2018}.
\begin{lemma}
  \label{lem:breakphelper}
  For every accepting run $\rho = q_0q_1 \ldots$ of $\mathcal{A}$ for $w = w_0 w_1 \ldots$ there exists an $1 \leq i \leq m$ such that $\rho \models \alpha_i$, and a $K \geq 0$ such that:
  \begin{itemize}
  \item for all $l \geq K$: $(q_l,w_l,q_{l+1}) \notin T_0^i$,
  \item for all $l \geq K$ there exists $m > l$ such that $\theta_i(\{q_{l}\},w[l..m]) = \theta_i(\{q_{K}\},w[K..m])$.
  \end{itemize}
\end{lemma}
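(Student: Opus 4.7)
My plan is to prove the lemma in three stages. First, I would use that $\rho$ is accepting to pick an index $i$ with $\rho \models \alpha_i$; since $\alpha_i$ in the assumed DNF form contains $\finset(T_0^i)$ as a conjunct, the run $\rho$ uses only finitely many transitions from $T_0^i$, and any $K_0$ past the last such transition witnesses the first bullet for every $K \geq K_0$.

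Second, to prepare for the second bullet, I would refine the choice of $K$. For each $l \geq K_0$ define $M_l = \bigcup_{m \geq l} \theta_i(\{q_l\}, w[l..m])$, the forward $\theta_i$-orbit of $q_l$ along $w[l..\infty)$. Because $q_{l+1} \in \theta_i(\{q_l\}, w_l)$, monotonicity of $\theta_i$ gives $M_{l+1} \subseteq M_l$, so the non-increasing sequence $(M_l)_l$ of subsets of the finite set $Q$ stabilizes at some value $M^*$. I would pick $K \geq K_0$ such that $M_l = M^*$ for all $l \geq K$; this is the crucial structural input for stage three.

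Now fix $l \geq K$ and write $A_m = \theta_i(\{q_l\}, w[l..m])$ and $B_m = \theta_i(\{q_K\}, w[K..m])$. Monotonicity together with $q_l \in B_l$ gives $A_m \subseteq B_m$ for all $m$; moreover both sequences evolve deterministically under the same letters via the subset transition, so once $A_m = B_m$ at any position the equality propagates. Thus it is enough to produce one $m > l$ with $A_m = B_m$. As an auxiliary fact I would show that $\{m : p \in A_m\}$ is infinite for every $p \in M^*$: if not, set $l' = 1 + \max\{m : p \in A_m\} \geq K$; then $p \in M_{l'} = M^*$, so $p \in \theta_i(\{q_{l'}\}, w[l'..m_1])$ for some $m_1 \geq l'$, and since $q_{l'} \in A_{l'}$ monotonicity delivers $p \in A_{m_1}$ with $m_1 \geq l' > \max\{m : p \in A_m\}$, a contradiction.

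The final step is a Ramsey-style argument on the pair trajectory $(A_m, B_m) \in 2^{M^*} \times 2^{M^*}$. Assuming $A_m \subsetneq B_m$ for all $m$, pigeonhole gives infinitely many positions where $(A_m, B_m)$ equals a fixed $(A, B)$ with $A \subsetneq B$; coloring pairs of such positions by the induced subset transition $S \mapsto \theta_i(S, w[m_1..m_2])$ (of which there are finitely many) and applying Ramsey's theorem for pairs extracts an infinite monochromatic subsequence realizing a common idempotent transition $f^*$, under which both $A$ and $B$ are fixed. Any $p \in B \setminus A$ lies in $M^*$, so by the auxiliary fact $p$ lies in $A_m$ infinitely often; reapplying Ramsey inside this infinite set yields a second stable value for the $A$-coordinate that contains $p$, and reconciling the two Ramsey extractions with the dynamics forces $p$ into the original stable value, contradicting $p \notin A$. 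The main technical obstacle is precisely this synchronization of two Ramsey extractions: aligning them on a common subsequence requires exploiting both the stabilization $M_l = M^*$ and the deterministic finite-state evolution in $2^{M^*}$, while the remaining bookkeeping is routine.
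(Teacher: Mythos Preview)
Your first two stages are fine (modulo harmless off-by-one issues in the auxiliary fact), but the third stage is where the proposal breaks down, and it is also where you diverge most sharply from the paper.

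\textbf{The gap.} You reduce the second bullet to showing $A_m = B_m$ for some $m>l$, and then attempt a Ramsey argument. The extraction of an idempotent $f^{*}$ with $f^{*}(A)=A$ and $f^{*}(B)=B$ is fine, and your auxiliary fact correctly gives that any $p\in B\setminus A\subseteq M^{*}$ appears in $A_m$ infinitely often. But the positions where $p\in A_m$ need not lie in the monochromatic set extracted by the first Ramsey application (if they did, you would immediately get $p\in A$, a contradiction). Your proposal then invokes a \emph{second} Ramsey extraction and claims that ``reconciling the two Ramsey extractions with the dynamics forces $p$ into the original stable value''. This reconciliation is precisely what is not shown: the second extraction yields a different idempotent $g^{*}$ and a different recurrent value $A'\ni p$, and nothing in your outline explains why $A'$ and $A$ must coincide, or why $p$ is pushed into $A$. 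You flag this yourself as ``the main technical obstacle'', but as written it is a genuine gap, not routine bookkeeping. In addition, your whole third stage is predicated on the specific choice of $K$ as the stabilization point of $(M_l)_l$; even if a completion exists, you have not established that \emph{this} $K$ witnesses the second bullet rather than merely some $K'\geq K$.

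\textbf{What the paper does instead.} The paper avoids all of this machinery. After fixing $i$ and $K_1$ as you do, it negates the second bullet: for every $K_2\geq K_1$ there is some $l_1>K_2$ such that for all $m>l_1$ one has $\theta_i(\{q_{l_1}\},w[l_1..m])\subsetneq \theta_i(\{q_{K_2}\},w[K_2..m])$ (the strictness comes from $q_{l_1}\in\theta_i(\{q_{K_2}\},w[K_2..l_1{-}1])$, which gives the inclusion, together with the assumed inequality). Applying the same hypothesis with $K_2$ replaced by $l_1$ yields $l_2>l_1$ with $\theta_i(\{q_{l_2}\},w[l_2..m])\subsetneq \theta_i(\{q_{l_1}\},w[l_1..m])$ for all $m>l_2$, and so on. For any fixed $m>l_{|Q|+1}$ this produces a strictly decreasing chain of subsets of $Q$ of length exceeding $|Q|$, which is impossible. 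That is the entire argument. No stabilization of $M_l$, no Ramsey, no synchronization; just an elementary infinite-descent on subsets. Your stages one and two already contain the ingredients needed to run this argument directly.
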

\begin{proof}
  As $\rho$ is accepting there must exist an $1 \leq i \leq m$ such that $\rho \models \alpha_i$, which implies $\rho \models \finset(T_0^i)$.
  Fix such an $i$.
  Then, the existence of a $K$ (let us call it $K_1$) satisfying the first condition follows directly.
  Fix such a $K_1$.

  Suppose, for contradiction, that for all $K_2 \geq K_1$ there exists $l_1 \geq K_2$ such that for all $m > l_1$ we have $\theta_i(\{q_{l_1}\},w[l_1..m]) \neq \theta_i(\{q_{K_2}\},w[K_2..m])$.
  Clearly then $\theta_i(\{q_{l_1}\},w[l_1..m]) \subset \theta_i(\{q_{K_2}\},w[K_2..m])$ as $q_{l_1} \in \theta_i(\{q_{K_2}\},w[K_2..{l_1{-}1}])$.
  Applying the same argument lets us find $l_2 \geq l_1$ such that for all $m > l_2$ we have $\theta_i(\{q_{l_2}\},w[l_2..m]) \subset \theta_i(\{q_{l_1}\},w[l_1..m])$.
  Iterating this argument lets us construct an infinitely descending chain, which is impossible as all considered sets are finite.

  It follows that there exists a $K_2 \geq K_1$ satisfying the second property, which concludes the proof.
\end{proof}

\begin{lemma}
  \label{lem:gfmlangsubset}
  $\lang(\gfm_{\mathcal{A}}) \subseteq \lang(\mathcal{A})$ and $\lang(\ld_{\mathcal{A}}) \subseteq \lang(\mathcal{A})$.
\end{lemma}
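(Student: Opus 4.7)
The plan is to prove both inclusions uniformly by extracting an accepting run of $\mathcal{A}$ for $w$ from any accepting run $\pi$ of $\gfm_{\mathcal{A}}$ or $\ld_{\mathcal{A}}$ on $w$. Since the acceptance condition $\alpha_{\operatorname{acc}} = \infset(\bigcup_i \delta^{\operatorname{break}}_i)$ is supported on transitions that stay inside a single accepting component $\breakp_i$, any accepting $\pi$ must eventually enter some fixed $\breakp_i$, remain there, and visit $\delta^{\operatorname{break}}_i$ infinitely often. Denote the first position at which $\pi$ is inside $\breakp_i$ by $K$ and write the tail of $\pi$ from $K$ as $(R_K,B_K,l_K)(R_{K+1},B_{K+1},l_{K+1})\ldots$.

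The first step is to show that every $q \in R_K$ is the endpoint of an $\mathcal{A}$-run from some initial state for the prefix $w[0..K-1]$. For $\ld_{\mathcal{A}}$ this is immediate: the initial component duplicates $\mathcal{A}$, and $\ldbridge$ contains only actual $\mathcal{A}$-transitions, so the singleton $R_K = \{q\}$ is reached by the prefix run already taken by $\pi$. For $\gfm_{\mathcal{A}}$ an easy induction on the initial-component moves shows that the subset reached after $j$ steps is contained in $\theta(I, w[0..j-1])$; the $\gfmbridge$ condition $R_K \subseteq \theta(\cdot, w_{K-1})$ then yields $R_K \subseteq \theta(I, w[0..K-1])$, so each $q \in R_K$ admits such a prefix run.

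For the infinite suffix I would invoke K\"onig's lemma on the finitely branching DAG whose nodes at level $j \geq K$ are the states in $R_j$ and whose edges $(j,q) \to (j+1, q')$ are the triples with $(q, w_j, q') \in \delta \setminus T_0^i$. Since $R_{j+1} = \theta_i(R_j, w_j)$ is non-empty at every level, this DAG has an infinite forward path $q_K, q_{K+1}, \ldots$, and concatenating it with a prefix run obtained in the previous step gives a run $\rho$ of $\mathcal{A}$ on $w$ that avoids $T_0^i$ after position $K$; in particular $\rho \models \finset(T_0^i)$.

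The main obstacle is to arrange, simultaneously, that $\rho \models \infset(T_l^i)$ for every $1 \leq l \leq k_i$. The key observation is a routine induction on the definition of $B$: between two consecutive breakpoints of $\pi$ with outgoing level $l$, every state in the current $B$ is reachable from the previous breakpoint by a $\theta_i$-path that traverses some transition in $T_l^i$. Since $R = B$ at a breakpoint, the same holds for every state in $R$ at that moment. Because $\pi$ takes breakpoint transitions infinitely often and the level counter cycles through $\{0,\ldots,k_i\}$, each $l \in \{1,\ldots,k_i\}$ appears as an outgoing level infinitely often. I would therefore refine the K\"onig construction: at every breakpoint position $N$ with outgoing level $l$, restrict the tree to those nodes in $R_N$ whose chosen ancestor path back to the previous breakpoint uses at least one $T_l^i$-edge, and extend it between breakpoints by ordinary $\theta_i$-edges. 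The $R = B$ property keeps this refined tree non-empty at every breakpoint while maintaining finite branching, so K\"onig's lemma once more supplies an infinite backbone. The resulting run $\rho$ crosses $T_l^i$ at least once in every interval between level-$l$ breakpoints, so $\rho \models \alpha_i$, proving $w \in \lang(\mathcal{A})$ and establishing both inclusions.
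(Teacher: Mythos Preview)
Your argument follows the same plan as the paper: show that an accepting run must eventually sit inside some $\breakp_i$, observe that the entry state's $R$-component is reachable in $\mathcal{A}$ for the prefix word, and then argue that the accepting breakpoint tail yields an accepting $\mathcal{A}$-suffix satisfying $\alpha_i$. The only difference is that the paper discharges this last step in one line by invoking the known soundness of the standard breakpoint construction for GBA (\cite[Lemma 7.3]{Blahoudek2018}), whereas you unpack it explicitly via the K\"onig's-lemma argument on the $R$-sets, using the $R=B$ invariant at breakpoints to guarantee that every surviving node has an incoming path that hits the current $T_l^i$. Your version is therefore more self-contained, but the underlying proof is the same.
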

\begin{proof}
  We give the argument only for $\gfm_{\mathcal{A}}$, as it is the same for $\ld_{\mathcal{A}}$ with simple modifications.
  Any accepting run of $\gfm_{\mathcal{A}}$ for any $w_0 w_1 \ldots \in \Sigma^{\omega}$ has the form
  \[I \to Q_1 \to \ldots \to Q_k \to (P_0,\varnothing,0)_i \to (P_1,B_1,h_1)_i \to \ldots \]
  where the first part from $I$ to $Q_k$ is a path through the initial component of $\gfm_{\mathcal{A}}$, and the subsequent part is a path through some breakpoint component $\breakp_i$, with $1 \leq i \leq m$.
  As $\breakp_i$ is the standard breakpoint automaton for $\mathcal{A}$ under acceptance $\alpha_i$ and after removing transitions in $T_0^i$, it follows from the soundness of the known construction~\cite[Lemma 7.3]{Blahoudek2018} that there exists a run $\rho$ through $\mathcal{A}$ from some state $q \in P_0$ for the word $w[k{+}1..]$ such that $\rho \models \alpha_i$ and $\rho$ sees no transition in $T_0^i$.
  Furthermore, clearly any state in $P_0$ is reachable from $I$ with a path labeled by $w[0..k]$.
  Concatenating such a path with the run $\rho$ yields an accepting run of $\mathcal{A}$ for $w_0 w_1 \ldots$.
\end{proof}
The last part of the following proof is essentially the proof of~\cite[Lemma 7.2]{Blahoudek2018}.
\GFMcorrectness*
\begin{proof}
  As $\delta_i^{\operatorname{break}} \subseteq \delta_i$ holds for all $i$, and the breakpoint components are deterministic, both $\gfm_{\mathcal{A}}$ and $\ld_{\mathcal{A}}$ are syntactically limit-deterministic.
  The bound $O(2^n + 3^n \, m \, k)$ on the states of $\gfm_{\mathcal{A}}$ follows as the state-space of $\breakp_i$ is bounded by $3^n \cdot k$ (this uses that $B \subset R$ holds for states $(R,B,l)$ of $\breakp_i$, and $l \leq k$) for all $i$, and there are $m$ breakpoint components.
  Additionally, the initial component is a direct subset construction of $\mathcal{A}$, adding another $2^n$ states.
  For $\ld_{\mathcal{A}}$, the initial component is of size $n$, hence it has state-complexity $O(n + 3^n \, m \, k)$.

  It remains to show that $\lang(\ld_{\mathcal{A}}) = \lang(\gfm_{\mathcal{A}}) = \lang(\mathcal{A})$.
  The inclusions in $\lang(\mathcal{A})$ follow by~\Cref{lem:gfmlangsubset}.
  For the other direction, let $\rho = q_0q_1 \ldots$ be an accepting run of $\mathcal{A}$ for $w$.
  By~\Cref{lem:breakphelper} there exist $1 \leq i \leq m$ and $K \geq 0$ such that:
  \begin{itemize}
  \item $\rho \models \alpha_i$
  \item for all $l \geq K$: $(q_l,w_l,q_{l+1}) \notin T_0^i$,
  \item for all $l \geq K$ there exists $m > l$ such that $\theta_i(\{q_{l}\},w[l..m]) = \theta_i(\{q_{K}\},w[K..m])$.
  \end{itemize}
  We define a run of $\gfm_{\mathcal{A}}$ for $w$ as follows.
  For the first $K{-}1$ steps, it remains inside the initial component.
  It will reach a state $P \subseteq Q$ with $q_{K{-}1} \in P$.
  Then it takes the transition $\bigl(P,w[K{-}1],(\{q_{K}\},\varnothing,0)_i\bigr)$, and continues deterministically thereafter in component $\breakp_i$.
  This transition exists as $(q_{K{-}1},w[K{-}1],q_K) \in \delta$, and hence $\{q_K\} \subseteq \theta(P,w[K{-}1])$.
  We argue that the remaining run $(\{q_{K}\},\varnothing,0) (R_1,B_1,h_1) \ldots$ sees infinitely many transitions in $\delta_i^{\operatorname{break}}$.
  For contradiction, suppose that this is not the case.
  Then, for some $N \geq K$ we have: for all $j \geq N: \; B_j \subset R_j$ and $h_j = h_{j+1}$.
  Furthermore, there exists $N_1 > N$ such that $(q_{N_1},w[N_1],q_{{N_1}{+}1}) \in T_{i}^{h_j}$ as $\rho \models \alpha_i$.
  It follows that $q_{{N_1}{+}1} \in B_{{N_1}{+}1}$.
  By the third property above, however, there exists $m > N_{1}{+}1 \geq K$ such that $R_m = \theta_i(\{q_K\},w[K..m]) = \theta_i(\{q_{N_{1}{+}1}\},w[N_1{+}1..m]) \subseteq B_m$.
  This contradicts $B_m \subset R_m$.

  A run for $\ld_{\mathcal{A}}$ can be constructed in the same way, with the only exception that the initial part of the run is set to be $q_0 \ldots q_{K-1}$ (rather than the unique path in the subset-construction for prefix $w[0..{K{-}2}]$).
  In total, this shows $\lang({\cal A}) = \lang({\gfm_{\cal A}})$ and $\lang({\cal A}) = \lang({\ld_{\cal A}})$, which concludes the proof.
\end{proof}

Next, we turn to the proof that $\gfm_{\mathcal{A}}$ is good-for-MDPs.
First, we prove a lemma which gives a sufficient condition for a word to be accepted by $\breakp_i$.

\begin{lemma}
  \label{lem:suffcondbreak}
  Let $1 \leq i \leq m$, $w \in \Sigma^{\omega}$, $0 = j_0 < j_1 < j_2 < \ldots$ be an increasing sequence of natural numbers and $Q' \subseteq Q$ satisfying:
  \begin{itemize}
  \item for all $l \geq 0$: $\theta_i(Q', w[j_l..{j_{l{+}1}{-}1}]) = Q'$ and
  \item for all $l \geq 0$ and $q' \in Q'$ there exists $q \in Q'$ and a finite $\theta_i$-path from $q$ to $q'$ in $\mathcal{A}$ labeled by $w[j_l..j_{l+1}{-}1]$ which hits all transition-sets $T_1^i, \ldots, T_{k_i}^i$.
  \end{itemize}
Then, $w$ is accepted from state $(Q',\varnothing,0)$ in $\breakp_i$.
\end{lemma}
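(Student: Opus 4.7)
The plan is to show that the unique run $(R_n, B_n, l_n)_{n \ge 0}$ of $\breakp_i$ on $w$ starting at $(R_0, B_0, l_0) = (Q', \varnothing, 0)$ takes infinitely many transitions in $\delta_i^{\operatorname{break}}$, which is equivalent to $w$ being accepted from $(Q', \varnothing, 0)$.

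As a preparatory step I would show by induction on $l$ that $R_{j_l} = Q'$ for every $l \ge 0$. The $R$-component evolves deterministically as $R_{n+1} = \theta_i(R_n, w_n)$, and the first hypothesis of the lemma gives $\theta_i(Q', w[j_l..j_{l+1}{-}1]) = Q'$, which propagates the invariant from $R_0 = Q'$.

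The heart of the proof is the following claim: if the $l$-component is constantly equal to some $L \in \{1, \ldots, k_i\}$ throughout the window $[j_l, j_{l+1}]$ (i.e.\ no breakpoint is taken in that window), then a breakpoint is in fact forced at step $j_{l+1}$. To establish this I would unfold the recursion $B_{n+1} = \theta_i(B_n, w_n) \cup \theta_i|_{T_L^i}(R_n, w_n)$ across the window and derive the characterization that $q' \in B_{j_{l+1}}$ whenever $q' \in R_{j_{l+1}}$ admits a $\theta_i$-path labeled $w[j_l..j_{l+1}{-}1]$, starting in $R_{j_l}$, that uses at least one transition of $T_L^i$. The second hypothesis supplies exactly such a predecessor path for every $q' \in Q' = R_{j_{l+1}}$ (the path even hits all of $T_1^i, \ldots, T_{k_i}^i$), whence $B_{j_{l+1}} = R_{j_{l+1}}$ and the transition at step $j_{l+1}{-}1$ is a breakpoint.

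The lemma then follows by contradiction: if only finitely many breakpoint transitions were taken, $l_n$ would stabilize at some value $L^* \in \{0, \ldots, k_i\}$ from some position onward. The cyclic update rule $l' = (l{+}1) \bmod (k_i{+}1)$ combined with a startup argument (applying the same unfolding to $[j_0, j_1]$ with $T_1^i$ to force a first breakpoint out of $l = 0$) ensures $L^*$ can be assumed to lie in $\{1, \ldots, k_i\}$, and the core claim applied to any sufficiently large $l$ produces a further breakpoint, the desired contradiction. The main obstacle I anticipate is the careful bookkeeping in the unfolding of the $B$-recursion to formalize the ``reachability via a $T_L^i$-hitting $\theta_i$-path'' characterization of $B_{j_{l+1}}$; the secondary technical point is showing that the run does not remain trapped at $l = 0$ forever, which is handled by the same style of unfolding specialized to the $T_1^i$-hitting path guaranteed by the second hypothesis in the initial window.
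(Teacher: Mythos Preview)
Your core argument is correct and mirrors the paper's proof closely. Both first observe $R_{j_l} = Q'$ for all $l$; where you unfold the $B$-recursion forward over a window to show $B_{j_{l+1}} = R_{j_{l+1}}$ under the assumption that the level stays fixed at some $L$, the paper picks an arbitrary $q' \in Q' \setminus B_{l+1}$ and traces backward through the window to conclude that no $\theta_i$-path from $Q'$ to $q'$ hits $T_{h_l}^i$. These are contrapositive forms of the same characterization of $B$. The paper phrases the conclusion as the slightly stronger ``every window $[j_l,j_{l+1}]$ contains a breakpoint'' (rather than your ``the level cannot stabilize''), but the underlying content is the same.

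Your proposed handling of the level-$0$ case, however, does not work as written. At level $l = 0$ the $B$-update is $B_{n+1} = \theta_i(B_n, w_n) \cup \theta_i|_{T_0^i}(R_n, w_n)$, and since $\theta_i$ is by definition restricted to $\delta \setminus T_0^i$, the second summand is identically empty. Hence from $(Q', \varnothing, 0)$ with $Q' \neq \varnothing$ the $B$-component stays $\varnothing$ regardless of what $T_1^i$-hitting paths exist, and no breakpoint is ever taken; your ``startup argument'' invoking $T_1^i$ cannot help because the recursion at level $0$ references $T_0^i$, not $T_1^i$. This is not a flaw unique to your write-up: the paper's own proof tacitly assumes $h_l \in \{1, \ldots, k_i\}$ when it appeals to the second hypothesis, so under the literal definition of $\breakp_i$ both arguments (and indeed the lemma itself for nonempty $Q'$) break at level $0$. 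The intended construction almost certainly has the level index cycling only through $\{1, \ldots, k_i\}$, under which your main argument and the paper's both go through and no separate startup argument is needed.
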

\begin{proof}
  As the first component of states in $\breakp_i$ just simulates the subset transition function $\theta_i$, the run from state $(Q',\varnothing,0)$ of $\breakp_i$ for $w$ has the form:
  \[(Q',\varnothing,0) \xrightarrow{w[0..j_1{-}1]} (Q',B_1,h_1) \xrightarrow{w[j_1..j_2{-}1]} (Q',B_2,h_2) \xrightarrow{w[j_2..j_3{-}1]} \ldots\]
  We show that each infix $(Q',B_l,h_l) \xrightarrow{w[j_l..j_{l{+}1}{-}1]} (Q',B_{l{+}1},h_{l{+}1})$ of the run above sees a transition in $\delta_i^{\operatorname{break}}$.
  For contradiction, suppose that for some $l$ this is not the case.
  Then the third component is not changed in any transition of the corresponding infix run. 
  Let
  \[(Q_0,A_0,h_l) \xrightarrow{w[j_l]} (Q_1,A_1,h_l) \xrightarrow{w[j_l{+}1]} \ldots \xrightarrow{w[j_l{+}M{-}1]} (Q_M,A_M,h_l)\]
  be the sequence of the run between $(Q_0,A_0,h_l) = (Q',B_l,h_l)$ and $(Q_M,A_M,h_l) = (Q',B_{l+1},h_l)$.
  Take arbitrary $q' \in Q' \setminus B_{l+1}$ and let $P$ be the source states of the incoming $w[j_l{+}M{-}1]$-transitions of $q'$ wrt. $\delta \setminus T_0^i$.
  It follows that $P \cap A_{M{-}1}$ is empty and for no $p \in Q_{M{-}1}$ we have $(p,w[j_l{+}M{-}1],q') \in T_{h_l}^i$, as otherwise we would either have $q' \in B_{l+1}$ or see a break-point transition.
  In particular, there is no path from any $p \in Q_{M{-}1}$ to $q'$ labeled by $w[j_l{+}M{-}1]$ that sees $T_{h_l}^i$.
  We can continue this argument inductively until reaching $(Q_0,A_0,h_l)$ and conclude that there is no $q \in Q'$ such that there exists a $\theta_i$-path from $q$ to $q'$ labeled by $w[j_l..j_{l+1}{-}1]$ which sees some transition in $T_{h_l}^i$.
  This is in contradiction with the second property above.

  Hence the run sees a transition in $\delta_i^{\operatorname{break}}$ infinitely often, which implies that it is accepting.
\end{proof}

Recall that $\removeFin(\splitTELAi{\mathcal{A}}{i})$ is a GBA with two components: the initial component $Q_1$ is a copy of $\mathcal{A}$ where no transition is accepting, and component $Q_2$ is a copy of $\mathcal{A}$ without the transitions of $T_0^i$. The acceptance condition is the GBA condition $\alpha_i = \bigwedge_{1 \leq j \leq k_i} \infset(U_{j}^i)$ where $U_j^i$ is the set of transitions corresponding to $T_j^i \setminus T_0^i$ in the component $Q_2$.

Now let $\mathcal{D}_i = \determin(\removeFin(\splitTELAi{\mathcal{A}}{i}))$ be the Rabin automaton one gets by applying the construction of~\cite{ScheweV2012} to $\removeFin(\splitTELAi{\mathcal{A}}{i})$.
Their soundness-proof\cite[Thm. 1]{ScheweV2012} can be modified slightly to prove the following statement.
Let $\gamma$ be the extended subset transition function of $\removeFin(\splitTELAi{\mathcal{A}}{i})$.
\begin{lemma}
  \label{lem:connectiontodet}
  Let $d_0 d_1 \ldots$ be an accepting run of $\mathcal{D}_i$ for $w$ satisfying Rabin pair $\infset(D_1) \land \finset(D_2)$ and $i_0$ be a position such that for all $l \geq i_0$ we have $(d_l,w[l],d_{l{+}1}) \notin D_2$ and for infinitely many $l \geq i_0$ we have $d_{l} = d_{i_0}$.

  Then, there exists a sequence $i_0 < i_1 < i_2 \ldots$ and a $Q' \subseteq Q_2$ such that
  \begin{itemize}
  \item $Q' \subseteq \gamma(I,w[0..i_0{-}1])$,
  \item for all $l \geq 0$: $Q' = \gamma(Q',w[i_l..i_{l+1}{-}1])$ and for all $q_1 \in Q'$ there exists a $q_2 \in Q'$ and a finite $\gamma$-path for $w[i_l..i_{l+1}{-}1]$ hitting all transition-sets $U_1^i, \ldots, U_{k_i}^i$.
  \end{itemize}
\end{lemma}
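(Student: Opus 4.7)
My plan is to unpack the internal bookkeeping of the Schewe--Varghese determinization applied to $\removeFin(\splitTELAi{\mathcal{A}}{i})$ and read $Q'$ off the tree-structured state $d_{i_0}$ of $\mathcal{D}_i$. States of $\mathcal{D}_i$ are labelled trees whose nodes carry subsets of $Q_2$ and a round-robin counter ranging over the $k_i$ accepting sets; each Rabin pair of $\mathcal{D}_i$ is attached to a designated node $v$, with transitions in $D_2$ being precisely those that delete $v$, and transitions in $D_1$ being the breakpoints at $v$. Such a breakpoint fires only once the current label of $v$ has, along $\gamma$-paths, collectively witnessed the accepting set currently pointed to by the counter; the counter is then advanced modulo $k_i$. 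I set $Q' \subseteq Q_2$ to be the label of $v$ in state $d_{i_0}$.

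Once $Q'$ is fixed, the easy consequences fall out from standard invariants of the construction. Because $(d_l, w[l], d_{l+1}) \notin D_2$ for all $l \geq i_0$, node $v$ is never removed, so its label evolves by the subset-transition function $\gamma$ restricted to $Q_2$. The invariant that labels of tree-nodes are contained in the subset image of $I$ under $\gamma$ yields $Q' \subseteq \gamma(I, w[0..i_0{-}1])$. To define the sequence $i_0 < i_1 < i_2 < \ldots$, I would enumerate the positions $\ell \geq i_0$ with $d_\ell = d_{i_0}$ (an infinite set by hypothesis) and thin this set so that between any two consecutive $i_l$'s at least $k_i$ transitions in $D_1$ occur; this is possible because $\rho \models \infset(D_1)$. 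Since the state of $\mathcal{D}_i$ determines the entire tree, the label of $v$ equals $Q'$ at every $i_l$, which together with the $\gamma$-evolution forces $Q' = \gamma(Q', w[i_l..i_{l+1}{-}1])$.

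The main obstacle is the path-covering statement: for every $q_1 \in Q'$ I must exhibit $q_2 \in Q'$ and a $\gamma$-path from $q_2$ to $q_1$ labelled by $w[i_l..i_{l+1}{-}1]$ that visits each $U_j^i$. This is the soundness of the Schewe--Varghese breakpoint mechanism specialised to our setting: a single breakpoint ensures that every state in the new label of $v$ is the $\gamma$-image of some state in the label at the previous breakpoint along a path passing through the currently indicated accepting set. Chaining the $k_i$ consecutive breakpoints guaranteed in $(i_l, i_{l+1})$ and concatenating predecessor-paths produces, for each $q_1 \in Q'$, a $\gamma$-path hitting every $U_1^i, \ldots, U_{k_i}^i$ that ends in $q_1$ and starts from some ancestor in the label of $v$ at position $i_l$. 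The delicate point is that this ancestor must be pulled back into $Q'$ itself rather than merely into the label of $v$ at some intermediate breakpoint; here the invariance $\gamma(Q', w[i_l..i_{l+1}{-}1]) = Q'$ combined with a pigeonhole argument on the finite orbit of $Q'$ under iterated $\gamma$ provides the required chain of predecessors. Formalising this careful back-tracing through the evolving tree labels is where the bulk of the work lies.
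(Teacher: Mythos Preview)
Your overall strategy matches the paper's: both read $Q'$ off the Schewe--Varghese tree at $d_{i_0}$, and both choose the $i_l$ among returns to $d_{i_0}$ spaced so that at least $k_i$ transitions in $D_1$ lie in each segment. The paper is simply terser---it quotes, as a black box extracted from the soundness argument of \cite[Thm.~1]{ScheweV2012}, the existence of sets $P_l$ with $P_{l+1}=\gamma(P_l,w[j_l..j_{l+1}{-}1])$, the path-covering property, and the invariant $d_{j_l}=d_{j_p}\Rightarrow P_l=P_p$; the recurrence hypothesis $d_{j_l}=d_{i_0}$ then collapses all $P_l$ to a single $Q'$, and $Q'\subseteq Q_2$ is argued separately from the fact that $Q'$ lies downstream of an accepting transition.

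Where your reconstruction is not quite right is the assertion that once $v$ is stable ``its label evolves by the subset-transition function $\gamma$''. In Safra-style history trees the horizontal-merge step deletes from $v$'s label any state that also lands in an older sibling's $\gamma$-image, so in general one only gets $\mathrm{label}_{j+1}(v)\subseteq\gamma(\mathrm{label}_j(v),w[j])$. Combined with $d_{i_l}=d_{i_0}$ this yields $Q'\subseteq\gamma(Q',w[i_l..i_{l+1}-1])$ but not equality, and equality is exactly what \Cref{lem:suffcondbreak} requires downstream. A concrete instance: let $v$ have an older sibling $u$ with labels $\{b\}$ and $\{a\}$, a single-letter segment with $\gamma(\{a\})=\{a\}$ and $\gamma(\{b\})=\{a,b\}$, and $b\to b$ the only accepting edge; the tree returns to itself and $v$ flashes green, yet $\gamma(\{b\})\neq\{b\}$. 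So the raw label of $v$ is not the correct $Q'$ here (in this toy example the working choice is $\{a,b\}$ with longer segments). The paper sidesteps this by invoking \cite{ScheweV2012}; if you want to unpack it, you should look more carefully at which set that proof actually tracks rather than assuming it is the node label of $v$.
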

\begin{proof}
  The following statement follows directly from the proof of\cite[Thm. 1]{ScheweV2012}.
  Let $j_0 < j_1 < \ldots$ be an infinite sequence of positions such that: for $l \geq j_0$ we have $(d_l,w[l],d_{l{+}1}) \notin D_2$ and the sequence $d_{j_l}w[j_l]d_{j_l{+}1} \ldots d_{j_{l{+}1}}$ sees $k_i$ transitions in $D_1$ for all $l\geq 0$.
  Then there exists a sequence $P_0, P_1 \ldots$, with all $P_l \subseteq Q_1 \cup Q_2$, such that $P_0 \subseteq \gamma(I,w[0..j_0{-}1])$, $P_{l{+}1} = \gamma(P_l,w[j_l..j_{l{+}1}{-}1])$ and for all $q_1 \in P_{l{+}1}$ there exists a $\gamma$-path for $w[j_l..j_{l{+}1}{-}1]$ hitting all transition sets $U_1^i,\ldots, U_{k_i}^i$.
  Furthermore, the sequence is such that if $d_{j_l} = d_{j_p}$, then $P_l = P_p$.

  Clearly a sequence $j_0 < j_1 < \ldots$ exists satisfying the above assumption (as $\rho$ is accepting) and which additionally has the property that $d_{j_l} = d_{j_0}$ for all $l \geq 0$ (by taking $j_0 = i_0$ and the fact that infinitely many positions see $d_{i_0}$).
  We may conclude that the sequence $P_0,P_1 \ldots$ exists as above and as $d_{j_l} = d_{j_0}$ for all $l \geq 0$, indeed $P_0 = P_l$ for all $l$.
  So we may take $Q' = P_0$, and the only thing left to show is that $Q' \subseteq Q_2$ holds.
  This, however, follows from the fact that $Q'$ is reachable from some accepting transition.
\end{proof}
We are now in a position to prove~\Cref{lem:goodimpliesacc}.
Recall that $\mathcal{D} = \bigotimes_{1 \leq i \leq m} \mathcal{D}_i$ and $\mathfrak{S}$ is a finite memory scheduler on $\M$.
\Goodimpliesacc*
\begin{proof}
  There exists some Rabin pair $\infset(D'_1) \land \finset(D'_2)$ of $\cal D$ such that no transition of $B$ is contained in $D_2'$, and some transition of $B$ is contained in $D_1'$.
  By construction of $\mathcal{D}$, this pair directly corresponds to a Rabin pair $\infset(D_1) \land \finset(D_2)$ of one of the components $\mathcal{D}_i$ (as the acceptance condition of $\mathcal{D}$ is essentially the disjunction of the acceptance conditions of the $\mathcal{D}_i$).
  
  Hence for every path $\pi_2$ through $B$ starting in $\mathfrak{s}$ which sees all transitions in $B$ infinitely often we find an accepting run $d_0 d_1 \ldots $ of $\mathcal{D}_i$ for $L(\pi_1 \pi_2)$ and a sequence $j_0 < j_1 < \ldots$ where $j_0 = |\pi_1|$ and such that:
  \begin{itemize}
  \item no transition of $d_{j_0} d_{j_0{+}1} \ldots$ is included in $D_2$,
  \item for all $l \geq 0$: $d_{j_l} = d_{j_0}$ and the run $d_0 d_1 \ldots$ sees an accepting transition in between position $j_l$ and $j_{l+1}$ for all $l \geq 0$.
  \end{itemize}
  By~\Cref{lem:connectiontodet} there exists a $Q' \subseteq \gamma(I,L(\pi_1)) \cap Q_2$ such that for all $l \geq 0$: $Q' = \gamma(Q',w[j_l..j_{l+1}{-}1])$ and for all $q_1 \in Q'$ there exists a $q_2 \in Q'$ and a finite $\gamma$-path for $w[j_l..j_{l+1}{-}1]$ hitting all transition-sets $U_1^i, \ldots, U_{k_i}^i$.
  Let $Q''$ be the corresponding set of states of $\mathcal{A}$, which implies $Q'' \subseteq \theta(I,L(\pi_1))$.
  As $Q' \subseteq Q_2$ we have: for all $q_1 \in Q''$ there exists a $q_2 \in Q''$ and a finite $\theta_i$-path for $w[j_l..j_{l+1}{-}1]$ hitting all transition-sets $T_1^i, \ldots, T_{k_i}^i$.

  It follows that the conditions of~\Cref{lem:suffcondbreak} are satisfied and hence that $L(\pi_2)$ is accepted from $(Q'',\varnothing,0)$ in $\breakp_i$.
\end{proof}

\nontrivinclusionGFM*
\begin{proof}
  Let $\mathfrak{S}$ be a finite-memory scheduler on $\M$ and $\M_{\mathfrak{S}} \times \mathcal{D}$ be as above.
  We construct the scheduler $\mathfrak{S}'$ as follows.
  For every finite path $\pi_1$ of $\M \times \gfm_{\mathcal{A}}$ it needs to choose an action of $\M$ and whether to move from the initial component to one of the breakpoint components (this is the only nondeterministic choice in $\gfm_{\mathcal{A}}$).
  The action of $\M$ is always chosen in the same way as by $\mathfrak{S}$ for the corresponding finite path of $\M$.
  If the path in $\M_{\mathfrak{S}} \times \mathcal{D}$ corresponding to $\pi_1$ does not end in a BSCC, $\mathfrak{S}'$ chooses to remain in the initial component.
  If it does reach a BSCC $B$ we make a distinction on whether $B$ is accepting or not.
  If $B$ is not accepting, then $\mathfrak{S}'$ can be defined arbitrarily for all prefixes that extend $\pi_1$.
  If $B$ is accepting, then let us assume that $\last(\pi_1) = (s, P)$ is the current state of $\M \times \gfm_{\mathcal{A}}$.
  Then $P$ is exactly the set of states reachable in $\mathcal{A}$ from $I$ on a path labeled by $L^{\times}(\pi_1)$.
  By~\Cref{lem:goodimpliesacc} there exists $1 \leq i \leq k$ and $Q' \subseteq P$ such that the probability of generating a suffix $\pi_2$ from state $s$ in $\M$ under scheduler $\mathfrak{S}$ whose label is accepted in $\breakp_i$ from state $(Q',\varnothing,0)$ is $1$.
  Consequently, $\mathfrak{S'}$ chooses $\bigl(\theta(Q',L(s)),\varnothing,0\bigr)_i$ as successor state of the automaton and continues to simulate $\mathfrak{S}$.
  As the probability of generating a path $\pi$ such that $L(\pi) \in \lang(\mathcal{A})$ in $\M_{\mathfrak{S}}$ is equivalent to the probability of reaching an accepting BSCC in $\M_{\mathfrak{S}} \times \mathcal{D}$, we can conclude that $\Pr_{\M \times \gfm_{\mathcal{A}}}^{\mathfrak{S}'}(\Pi_{acc}) \geq \Pr_{\M}^{\mathfrak{S}}(\lang(\mathcal{A}))$.
\end{proof}

\GFMproperty*
\begin{proof}
  To show that $\gfm_{\mathcal{A}}$ is GFM, it suffices to show the following two statements:
  \begin{enumerate}
  \item For every finite memory scheduler $\mathfrak{S}$ of $\M$ there exists a scheduler $\mathfrak{S}'$ of $\M \times \gfm_{\mathcal{A}}$ such that
    \[\Pr^{\mathfrak{S}}_{\M}(\lang(\mathcal{A})) \leq \Pr^{\mathfrak{S}'}_{\M \times \gfm_{\mathcal{A}}}(\Pi_{acc})\]
  \item For every finite memory scheduler $\mathfrak{S}$ of $\M \times \gfm_{\mathcal{A}}$ there exists a scheduler $\mathfrak{S}'$ of $\M$ such that
    \[\Pr^{\mathfrak{S}}_{\M \times \gfm_{\mathcal{A}}}(\Pi_{acc}) \leq \Pr^{\mathfrak{S}'}_{\M}(\lang(\mathcal{A}))\]
  \end{enumerate}
  1. is proved by~\Cref{lem:nontrivinclusionGFM} and 2. is proved for arbitrary automaton in\cite[Thm. 1]{KleinMBK2014}.
\end{proof}

\FloatBarrier

\section{Additional evaluation of experiments}

\begin{figure}[tbp]
	\centering
	\begin{subfigure}{0.5\textwidth}
			\resizebox{\textwidth}{!}{
			\includegraphics{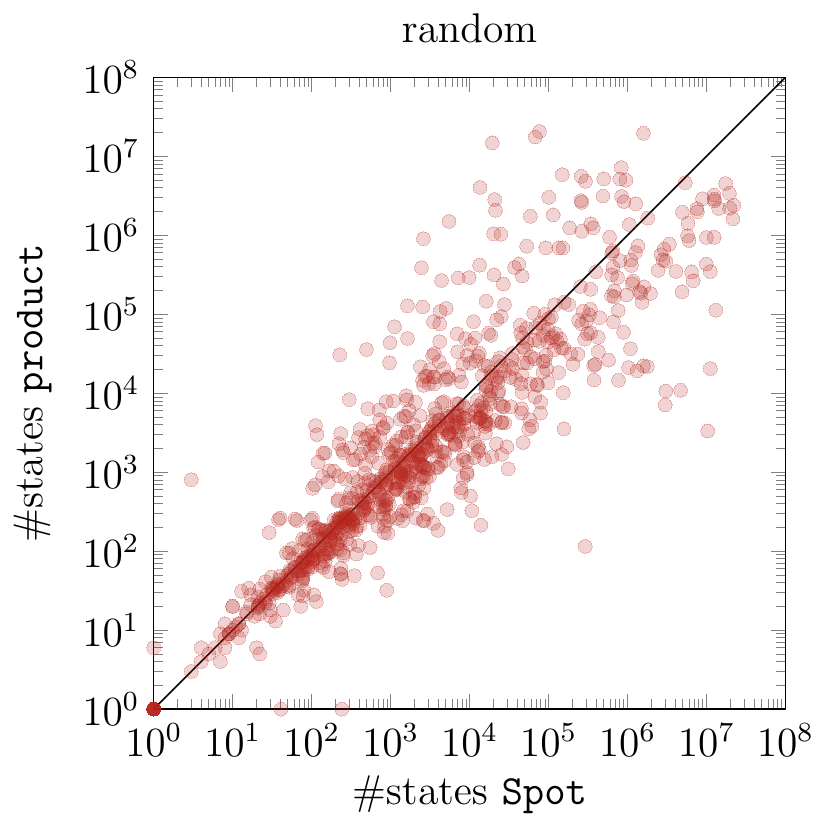}}
		\end{subfigure}%
		\begin{subfigure}{0.49\textwidth}
			\resizebox{\textwidth}{!}{
				\includegraphics{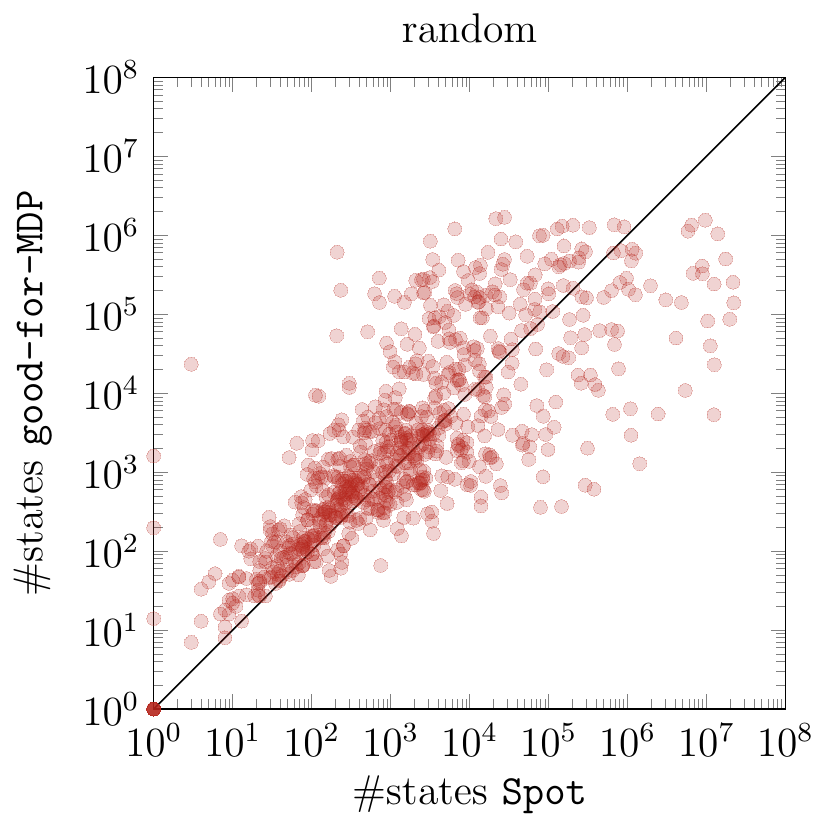}}
		\end{subfigure}
		\begin{subfigure}{0.5\textwidth}
			\resizebox{\textwidth}{!}{
				\includegraphics{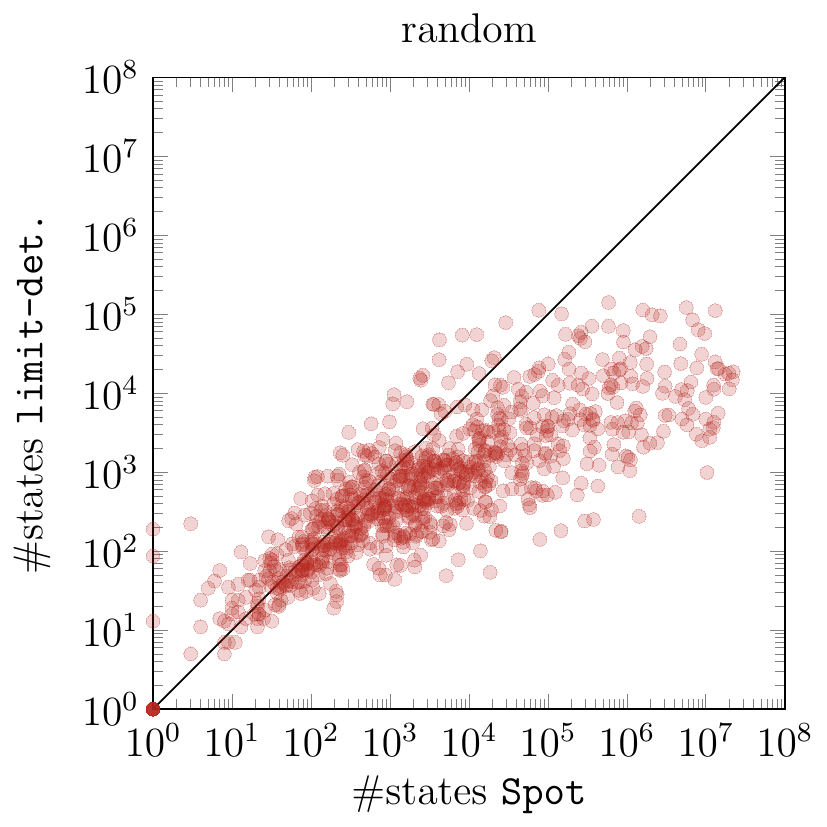}}
		\end{subfigure}%
		\begin{subfigure}{0.49\textwidth}
			\resizebox{\textwidth}{!}{
				\includegraphics{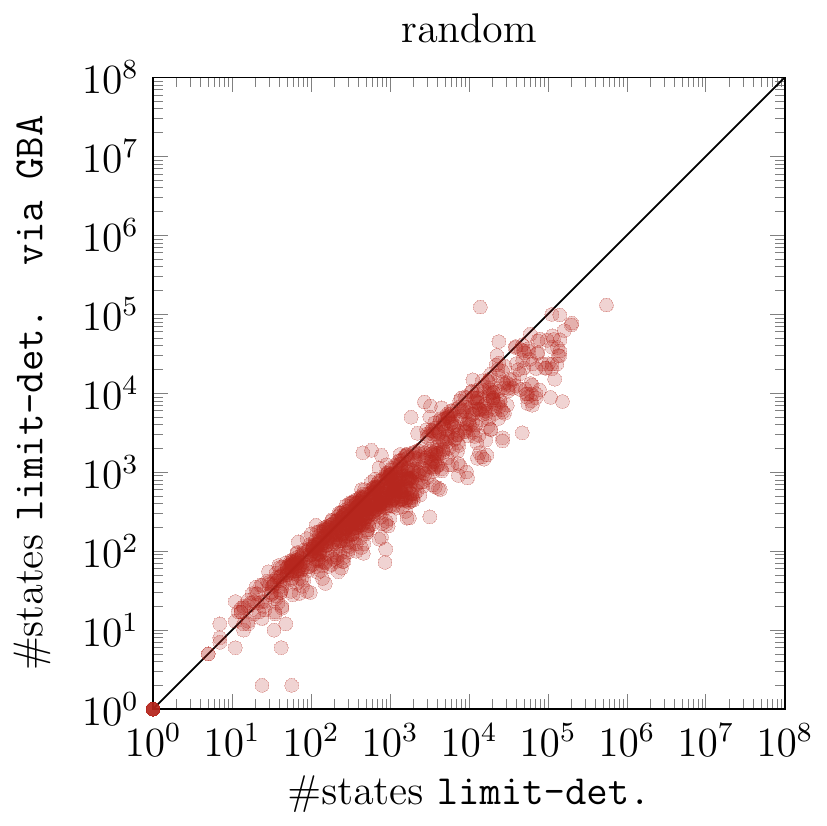}}
		\end{subfigure}
		\begin{subfigure}{0.5\textwidth}
			\resizebox{\textwidth}{!}{
				\includegraphics{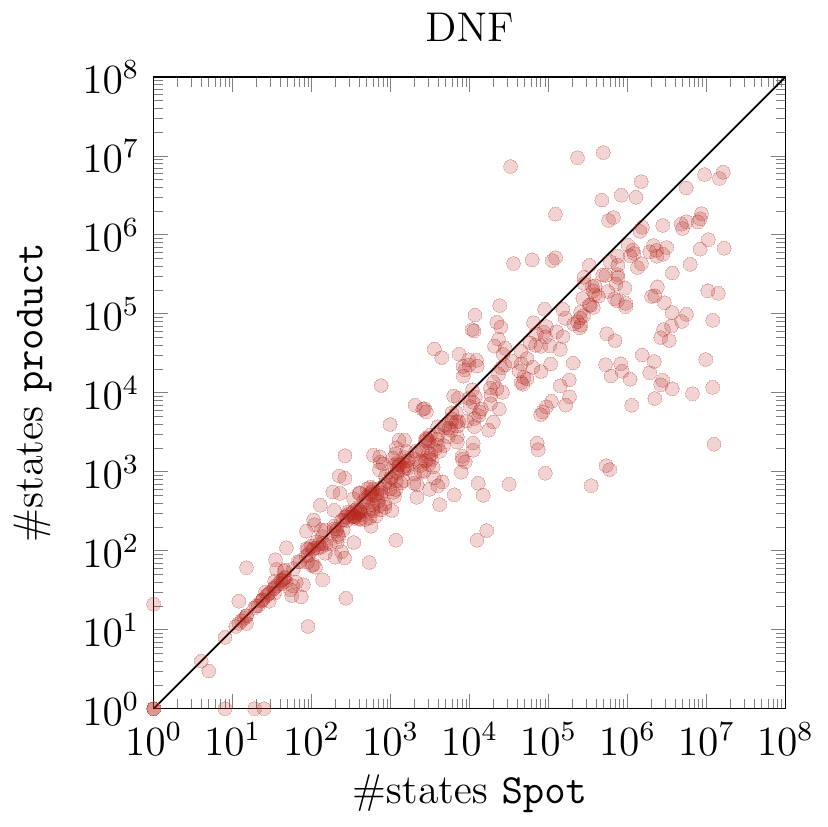}}
		\end{subfigure}
	\caption{Comparing pairwise the sizes of the state spaces for different approaches on the \emph{random} benchmark and \emph{DNF} benchmark.}
	\label{fig:EvaluationPairwise}
\end{figure}
\Cref{fig:EvaluationPairwise} compares the approaches pairwise on the \emph{random} benchmark and on the \emph{DNF} benchmark. 
On the \emph{random} benchmark, the comparisons of \texttt{product} and \texttt{good-for-MDP} with \texttt{\spot} show that the results are widely spread, i.e. there is no clear correlation. 
Both approaches produce in many cases smaller automata than \texttt{\spot}.
Most automata produced by \texttt{limit-det.} are smaller than the results produced by \texttt{\spot}. 
Such behavior is expected as the complexity is only single-exponential instead of double-exponential.
The approach \texttt{limit-det. via GBA} produces in almost all cases smaller results than the approach \texttt{limit-det.}
On the \emph{DNF} benchmark \texttt{product} produces in most cases smaller results than \texttt{\spot}. 

\begin{table}[tbp]
	\caption{Comparison of \texttt{\spot} and \texttt{product}, with input automata grouped by the size of the DNF of their acceptance condition and the amount of nondeterminism. ``states'', ``time'' and ``acceptance'' refer to the median of the ratio \texttt{product} / \texttt{\spot{}}. The number of automata that is denoted in brackets is the number of input automata for which both approaches were able to construct a result within the given time and memory bounds.}
	\label{tab:BenchmarksGroupedNondeterminism}
	\begin{center}
		\resizebox{\linewidth}{!}{
			\setlength{\tabcolsep}{4pt}
			\begin{tabular}{l l l l l l l l l l l l }
				\multirow{2}{*}{benchmark} & amount of & input & \#automata & \multicolumn{2}{c}{timeouts}& \multicolumn{2}{c}{memouts} & \multirow{2}{*}{states}  & \multirow{2}{*}{time} & \multirow{2}{*}{acceptance}  \\ 
				& nondet. & acceptance & (no time-/memouts) & \text{\spot{}} & \texttt{product} &\texttt{\spot{}} & \texttt{product} &   \\
				\hline
				\multirow{6}{*}{random}& \multirow{2}{*}{$\leq  0.66$} & $2 \leq |\alpha| \leq 11$ & 175 (175) & 0.0\% & 0.0\% & 0.0\% & 0.0\% & 0.91 & 1.37 & 1.80 \\
				& & $12 \leq \length{\alpha} \leq 21$ & 180 (180) & 0.0\% & 0.0\% & 0.0\% & 0.0\% & 0.82 & 1.44 & 1.64 \\ \cline{2-11}
				& \multirow{2}{*}{$> 0.66, \leq 1.33$} & $2 \leq |\alpha| \leq 11$ & 162 (159) & 0.0\% & 0.0\% & 0.0\%  & 1.9\% & 0.85 & 1.19 & 1.80\\
				& & $12 \leq \length{\alpha} \leq 21$ & 159 (144) & 0.0\% & 0.6\% & 3.1\% & 7.5\% & 0.75 & 1.26 & 1.82 \\  \cline{2-11}
				& \multirow{2}{*}{$ > 1.33$} & $2 \leq |\alpha| \leq 11$ & 163 (128) & 1.8\% & 0.0\% & 16.0\% & 11.7\% & 0.75 &  0.84 & 1.72\\
				& & $12 \leq \length{\alpha} \leq 21$ & 161 (82) & 1.2\% & 7.5\% & 42.2\% & 28.0\% & 0.43 & 0.80 & 1.67\\
				\hline \hline
				\multirow{3}{*}{DNF} & $\leq  0.66$ & $2 \leq |\alpha| \leq 21$ & 193 (193) & 0.0\% & 0.0\% & 0.0\% & 0.0\% & 0.81 & 1.17 & 1.14\\
				& $> 0.66, \leq 1.33$ & $2 \leq |\alpha| \leq 21$ & 167 (160) & 0.0\% & 0.0\% & 3.0\% & 2.4\% & 0.68 & 0.93 & 1.19\\
				& $ > 1.33$ & $2 \leq |\alpha| \leq 21$ & 140 (111) & 1.4\% & 0.0\% & 18.6\% & 10.7\% & 0.26 & 0.25 & 1.00
		\end{tabular}}
	\end{center}
\end{table}
\Cref{tab:BenchmarksGroupedNondeterminism} compares the approaches \texttt{\spot} and \texttt{product} in more detail. The input automata are grouped by the amount of nondeterminism and the length of the input acceptance condition, which is in DNF. These groups are the same as in \Cref{fig:SpotVsProduct}. The values ``states'', ``time'' and ``acceptance'' are the median value of the ratio of these metrics, which are the values depicted in \Cref{fig:SpotVsProduct}. 
If only one of the approaches was able to construct a result within the given time and memory bounds we define the ratio as 0 (\texttt{\spot} did not construct a result) or infinity (\texttt{product} did not construct a result). If none of the approaches was able to construct a result, we do not consider this input automaton for the calculation.
The number of timeouts and memouts increases with the amount of nondeterminism and the size of the input acceptance condition.
\texttt{product} produces for all subsets automata with fewer states than \texttt{\spot{}} and the ratio of the computation time decreases the larger the amount of nondeterminism is.
On the \emph{DNF} benchmark, \texttt{product} produces significantly smaller automata while the acceptance condition is only slightly larger.  

\begin{figure}[tbp]
	\centering
		\includegraphics{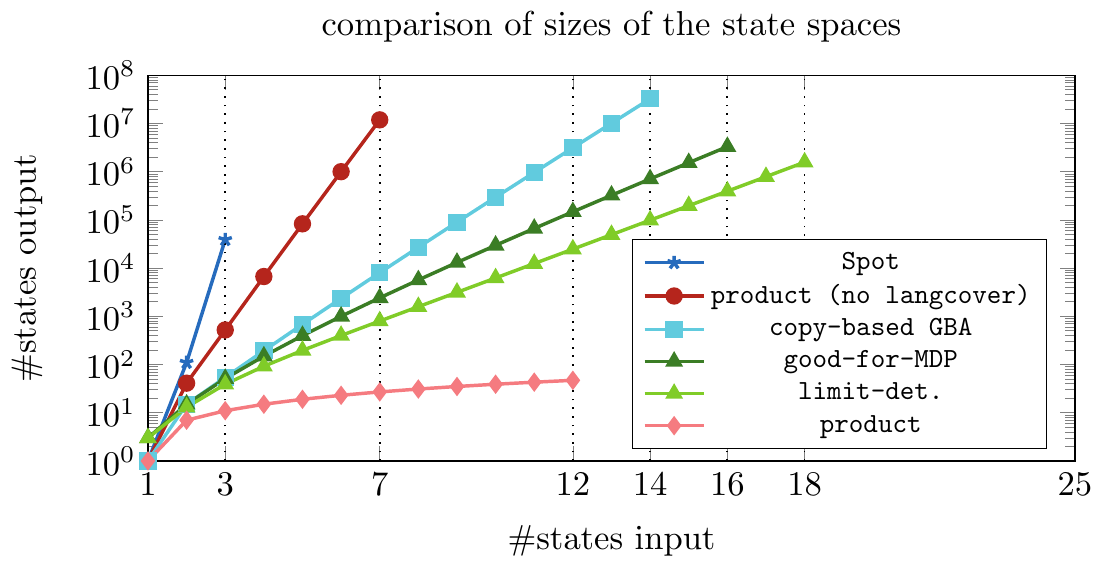}%
	\caption{Comparing the approaches for input automata of~\Cref{fig:ExpCNF}. The copy-based GBA approaches produce the same results and are collapsed here.}
	\label{fig:EvaluationExpCNF}
\end{figure}
\Cref{fig:EvaluationExpCNF} compares the approaches on the sequence of automata described in~\Cref{fig:ExpCNF}.
For these automata, \spot{} first computes GBA with an exponential number of acceptance sets.
The example also highlights the effect of the langcover heuristic: it benefits from the fact that the language of the input automaton under any individual disjunct of the DNF is the same.
This is used to prune large parts of the state-space.

\end{document}